\newtheorem{thm}{Theorem}[section]
\newtheorem{propo}[thm]{Proposition}
\newtheorem{lemme}[thm]{Lemma}
\newtheorem{defi}[thm]{Definition}
\newtheorem{remarque}[thm]{Remark}
\newtheorem{Hyp}{Assumption}
\def\R{\mathbb R}
\def\N{\mathbb N}
\def\C{\mathbb C}
\def\E{\mathbb E}
\def\shf{{\cal F}}
\def\shh{{\cal H}}
\def\shl{{\cal L}}
\author{{\sc St\'ephane GOUTTE}\footnote{Universit\'e Paris 13,
    Math\'ematiques LAGA, Institut Galil\'ee, 99 Av. J.B. Cl\'ement 93430
    Villetaneuse. E-mail:{\tt
      goutte@math.jussieu.fr}}\thanks{Luiss Guido Carli - Libera
    Universit\`a Internazionale degli Studi Sociali Guido Carli di Roma}
  {\sc,}\ {\sc Nadia OUDJANE $^*$}\thanks{EDF R\&D,
    Universit\'e Paris 13, FiME (Laboratoire de Finance des
 March\'es de l'Energie (Dauphine, CREST, EDF R\&D) www.fime-lab.org). E-mail:{\tt  
nadia.oudjane@edf.fr} } 
\ {\sc and}\ {\sc Francesco RUSSO }  
\footnote{ENSTA ParisTech, UMA, 
 Unit\'e de Math\'ematiques appliqu\'ees,
32 Bd. Victor,
F-75739 Paris Cedex 15 (France)
}
\thanks{INRIA Rocquencourt 
and Cermics Ecole des Ponts, Projet MATHFI.
  E-mail:{\tt  francesco.russo@ensta-paristech.fr}}
}
\date{October  18th 2011}
\title{\bf Variance Optimal Hedging for discrete time 
processes with independent increments.\\Application to Electricity Markets}
\newcommand{\MBFigure}[6]{
$\left. \right.$ \\
\refstepcounter{figure}
\addcontentsline{lof}{figure}{\numberline{\thefigure}{\ignorespaces #5}}
\begin{center}
\begin{minipage}{#1cm}
\centerline{\includegraphics[width=#2cm,angle=#3]{#4}}
\begin{center}
\upshape{F\textsc{ig} \normal
\end{center}
size{\thefigure}. $-$} #5
\end{center}
\label{#6}
\end{minipage}
\end{center}
$\left. \right.$ \\}
\begin{document}

\maketitle

\begin{abstract}
We consider the discretized version of a (continuous-time) two-factor model introduced by Benth and coauthors
for the electricity markets.
For this model, the underlying is the exponent of a sum of independent
random variables.
We provide and test an algorithm, which is based on the 
celebrated F\"ollmer-Schweizer decomposition for solving the mean-variance
hedging problem. 
In particular, we establish that decomposition explicitly, 
for a large class of vanilla contingent claims.
Interest is devoted in the choice of rebalancing dates and its impact on the
hedging error, regarding the payoff regularity and the non stationarity of the log-price process.
\end{abstract}

\bigskip\noindent {\em Key words:}  Variance-optimal hedging,
 F\"ollmer-Schweizer decomposition, L\'evy process, 
Cumulative generating function, Characteristic function, 
Normal Inverse Gaussian distribution, Electricity markets,
Incomplete Markets, Processes with independent increments,
trading dates optimization.

\bigskip\noindent  {\bf 2010  AMS-classification}:
60G50,  60G51, 91G10, 60J05, 62M99

\medskip
 \bigskip\noindent   {\bf JEL-classification}: C02, C15, G11, G12, G13

\newpage



\section{Introduction}\label{SectionIntro}

\setcounter{equation}{0}

It is well known that the classical Black-Scholes model does not allow
 in real applications to replicate perfectly contingent claims. Of course, this is due to market incompleteness and  specifically two major reasons~: 
the non-Gaussianity of prices log-returns and the finite number of trading dates. 
The impact of these features have been intensively studied separately in the literature. 
%

There is a large literature on pricing and hedging with non Gaussian models (allowing for stochastic volatility or jumps), in a continuous time setup. 
Then, the hedging error related to the discretization of the hedging strategy is in general ignored or investigated separately. 
One popular approach 
 is the Variance-Optimal hedging. 
Let  $S^c$ denotes the underlying price process where 
the superscript $c$ refers to the \textit{c}ontinuous time setting);
 if $H$ denotes the payoff of the
 option,
the goal is  to minimize the mean squared hedging error
$$
\E[(V_T-H)^2] \quad \textrm{with}\quad V_T=c+\int_0^T  v_t dS^c_t \ .
$$
\noindent
over all initial endowments $c\in \R$ and all (in some sense) 
admissible strategies $v$. 
The first paper specifically on this subject is due to Duffie and
Richardson, see \cite{duffierich}. Among significant early contributions
there are \cite{S94, S95, S01, RS97, GLP98}, a fairly complete recent 
article on the structure of mean-variance hedging,
with a rich bibliography is provided by \cite{CK08}.
One of the now classical tools is the so called F\"ollmer-Schweizer
decomposition. Given a square integrable r.v. $H$   and an
$(\shf_t)$-semimartingale $S = (S_t)_{t \ge 0}$, that 
decomposition consists in finding
a triple $(H_0, \xi, L)$ where $H_0$ is $\shf_0$-measurable,
$\xi$ is $(\shf_t)$-predictable and $L$ is a martingale being
orthogonal to the martingale part $M$ of $S$
such that $H = H_0 + \int_0^T \xi_s dS_s + L_T$.
In the recent years, some attention was focused on finding 
explicit or quasi explicit formulae for the F\"ollmer-Schweizer
decomposition or the optimal strategy for the mean-variance hedging
problem. For instance \cite{nunno}  gave
an expression based on Clark-Ocone type decompositions 
related to L\'evy type measures when the underlying is  a L\'evy
martingale, \cite{ContTankov} still in the martingale case
with techniques of partial integro differential equations.
\cite{Ka06} obtained significant explicit decompositions
when the underlying is the exponential of a L\'evy process
and the contingent claim is a vanilla type 
option appearing as some generalized Laplace transform
of a finite complex measure. 
Other significant  semi-explicit formulae appear in \cite{KP09b, KP09}.
\cite{Ka06} was continued by \cite{GOR}
in the framework of processes with independent increments with some 
applications to the electricity market. 

However, in practice,  the hedging strategy cannot be implemented 
continuously and the resulting optimal strategy has to be  discretized.
 Hence, to be really relevant the hedging error should take into account 
this further approximation. 

\noindent
An alternative approach, less investigated in the literature, is to
consider directly the hedging problem in discrete time as proposed by
Cox Ross
 and Rubinstein~\cite{CRR79}. The first incomplete market analysis
in the spirit of minimizing a quadratic risk is due to \cite{FS89}.
They worked with the so-called local risk-minimization. 
The problem of  Variance-Optimal hedging 
in the discrete time setup 
was proposed in~\cite{Schal94,S95bis}. In the recent years some 
interest on discrete time was rediscovered in 
 \cite{bertsimas, cerny04, KMS09}. \cite{CK09} revisits the seminal paper
\cite{FS89} in the spirit of global risk minimization.
In the discrete-time context, a significant role was played by the
analogous  of the previously mentioned FS-decomposition. It is recalled 
in Definition \ref{def28}.

\noindent
Recently, many approaches have been proposed to obtain explicit or quasi-explicit formulae for computing both the variance optimal trading strategies and hedging errors in discrete time.
For instance, in~\cite{AH10}, Angelini and Herzel 
 derive closed formulae for the variance optimal hedge ratio and the corresponding hedging error variance when the underlying asset is a geometric Brownian motion which is martingale.
As we said, Kallsen and co-authors contributed at providing 
semi-explicit formulae for the Variance-Optimal hedging problem both in
 discrete and continuous time, for various kind of models.
 In particular in~\cite{Ka06}, semi-explicit formula are derived for the
 (discrete and continuous time) Variance-Optimal hedging strategy and
 for the  resulting hedging error, in the specific case where the
 logarithm of the underlying price is a process with 
stationary independent increments. 
One major idea proposed in~\cite{Ka06} 
and~\cite{cerny07} consists in expressing 
the payoff as a linear combination of exponential payoffs for which 
the variance optimal hedging strategy can be expressed explicitly.  
With a similar methodology and in the
same setting, Angelini and Herzel~\cite{AH09} determine the Laplace transform of the
variance of the error produced by a standard delta hedging strategy  when applied to several class of models. In~\cite{DGK09} similar results are provided in the continuous time setup. 
In this paper, we use the generalized Laplace transform approach to
extend   the results of~\cite{Ka06} to the case of processes with
 independent increments (PII) relaxing the stationary assumption 
on  log-returns. 
The semi-explicit discrete F\"ollmer-Schweizer decomposition is stated in 
Proposition \ref{lemme24}, the solution  to the mean-variance hedging
problem in Theorem \ref{ThmSolutionDicret}.
The expression of the quadratic hedging error in Theorem
\ref{thmErrorHedgingDiscret} gives a priori a criterion
of market completeness as far as vanilla options are concerned.
This confirms that the (even not stationary) binomial model
is complete, see Proposition \ref{PBinomial}.

Our discrete time model consists in fact in the discretization
of continuous time models which are exponentials of processes
of independent increments. 
Given a continuous-time model $(S_t^c)_{t \ge 0}$, 
where  $S^c_t = s_0 \exp(X^c_t)$ and $X^c$ is a process with independent
increments and
 discrete trading dates $t_0, t_1,\cdots, t_N$,
our discrete model will be
$S=(S_k)$, such that $S_k=S^c_{t_k}$, for all $k=0,1,\cdots N$.
In this discrete time setting, the Variance-Optimal pricing and hedging problem consists in looking 
for the initial endowments $c\in \R$ and the admissible strategy 
$v=(v_k)$ which minimizes 
$$
\E[(V^N_T-H)^2] \quad \textrm{with}\quad V^N_T=c+\sum_{k=1}^{N}  v_k \Delta S_k \ .
$$
\noindent
This framework is indeed well suited to take into account together
 both the non-Gaussianity of log-returns and hedging errors due to
 the discreteness of trading times. 
Our investigation
for quasi-explicit formulae when the underlying is the
exponential of sums of independent random variables
is due to two reasons.
\begin{enumerate}
\item The first one comes from the fact that the basic continuous
time model can be time-inhomogeneous in a natural way, see for instance
\cite{GOR}.
\item The second, more original reason,  is that the discretized
  times, which correspond in our case to the rebalancing dates, 
are not necessarily  uniformly chosen.
\end{enumerate}
%
About item 1., some prices exhibit non stationary and non-Gaussian log-returns. One common example of this phenomenon can be observed on electricity
 futures or forward market: the forward volatility increases when the time to delivery decreases whereas the tails of log-returns distribution get heavier resulting in huge spikes on the Spot. 
The exponential L\'evy factor model, proposed in~\cite{BS03} and~\cite{oudjaneCollet} allows to represent both the volatility term structure and the spikes on the short term.  More precisely, the forward price given at time $t$ for delivery of 1MWh at time $T_d\geq t$, denoted $F_t^{T_d}$ is then modeled by a two factors model, such that 
\begin{equation}
\label{eq:elecSimple}
S_t^c:= F_t^{T_d}=F_0^{T_d}\exp(m_t^{T_d}+\int_0^t \sigma_S 
e^{-\lambda ({T_d}-u)}
      d\Lambda_u+ \sigma_L W_t)\ ,\quad \textrm{for all}\ t\in [0,{T_d}]\ , 
\end{equation}
where $m$ is a real deterministic trend, $\Lambda$ a real L\'evy process and $W$ a real Brownian motion. 
Hence,  forward prices are modeled as exponentials of PII with \textit{non-stationary increments} and existing results from~\cite{Ka06} valid for stationary independent processes cannot be applied for that kind of models. 

\noindent
Concerning item 2., the  announced motivation for our development is to be able
 to analyze the impact of a non-homogeneous discretization of the trading
 dates on the Variance-Optimal hedging error. 
The issue of considering non-homogeneous trading dates was first considered by Geiss~\cite{Gei02} who analyzed the impact on the hedging error of discretizing a continuously rebalanced hedging portfolio. He showed that for a given irregular payoff (e.g. a digital call), concentrating rebalancing dates near the maturity instead of rebalancing regularly can improve the convergence rate of the hedging error. 
Later, Geiss and Geiss \cite{GeiGei04} introduced  the
 so called {\it fractional smoothness} quantifying the impact
of the payoff irregularity on  the optimal  discretization grid.
 The reader can consult \cite{GeiGo10} for a nice survey on this subject 
and \cite{GM09} for some recent developments.

Hence, it seems to be of real interest to be able to consider such non-homogeneous grids. 
However, if the continuous time log-price model $X^c=\log(S^c) - \log(s_0) 
 $ has independent and stationary increments, considering non-homogeneous
trading dates involves a non stationary discrete time process $X$ such
that $X_k=X^c_{t_{k}}$ for $k=0,\cdots N$, where  $t_0,t_1,\cdots , t_N$
denote the non-homogeneous trading dates. Hence, here again existing
results from~\cite{Ka06} cannot be  applied neither
for hedging at non-homogeneous times nor for evaluating the resulting 
hedging error. 

In the present work,  
we have performed some numerical tests concerning both applications. One major observation is the remarkable robustness of the Black-Scholes strategy that still achieves quasi-minimal hedging errors variances, with  both non Gaussian log-returns and discrete rebalancing dates. 
Besides, our tests show that when hedging with electricity forward contracts, the impact of the choice of the rebalancing dates on the hedging error seems to be more important than the choice of log-returns distribution (Gaussian or Normal Inverse Gaussian, in our case).  
Concerning the case of hedging an irregular payoff (a digital call, in our case), our numerical tests confirm the result of~\cite{Gei02}. In \textit{almost Gaussian cases}, we observe that the variance optimal hedging error, can be noticeably reduced by optimizing the rebalancing dates. However,  this phenomena is less pronounced when the tails of the log-returns distribution get heavier for which the hedging error gets less sensitive to the rebalancing grid. This suggests
 that the result of~\cite{Gei02} and~\cite{GM09} could not be extended straightforwardly to the non Gaussian case. 

%

This article is organized as follows. In Section 2, notations and generalities on the discrete F\"ollmer-Schweizer decomposition are presented. 
In Section 3,  we derive semi-explicit F\"ollmer-Schweizer  decomposition
 for exponential of PII. 
Section 4 is devoted to the solution to the global minimization problem.
Illustrative example and simulation results are given in Section~5;
in particular,  subsection~\ref{electricity} is  concerned 
with data coming from the electricity market.

\section{Generalities and Discrete F\"ollmer-Schweizer
 decomposition}\label{SectionGenera}

We present the context of the problem studied by \cite{S95bis}.

Let $(\Omega,\mathcal{F},P)$ be a probability space, $N \in \mathbb{N}^{*}$ a fixed natural number and $\mathbb{F}=(\mathcal{F}_k)_{k=0,\cdots ,N}$ a 
fixed reference filtration. We shall assume that $\mathcal{F}=\mathcal{F}_N$.
Let $(S_k)_{k=0,\cdots ,N}$ be a real-valued, $\mathbb{F}$-adapted, square-integrable process. We denote by $\Delta S_k$ the increments $S_k-S_{k-1}$, for $k=1,\cdots, N$. We use the convention that a sum (respectively product) over an empty set is zero (resp. one). 
\begin{defi}\label{def21}
We denote by $\Theta$ the set of all predictable processes $v$ (i.e.:
 $v_k$ is $\mathcal{F}_{k-1}$-measurable for each $k \ge 1$) such that
 $v_k\Delta S_k \in \shl^2(\Omega)$ for $k=1,\cdots ,N$. For $v \in \Theta$, $G(v)$ is the process defined by
$$ 
G_k(v):=\sum_{j=1}^k v_j\Delta S_j, \quad {\rm for} \quad \ k=1,\cdots ,N.
$$
\end{defi}
The problem addressed in \cite{S95bis} 
is the following.\\
Given $H\in \mathcal{L}^2(\Omega)$, we look for $(V_0^*,\varphi^*)$
  which minimize the quantity 
\begin{eqnarray}\label{OptPrbDis}\mathbb{E}\left[\left(H-V_0-G_T(\varphi)
\right)^2\right]\ ,
\end{eqnarray}
over  $V_0\in \mathbb{R}$ and  $\varphi \in \Theta$. It will be called \textbf{discrete time optimization problem}. The
expression
$\mathbb{E}\left[\left(H-V_0^*-G_T(\varphi^*)\right)^2\right]$ will be 
 called the \textbf{variance optimal hedging error}.
\begin{defi}\label{def22}
Schweizer \cite{S95bis} introduces the following 
\textbf{non-degeneracy condition (ND)}.
We say that S satisfies the non-degeneracy condition (ND) if there exists a constant $\delta \in ]0,1[$ such that
\begin{eqnarray*}\left(\mathbb{E}[\Delta S_k|\mathcal{F}_{k-1}]\right)^2\leq \delta \mathbb{E}[(\Delta S_k)^2|\mathcal{F}_{k-1}]\ ,\end{eqnarray*}
P.a.s for $k=1,\cdots ,N$.
\end{defi}
\begin{remarque}\label{rem23}
\begin{enumerate}
\item If $(S_k)$ is a
martingale then (ND) is always verified.
\item Note that by Jensen's inequality, we always have
$\left(\mathbb{E}[\Delta
    S_k|\mathcal{F}_{k-1}]\right)^2\leq 
\mathbb{E}[(\Delta S_k)^2|\mathcal{F}_{k-1}] \quad {\rm a.s.}
$
The point of condition (ND) is to ensure a strict inequality
 uniformly in $\omega$.
\end{enumerate}
\end{remarque}
To obtain another formulation of (ND), we now express $S$ in its Doob
decomposition as $S_k=M_k+A_k$
 where $M_k$ is a
square-integrable 
martingale and $A_k$ is a square-integrable predictable
process with $A_0=0$. It is well-known that this decomposition is unique
and is given through
$$
\Delta A_k:=\mathbb{E}[\Delta S_k|\mathcal{F}_{k-1}]\ ,\quad\textrm{and}\quad 
\Delta M_k:=\Delta S_k-\Delta A_k\ .
$$
We will operate with the help of some conditional moments and 
conditional variance setting
$$
Var[\Delta S_k|\mathcal{F}_{k-1}]:=\mathbb{E}[(\Delta S_k)^2|\mathcal{F}_{k-1}]-\mathbb{E}[\Delta S_k|\mathcal{F}_{k-1}]^2\ .
$$
\begin{remarque}\label{remarqueRD}
For $k = 1, \ldots, N$, we have the following.
\begin{enumerate}
\item $\mathbb{E}[(\Delta S_k)^2|\mathcal{F}_{k-1}]=\mathbb{E}[(\Delta M_k)^2|\mathcal{F}_{k-1}]+(\Delta A_k)^2\ ;$
\item $Var[\Delta S_k|\mathcal{F}_{k-1}]=\mathbb{E}[(\Delta M_k)^2|\mathcal{F}_{k-1}]\ ;$
\item Previous conditional variance vanishes if and only if
 $\Delta M_k=0\ $ a.s.
\end{enumerate}
\end{remarque}
We introduce the predictable process $\lambda_k$ by
\begin{equation} 
\label{lambdak}
\lambda_k:=\frac{\Delta A_k}{\mathbb{E}[(\Delta S_k)^2|\mathcal{F}_{k-1}]}=\frac{\mathbb{E}[\Delta S_k|\mathcal{F}_{k-1}]}{\mathbb{E}[(\Delta S_k)^2|\mathcal{F}_{k-1}]}\ ,
\end{equation}
for all $k=1,\cdots ,N$. These quantities could be theoretically infinite.
\begin{remarque}\label{remarque53bis}
Suppose that $P( \Delta S_k=0)=0$ for any $k=1,\cdots ,N$.
\begin{enumerate}
\item Then $\mathbb{E}[(\Delta S_k)^2|\mathcal{F}_{k-1}]>0$ a.s.  
In fact, let $B=\{\omega | \mathbb{E}[(\Delta
S_k)^2(\omega)|\mathcal{F}_{k-1}]
=0\}$. This implies $\Delta A_k=0$ on $B$ because of Remark
\ref{remarqueRD} 1.  By the same Remark,
$$
  0=1_B\mathbb{E}[(\Delta M_k)^2|\mathcal{F}_{k-1}]=
\mathbb{E}[1_B(\Delta M_k)^2|\mathcal{F}_{k-1}]\ ,
$$ 
so $\Delta M_k=0$ a.s. on $B$.  
This implies that $\Delta S_k=0$ a.s. on $B$. By assumption, $B$ is 
forced to be a null set.
\item Previous point 
1. guarantees in particular that $(\lambda_k)$ are all finite.
\end{enumerate}
\end{remarque}
\begin{defi}\label{MVTDis}
The \textbf{mean-variance tradeoff process} of $S$ is defined by
$$
K^d_j:=\sum_{l=1}^{j}\frac{\mathbb{E}[\Delta S_l|\mathcal{F}_{l-1}]^2}{Var[\Delta S_l|\mathcal{F}_{l-1}]}\ ,
$$
for all $j=1,\cdots ,N$. $K^d$ is the discrete version  of
the continuous time corresponding process $K$ defined
for instance in Definition 2.11 of~\cite{GOR} or in Section 1.
 of \cite{S94}.
\end{defi}
\begin{propo}\label{S16}
The condition (ND) is fulfilled if and only if
\begin{eqnarray*}\frac{\mathbb{E}[\Delta S_k|\mathcal{F}_{k-1}]^2}{Var[\Delta S_k|\mathcal{F}_{k-1}]}\end{eqnarray*} 
is a.s. bounded uniformly in $\omega$ and k.
\end{propo}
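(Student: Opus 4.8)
The plan is to reduce condition (ND) to an elementary algebraic inequality by means of the identity recalled in Remark~\ref{remarqueRD}(1)--(2), namely
$$
\mathbb{E}[(\Delta S_k)^2|\mathcal{F}_{k-1}] \;=\; Var[\Delta S_k|\mathcal{F}_{k-1}] + \mathbb{E}[\Delta S_k|\mathcal{F}_{k-1}]^2\ .
$$
Writing $a_k := \mathbb{E}[\Delta S_k|\mathcal{F}_{k-1}]^2$ and $b_k := \mathbb{E}[(\Delta S_k)^2|\mathcal{F}_{k-1}]$, this says $Var[\Delta S_k|\mathcal{F}_{k-1}] = b_k - a_k$, a nonnegative $\mathcal{F}_{k-1}$-measurable quantity by Jensen (Remark~\ref{rem23}(2)), and condition (ND) becomes the requirement that $a_k \le \delta\, b_k$ a.s. for some $\delta\in\,]0,1[$ uniform in $k$.

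First I would prove the ``only if'' implication. Assuming (ND) with constant $\delta\in\,]0,1[$, we have $b_k \ge a_k/\delta$ a.s., hence $b_k - a_k \ge a_k\,(1-\delta)/\delta$ a.s.; on the event $\{b_k > a_k\}$ this yields
$$
\frac{\mathbb{E}[\Delta S_k|\mathcal{F}_{k-1}]^2}{Var[\Delta S_k|\mathcal{F}_{k-1}]} \;=\; \frac{a_k}{b_k - a_k} \;\le\; \frac{\delta}{1-\delta}\ ,
$$
uniformly in $k$ and $\omega$. On the complementary event $\{b_k = a_k\}$ the inequality $a_k \le \delta a_k$ with $\delta < 1$ forces $a_k = 0$, so numerator and denominator of the ratio both vanish and, with the natural convention, this event contributes nothing to the supremum. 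This degenerate set (by Remark~\ref{remarqueRD}(3) it is exactly $\{\Delta M_k = 0\}$, i.e.\ $\Delta S_k$ is $\mathcal{F}_{k-1}$-measurable there) is the only point requiring a word of care, and it is genuinely trivial, so I do not expect any real obstacle.

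Conversely, suppose that $a_k/(b_k - a_k)$ is bounded by some constant $C \ge 0$ a.s.\ uniformly in $k$ and $\omega$ (boundedness, understood with the convention that a positive number divided by $0$ is $+\infty$, automatically rules out the degenerate possibility $b_k = a_k$ with $a_k>0$ on a set of positive measure). Then $a_k \le C(b_k - a_k)$ a.s., i.e.\ $(1+C)a_k \le C\,b_k$, so $a_k \le \delta\, b_k$ with $\delta := C/(1+C)\in\,]0,1[$ when $C>0$ (and with an arbitrary $\delta\in\,]0,1[$ when $C=0$, since then $a_k = 0$); this is precisely (ND). Combining the two implications gives the statement. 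The whole argument is just a rearrangement of the identity from Remark~\ref{remarqueRD}, and no step presents a substantial difficulty; the mild subtlety about the set where the conditional variance vanishes is the only thing to keep an eye on.
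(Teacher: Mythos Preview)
Your argument is correct. The paper itself does not prove this proposition but simply refers to (1.6) in Schweizer~\cite{S95bis}; you have supplied the elementary self-contained computation that lies behind that citation, namely the algebraic equivalence between $a_k\le\delta b_k$ for some $\delta\in\,]0,1[$ and the uniform boundedness of $a_k/(b_k-a_k)$, together with the observation that the degenerate set $\{b_k=a_k\}$ forces $a_k=0$ under either hypothesis. There is nothing to add: your treatment of the $0/0$ case is the only point requiring any care, and you handle it correctly.
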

\begin{proof}
See (1.6) in \cite{S95bis}.
\end{proof}
A basic tool for solving the optimization problem (\ref{OptPrbDis}) in \cite{S95bis} is the discrete F\"ollmer-Schweizer decomposition.
\begin{defi}\label{def28}
Denote by $S=M+A$ the Doob decomposition of $ S$ into a martingale $M$ and a
predictable process $A$.
A complex-valued square integrable random variable $H$ is said to admit
a \textbf{discrete F\"ollmer-Schweizer decomposition} (or simply discrete
FS-decomposition) 
if there exists a  $\mathcal{F}_0$-measurable $H_0$, a complex-valued process
$\xi$ such that both    
$Re \xi(z), Im \xi(z)$ belong to $ \Theta$, and a square integrable
$\C$-valued  martingale $L^H$ such that 
\begin{enumerate}
\item  $L^H M$ is a 
martingale;
\item $E(L_0^H) = 0$, 
\item  $H=H_0+\sum_{k=1}^{N}\xi_k\Delta S_k
+L_N^H$. 
\end{enumerate}
 When Point 1. is fulfilled $L^H$ and $M$ are called {\bf strongly
   orthogonal}.\\
If $H$ is a real valued r.v. then $H$ admits a {\bf real discrete FS
decomposition} if it admits a FS decomposition with $H_0 \in \R$
and $\xi$ being a real valued process. In this case $\xi \in \Theta$.

\end{defi}

\subsection{Existence and structure of an optimal strategy}\label{sectionExictence}

\begin{Hyp}\label{HypND}
$(S_k)_{k=1,\cdots ,N}$ satisfies the non-degeneracy condition (ND).
\end{Hyp}
\begin{remarque} \label{R29}
\begin{enumerate}
\item Under Assumption~\ref{HypND}, Proposition 2.6 of \cite{S95bis} 
guarantees that every square integrable real random variable $H$ admits a
real discrete FS-decomposition.
\item That decomposition is unique because of Remark 4.11 of \cite{Schal94}.
\item The previous two points imply the existence and uniqueness of
the  discrete F\"ollmer-Schweizer decomposition when $H$ is 
a complex square integrable random variable.
\item An immediate consequence is that the decomposition of
a real  square integrable random variable is necessarily real.
\end{enumerate}
\end{remarque}
Other tools for solving the optimization problem and evaluating the
error
 are the following.
\begin{propo}\label{propo210} 
If $S$ satisfies (ND), then $G_N(\Theta)$ is closed in $\shl^2(P)$.
\end{propo}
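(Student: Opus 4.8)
The statement to prove is Proposition \ref{propo210}: if $S$ satisfies (ND), then $G_N(\Theta)$ is closed in $\shl^2(P)$.

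\bigskip

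The plan is to proceed by induction on the number of trading dates $N$, exploiting the recursive structure of the discrete gains process $G_N(v)=\sum_{k=1}^N v_k\Delta S_k$. The key point is that (ND), via Proposition \ref{S16}, bounds the ratio $\mathbb{E}[\Delta S_k|\mathcal{F}_{k-1}]^2/Var[\Delta S_k|\mathcal{F}_{k-1}]$ uniformly, which translates into a coercivity-type estimate: for a single step, $\mathbb{E}[(v_k\Delta S_k)^2|\mathcal{F}_{k-1}] \geq (1-\delta)\,\mathbb{E}[v_k^2 (\Delta S_k)^2|\mathcal{F}_{k-1}]$ is not quite right, but one does get that the $\mathcal{F}_{k-1}$-conditional $L^2$-norm of $v_k\Delta S_k$ controls $v_k^2\,Var[\Delta S_k|\mathcal{F}_{k-1}]$ up to the constant $1-\delta$. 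This is the device that prevents a minimizing sequence of strategies from blowing up while the gains stay bounded.

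\bigskip

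First I would treat the one-step case $N=1$: given a sequence $v_1^{(n)}\Delta S_1 \to Y$ in $\shl^2$, I want to show $Y = v_1\Delta S_1$ for some $v_1\in\Theta$. Decompose $\Delta S_1 = \Delta M_1 + \Delta A_1$ with $\Delta A_1 = \lambda_1\,\mathbb{E}[(\Delta S_1)^2|\mathcal{F}_0]$; using (ND) one shows the map $v_1\mapsto v_1\Delta S_1$ is bounded below in the appropriate norm, so the sequence $(v_1^{(n)})$ is Cauchy in the weighted space $L^2(\mathbb{E}[(\Delta S_1)^2|\mathcal{F}_0]\,dP)$ (on the set where this weight is positive; on its complement $\Delta S_1=0$ by Remark \ref{remarque53bis}-type reasoning, so the contribution is irrelevant). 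Passing to the limit gives the claim. Then, for the inductive step, write $G_N(v) = G_{N-1}(v) + v_N\Delta S_N$, where $G_{N-1}(v)$ is $\mathcal{F}_{N-1}$-measurable and $v_N$ is $\mathcal{F}_{N-1}$-measurable. Given $G_N(v^{(n)})\to Y$ in $\shl^2$, I would condition: project onto $\mathcal{F}_{N-1}$ and use the (ND)-estimate on the last increment to deduce that both $G_{N-1}(v^{(n)})$ and $v_N^{(n)}\Delta S_N$ are separately Cauchy (or at least bounded, then extract weakly convergent subsequences), apply the induction hypothesis to the first piece and the one-step argument to the second, and recombine.

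\bigskip

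The main obstacle is the gluing argument in the inductive step: from convergence of the sum $G_{N-1}(v^{(n)}) + v_N^{(n)}\Delta S_N$ one must extract convergence (or boundedness) of each summand, and these are not orthogonal in $\shl^2(P)$ in general since $S$ is only a semimartingale, not a martingale. The resolution is to pass to conditional expectations given $\mathcal{F}_{N-1}$: $\mathbb{E}[(G_N(v^{(n)}) - G_N(v^{(m)}))^2|\mathcal{F}_{N-1}]$ expands into the conditional $L^2$-norm of the last-step difference plus a cross term plus the square of the $\mathcal{F}_{N-1}$-measurable difference $G_{N-1}(v^{(n)})-G_{N-1}(v^{(m)})$; the (ND) bound on $\lambda_k$ lets one absorb the cross term and conclude that the last-step differences $\big(v_N^{(n)}-v_N^{(m)}\big)\Delta S_N$ are conditionally (hence, after taking expectations, globally) Cauchy, which then forces $G_{N-1}(v^{(n)})$ to be Cauchy as well. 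Once both pieces converge, the induction hypothesis closes the argument. For the precise constants and the bookkeeping on the null sets where conditional variances vanish, one can refer to the corresponding continuous-time argument in \cite{S94} and the discrete treatment in \cite{S95bis, Schal94}; here I only sketch the structure.
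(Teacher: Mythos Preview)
The paper does not give a proof of this proposition at all: it simply cites \cite{S95bis}, Theorem~2.1. Your proposal goes further and sketches the actual inductive argument behind that result, and the structure you outline is essentially the one Schweizer uses. In that sense there is nothing to compare---you are reconstructing the cited proof rather than offering an alternative.

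Your sketch is correct in its essentials. The key estimate in the inductive step can be made precise as follows: writing $a=G_{N-1}(v^{(n)})-G_{N-1}(v^{(m)})$ and $b=(v_N^{(n)}-v_N^{(m)})\Delta S_N$, the Doob decomposition gives
\[
\mathbb{E}\big[(a+b)^2\,\big|\,\mathcal{F}_{N-1}\big]
=\big(a+(v_N^{(n)}-v_N^{(m)})\Delta A_N\big)^2+(v_N^{(n)}-v_N^{(m)})^2\,\mathbb{E}[(\Delta M_N)^2|\mathcal{F}_{N-1}],
\]
and (ND) together with Remark~\ref{remarqueRD} yields $\mathbb{E}[(\Delta M_N)^2|\mathcal{F}_{N-1}]\ge(1-\delta)\,\mathbb{E}[(\Delta S_N)^2|\mathcal{F}_{N-1}]$, so that $\mathbb{E}[(a+b)^2]\ge(1-\delta)\,\mathbb{E}[b^2]$. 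This is exactly the coercivity you describe, and it forces both pieces to be Cauchy as you claim.

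One minor remark: your base case $N=1$ does not actually require (ND). Since $\mathbb{E}[(v_1\Delta S_1)^2]=\mathbb{E}\big[v_1^2\,\mathbb{E}[(\Delta S_1)^2|\mathcal{F}_0]\big]$ is an \emph{equality}, convergence of $v_1^{(n)}\Delta S_1$ in $\shl^2$ is equivalent to convergence of $v_1^{(n)}$ in the weighted space, with no constant $\delta$ involved. The (ND) condition is used only in the inductive step, to decouple the last increment from the accumulated gains $G_{N-1}$.
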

\begin{proof}
See \cite{S95bis}, Theorem 2.1.
\end{proof}
\begin{thm}\label{T1}
Suppose that $S=M+A$ has a deterministic mean-variance tradeoff
process. Let $H$ be a square integrable real random variable
with discrete real FS- decomposition given by $H=H_0+G_N(\xi^H)+L^H_N$.
\begin{enumerate}
\item The optimization problem (\ref{OptPrbDis}) is solved by 
$(V_0^*,\varphi^*)$ where $V_0^*=H_0$ and $\varphi^*$ is determined by
\begin{eqnarray*}\varphi^*_k=\xi^H_k+\lambda_k(H_{k-1}-H_0-G_{k-1}(\varphi^*)).
\end{eqnarray*} 
\item Suppose that $\shf_0$ is a trivial $\sigma$-field.
The hedging error is given by
\begin{eqnarray*}J_0=
\sum_{k=1}^N\mathbb{E}[(\Delta L_k^H)^2]\prod_{j=k+1}^N
(1-\lambda_j\Delta A_j).\end{eqnarray*} 
\end{enumerate}
\end{thm}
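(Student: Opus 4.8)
The plan is to solve the optimization problem~(\ref{OptPrbDis}) by backward dynamic programming, exploiting the recursive structure supplied by the discrete FS-decomposition. Introduce the associated value process $H_k := H_0 + G_k(\xi^H) + L^H_k$, so that $H_N = H$ and $H_0$ is the given $\mathcal{F}_0$-measurable endowment; as in part~2 we take $\mathcal{F}_0$ trivial, so $H_0 \in \R$. For a generic candidate $(V_0,\varphi) \in \R \times \Theta$ set $D_k := H_k - V_0 - G_k(\varphi)$; then $D_0 = H_0 - V_0$, $D_N = H - V_0 - G_N(\varphi)$ and $\Delta D_k = (\xi^H_k - \varphi_k)\Delta S_k + \Delta L^H_k$, and the goal is to minimize $\mathbb{E}[D_N^2]$.

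The first step is the one-step identity. Conditioning on $\mathcal{F}_{k-1}$ and using that $L^H$ is a martingale (so $\mathbb{E}[\Delta L^H_k \mid \mathcal{F}_{k-1}] = 0$) together with the strong orthogonality of $L^H$ and $M$ --- expanding $L^H_k M_k - L^H_{k-1}M_{k-1}$ and using that $L^H$ and $M$ are martingales gives $\mathbb{E}[\Delta M_k \Delta L^H_k \mid \mathcal{F}_{k-1}] = 0$, hence also $\mathbb{E}[\Delta S_k \Delta L^H_k \mid \mathcal{F}_{k-1}] = 0$ --- one obtains
$$
\mathbb{E}[D_k^2 \mid \mathcal{F}_{k-1}] = D_{k-1}^2 + 2 D_{k-1}(\xi^H_k - \varphi_k)\Delta A_k + (\xi^H_k - \varphi_k)^2\,\mathbb{E}[(\Delta S_k)^2 \mid \mathcal{F}_{k-1}] + \mathbb{E}[(\Delta L^H_k)^2 \mid \mathcal{F}_{k-1}].
$$
As $\varphi_k$ is $\mathcal{F}_{k-1}$-measurable, the right-hand side is a quadratic in $\varphi_k$ which is minimized pointwise at $\varphi^*_k = \xi^H_k + \lambda_k D_{k-1}$; using the identity $\lambda_k^2\,\mathbb{E}[(\Delta S_k)^2\mid\mathcal{F}_{k-1}] = \lambda_k \Delta A_k$, the minimal value equals $(1 - \lambda_k \Delta A_k)\,D_{k-1}^2 + \mathbb{E}[(\Delta L^H_k)^2 \mid \mathcal{F}_{k-1}]$.

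Here the deterministic mean-variance tradeoff hypothesis is crucial. By Remark~\ref{remarqueRD}, $\mathbb{E}[(\Delta S_k)^2\mid\mathcal{F}_{k-1}] = Var[\Delta S_k\mid\mathcal{F}_{k-1}] + (\Delta A_k)^2$, so, writing $\beta_k := \Delta K^d_k = (\Delta A_k)^2/Var[\Delta S_k\mid\mathcal{F}_{k-1}]$, one gets $\lambda_k \Delta A_k = (\Delta A_k)^2/\mathbb{E}[(\Delta S_k)^2\mid\mathcal{F}_{k-1}] = \beta_k/(1+\beta_k)$; under the standing assumption this is a deterministic constant, and it lies in $(0,1)$ by~(ND) (Proposition~\ref{S16}). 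Therefore $1 - \lambda_k\Delta A_k$ factors out of the expectation, giving $\min_{\varphi_k}\mathbb{E}[D_k^2] = (1 - \lambda_k\Delta A_k)\,\mathbb{E}[D_{k-1}^2] + \mathbb{E}[(\Delta L^H_k)^2]$; since $D_{k-1}$ depends on $\varphi_1,\dots,\varphi_{k-1}$ but not on $\varphi_k,\dots,\varphi_N$, minimizing successively over $\varphi_N, \varphi_{N-1}, \dots, \varphi_1$ --- each step being the same one-step minimization with running index in place of $N$ --- yields the optimal controls $\varphi^*_k = \xi^H_k + \lambda_k(H_{k-1} - H_0 - G_{k-1}(\varphi^*))$. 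Finally $V_0^* = H_0$ minimizes the leftover term $\mathbb{E}[D_0^2]\prod_{j=1}^N(1-\lambda_j\Delta A_j)$ over $V_0 \in \R$ (since $\mathcal{F}_0$ is trivial), so $D_0 \equiv 0$, and telescoping $\min\mathbb{E}[D_k^2] = (1-\lambda_k\Delta A_k)\min\mathbb{E}[D_{k-1}^2] + \mathbb{E}[(\Delta L^H_k)^2]$ from $k=N$ down to $k=1$ gives $J_0 = \sum_{k=1}^N \mathbb{E}[(\Delta L^H_k)^2]\prod_{j=k+1}^N(1-\lambda_j\Delta A_j)$.

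Two routine points close the argument. Admissibility $\varphi^* \in \Theta$ follows by induction on $k$: $\xi^H_k\Delta S_k \in \shl^2$ by the FS-decomposition, $\mathbb{E}[(\lambda_k D^*_{k-1}\Delta S_k)^2] = \mathbb{E}\big[(D^*_{k-1})^2\,\lambda_k^2\,\mathbb{E}[(\Delta S_k)^2\mid\mathcal{F}_{k-1}]\big] \le \mathbb{E}[(D^*_{k-1})^2]$ (the conditional expectation of $(\lambda_k\Delta S_k)^2$ being at most $1$ by Cauchy--Schwarz), and $D^*_{k-1} = H_{k-1} - H_0 - G_{k-1}(\varphi^*) \in \shl^2$ by the inductive hypothesis; and the pointwise conditional minimizations deliver the global minimum because each $\varphi_k$ ranges over all $\mathcal{F}_{k-1}$-measurable choices. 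The main obstacle is organizing this backward induction so that the dependence of $D_{k-1}$ on the earlier controls does not interfere with the optimization over the later ones --- and it is exactly the \emph{deterministic} nature of the mean-variance tradeoff that decouples the successive one-step problems and produces the closed forms above.
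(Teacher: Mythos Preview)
Your argument is correct. You carry out the backward dynamic programming directly: the one-step conditional identity, the pointwise minimization yielding $\varphi^*_k=\xi^H_k+\lambda_kD_{k-1}$, and the observation that the deterministic mean-variance tradeoff makes $1-\lambda_k\Delta A_k$ a constant so that the recursion $\min\mathbb{E}[D_k^2]=(1-\lambda_k\Delta A_k)\mathbb{E}[D_{k-1}^2]+\mathbb{E}[(\Delta L^H_k)^2]$ telescopes cleanly. Your check that $\varphi^*\in\Theta$ via $\lambda_k^2\,\mathbb{E}[(\Delta S_k)^2\mid\shf_{k-1}]=\lambda_k\Delta A_k\le\delta<1$ is the right bound. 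One harmless slip: $\lambda_k\Delta A_k$ lies in $[0,\delta]$, not $(0,1)$ (it vanishes in the martingale case), but only nonnegativity and the strict upper bound matter here.

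By contrast, the paper does not prove this theorem at all: it simply invokes Proposition~4.3 and Theorem~4.4 of Schweizer~\cite{S95bis}. Your write-up is therefore a self-contained reconstruction of that cited argument rather than a different route. The only point where you restrict more than the paper's statement is in assuming $\shf_0$ trivial already for Part~1; strictly speaking Part~1 is formulated without that hypothesis, but since the optimization~(\ref{OptPrbDis}) takes $V_0\in\R$ while $H_0$ is only $\shf_0$-measurable, the identification $V_0^*=H_0$ is really only meaningful under that assumption anyway, and it holds throughout the paper's applications.
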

\begin{proof} \
Point 1. follows from Proposition 4.3 of
\cite{S95bis}. Concerning  Point 2., $L_0^H = 0$ a.s.
since $\shf_0$ is  trivial. 
The result follows 
 from Theorem 4.4 of \cite{S95bis};
\end{proof}
 Similarly to \cite{Ka06}, we will calculate it explicitely
 in the case where $S$ is the exponential of process with independent
 increments.

\section{Exponential of PII processes}\label{sectionPII}

From now on, we will suppose 
that $(X_n)_{n=0,\cdots ,N}$ is a sequence of random
variables  with {\bf independent increments},
 i.e. $(X_1-X_0,\cdots ,X_N-X_{N-1})$ are
 independent random variables.
From now on, without restriction of generality, it will not be restrictive 
to suppose $X_0=0$.
We  also define the process $(S_n)_{n=0,\cdots ,N}$ as
 $S_n=s_0 \exp(X_n)$, $0\leq n\leq N$ for some $s_0>0$.

\begin{defi} \label{D31}
We denote $D=\left\{z\in \C \vert \exp(zX_N) \in \shl^1
\right \}$.
\end{defi} 
\subsection{Discrete cumulant generating function}
\begin{defi} \label{D32}
 We define the \textbf{discrete cumulant generating function} as\\
 $m:D\times \{0,\cdots ,N\}\rightarrow \mathbb{C}$ with 
$m(z,n)=\mathbb{E}[e^{z\Delta X_n}]$ for all $n=1,\cdots ,N$ and by convention $m(z,0)\equiv 1$.
\end{defi} 
 This function is a discrete version of the cumulant generating
 function  investigated  in \cite{GOR}.
\begin{remarque} \label{R32}
\begin{enumerate}
\item If $z \in D$ then the property of independent increments implies
  that $m(z,n)=\mathbb{E}[\exp(z \Delta X_n)]$ is well-defined
for all $ z \in D$ and   $n=0,1,\cdots ,N$.
\item If $\gamma \in \mathbb{R}^+\cap D$, Cauchy-Schwarz inequality
  implies
that $[0,\gamma]+i\mathbb{R}\subset D$; 
if $\gamma \in \mathbb{R}^-\cap D$ then $[\gamma,0]+i\mathbb{R}\subset D$.
This shows in particular that $D$ is convex.
\end{enumerate}
\end{remarque}
\begin{remarque}\label{remarque56}
When X has stationary increments then we have $m(z,n)=m(z,1)$ for all
 $n=1,\cdots ,N$. We denote this quantity by $m(z)$ similarly
as in \cite{Ka06}, Section 2.
\end{remarque}
We formulate some assumptions which are analogous to those 
 in  continuous time case, see \cite{GOR}.
\begin{Hyp}\label{HypD1}
\begin{enumerate}
\item $\Delta X_n$ is never deterministic for every $n=1,\cdots ,N$.
\item $2 \in D$.
\end{enumerate}
\end{Hyp}
\begin{remarque}\label{remarque57}
In particular, $S_n\in \shl^2(\Omega)$, for every $n=0,1,\cdots ,N$,
 because $2 \in D$.
\end{remarque}
\begin{lemme}\label{mcont}
 $z\mapsto m(z,n)$ is continuous for any $n=0,1,\cdots ,N$.
In particular, if $K$ is a compact real set then
 $\sup_{z\in K + i \R}|m(z,n)|<\infty$. 
\end{lemme}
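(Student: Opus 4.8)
The plan is to regard $m(z,n)$ as the bilateral Laplace transform $m(z,n)=\E[e^{z\Delta X_n}]=\int_{\R}e^{zx}\,\mu_n(dx)$ of the law $\mu_n$ of $\Delta X_n$ — which is well defined for every $z\in D$ by Remark \ref{R32} — and to extract both assertions from the dominated convergence theorem. Since $m(\cdot,0)\equiv 1$, the case $n=0$ is trivial, so fix $n\in\{1,\dots,N\}$. The structural facts I shall use are all contained in Remark \ref{R32} and Hypothesis \ref{HypD1}: first, $D\cap\R$ is an interval containing $[0,2]$; second, $|e^{zx}|=e^{\mathrm{Re}(z)x}$, so $z\in D$ forces $\mathrm{Re}(z)\in D\cap\R$; third, for every $\gamma\in D\cap\R$ one has $\E[e^{\gamma\Delta X_n}]<\infty$, because by independence of the increments $\E[e^{\gamma X_N}]=\prod_{j=1}^{N}\E[e^{\gamma\Delta X_j}]$ is a finite product of strictly positive factors, hence each factor is finite.

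For continuity, fix $z_0\in D$ and a sequence $z_k\to z_0$ in $D$. Since $\mathrm{Re}(z_k)\to\mathrm{Re}(z_0)\in D\cap\R$ and $D\cap\R$ is an interval, for $k$ large all the real parts $\mathrm{Re}(z_k)$ lie in a common compact subinterval $[a,b]\subset D\cap\R$ (when $\mathrm{Re}(z_0)$ is a closed endpoint of $D\cap\R$ one uses that $z_k\in D$ keeps $\mathrm{Re}(z_k)$ on the interior side of it). By convexity of $t\mapsto e^{t\Delta X_n}$ we then have $|e^{z_k\Delta X_n}|=e^{\mathrm{Re}(z_k)\Delta X_n}\le e^{a\Delta X_n}+e^{b\Delta X_n}$, and the dominating function on the right is integrable by the third fact above. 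Since $e^{z_k\Delta X_n}\to e^{z_0\Delta X_n}$ almost surely, dominated convergence gives $m(z_k,n)\to m(z_0,n)$, proving continuity of $z\mapsto m(z,n)$ on $D$.

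For the uniform bound, let $K$ be a compact real set contained in $D$; by the convexity statement of Remark \ref{R32}, the strip $K+i\R$ lies in $D$, so $m(\cdot,n)$ is defined on it, and $K\subset[\gamma_-,\gamma_+]$ for some $[\gamma_-,\gamma_+]\subset D\cap\R$. For $z=\gamma+it$ with $\gamma\in K$ we estimate, using again convexity of $\gamma\mapsto e^{\gamma\Delta X_n}$, $|m(z,n)|\le\E\bigl[|e^{z\Delta X_n}|\bigr]=\E[e^{\gamma\Delta X_n}]\le\E[e^{\gamma_-\Delta X_n}]+\E[e^{\gamma_+\Delta X_n}]$, which is finite and independent of $z$; therefore $\sup_{z\in K+i\R}|m(z,n)|<\infty$.

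Nothing here is deep: the proof is an exercise in dominated convergence once the effective domain $D\cap\R$ and the integrability of $e^{\gamma\Delta X_n}$ for $\gamma\in D\cap\R$ are understood, which is why I record those remarks first. The only spot needing a moment's care is extracting the common compact subinterval $[a,b]$ in the continuity step when $z_0$ lies on the boundary of $D$; if one is content with continuity on the interior of $D$ — which is all that is invoked later — even that subtlety disappears.
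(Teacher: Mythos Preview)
Your argument is correct and follows essentially the same route as the paper: trap the real parts of the approximating sequence in a compact subinterval of $D\cap\R$ and then pass to the limit, the only cosmetic difference being that you invoke dominated convergence with the convexity bound $e^{\gamma x}\le e^{ax}+e^{bx}$ while the paper phrases it via uniform integrability and a tail estimate. You are in fact slightly more complete, since you spell out the second assertion about $\sup_{z\in K+i\R}|m(z,n)|$ explicitly, whereas the paper leaves it implicit.
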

\begin{proof}
We set $Y=\Delta X_n$ for fixed $n \in \{1,\cdots ,N\}$. Let $z \in D$
and $(z_p)$ be a sequence converging to z. Obviously $\exp(z_pY)\rightarrow \exp(zY)$ a.s. In order to conclude we need to show that the sequence $(\exp(z_p Y))$ is uniformly integrable. After extraction of subsequences, we can separately
 suppose that
\begin{enumerate}
\item either $\min_{n}Re(z_n)\leq Re(z_p)\leq Re(z)$, for all $p \in \N$,
\item or $\max_n Re(z_n)\geq Re(z_p)\geq Re(z)$, for all $p \in \N$.
\end{enumerate}
This implies the existence of $a,A \in D\cap \R$ such that $a\leq Re(z_p) \leq A$, for all $ p\in \N$.\\
Consequently if $M>0$, for every $p \in \N$, we have  
\begin{eqnarray*}\E[\exp(z_pY)1_{|Y|>M}]\leq \int_{-\infty}^{-M}\exp(yRe(z_p))d\mu_Y(y)+\int_{M}^{\infty}\exp(yRe(z_p))d\mu_Y(y)\end{eqnarray*}
where $\mu_Y$ is the distribution law of $Y$. Previous sum is bounded by
$\int_{-\infty}^{-M}\exp(ay)d\mu_Y(y)+\int_{M}^{\infty}\exp(Ay)d\mu_Y(y)$
Since $M$ is arbitrarily big, the result is established.
\end{proof}
\begin{lemme}\label{lemme521}
Let $n=0,\cdots ,N$.
\begin{enumerate}
\item $\mathbb{E}[e^{\Delta X_n}-1]^2=m(2,n)-2m(1,n)+1$.
\item $Var[e^{\Delta X_n}-1]=m(2,n)-m(1,n)^2$.
\item $\mathbb{E}[e^{\Delta X_n}-1]=m(1,n)-1$.
\end{enumerate}
\end{lemme}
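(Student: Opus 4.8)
The plan is to compute each of the three conditional-free expectations directly by expanding the relevant polynomial in $e^{\Delta X_n}$ and using the definition $m(z,n) = \mathbb{E}[e^{z\Delta X_n}]$, noting that $2 \in D$ by Assumption~\ref{HypD1}, so all the quantities below are finite. Throughout I would write $Y = \Delta X_n$ for brevity and keep in mind the convention $m(z,0)\equiv 1$, which makes the case $n=0$ trivial (then $Y=0$, every expression vanishes, and the formulae hold since $m(1,0)=m(2,0)=1$); so the content is really in $n=1,\dots,N$.

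For Point 3, I would simply write $\mathbb{E}[e^{Y}-1] = \mathbb{E}[e^{Y}] - 1 = m(1,n)-1$ by linearity of expectation. For Point 1, I would expand the square: $(e^{Y}-1)^2 = e^{2Y} - 2e^{Y} + 1$, and take expectations term by term to get $\mathbb{E}[(e^Y-1)^2] = m(2,n) - 2m(1,n) + 1$; the only thing to check is integrability of each term, which follows from $2 \in D$ (hence $1 \in D$ as well, since $D$ is convex and contains $0$, by Remark~\ref{R32}). For Point 2, I would use the identity $\mathrm{Var}[e^Y-1] = \mathrm{Var}[e^Y] = \mathbb{E}[e^{2Y}] - (\mathbb{E}[e^Y])^2 = m(2,n) - m(1,n)^2$, using that adding the constant $-1$ does not change the variance. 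Alternatively, Point 2 follows by combining Points 1 and 3 through the elementary relation $\mathrm{Var}[Z] = \mathbb{E}[Z^2] - (\mathbb{E}[Z])^2$ with $Z = e^Y - 1$: indeed $m(2,n) - 2m(1,n) + 1 - (m(1,n)-1)^2 = m(2,n) - m(1,n)^2$.

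There is essentially no obstacle here: the lemma is a bookkeeping exercise recording the first two conditional moments of the multiplicative increment $e^{\Delta X_n} - 1$ in terms of the cumulant generating function $m$, which will later be plugged into the formulae for $\lambda_k$, the mean-variance tradeoff process $K^d$, and the Doob decomposition of $S$. The only point deserving a word of care is the justification that all expectations are well-defined and finite, which is exactly what Assumption~\ref{HypD1}(2) together with the convexity of $D$ (Remark~\ref{R32}(2)) provides. I would therefore present the proof in three short displays, one per point, with a single sentence at the start invoking $2 \in D$ and the convexity of $D$ to license every expectation, and a remark that the case $n=0$ is immediate from the convention $m(\cdot,0)\equiv 1$.
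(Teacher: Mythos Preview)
Your proposal is correct and follows essentially the same approach as the paper: Points 1 and 3 are obtained directly from the definition of $m$, and Point 2 is deduced from these via $\mathrm{Var}[Z]=\mathbb{E}[Z^2]-(\mathbb{E}[Z])^2$. Your additional remarks on integrability (via $2\in D$ and convexity of $D$) and on the trivial case $n=0$ are sound and simply make explicit what the paper leaves implicit.
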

\begin{proof}
Statements 1. and 3. follow in elementary manner using the definition of m.\\
Statement 2. follows from  statement 1.
 and the fact that $\mathbb{E}[e^{\Delta X_n}-1]=m(1,n)-1$.
\end{proof}
\begin{remarque}\label{remarque522}
$m(2,n)-m(1,n)^2$ is strictly positive for any $n=1,\cdots ,N$.
 In fact Assumption
 \ref{HypD1} 1. implies that $e^{\Delta X_n}-1$ is never deterministic.
\end{remarque}

\begin{remarque}\label{SNZ}
For $z \in D$ and   $n\in\{1,\cdots N\} $,  we have $ \mathbb{E} (S_n^z) = s_0^z \prod_{k=1}^n m(z,k).$
\end{remarque}
\begin{propo}\label{propo523}
For $n\in\{1,\cdots N\} $,  we have
\begin{enumerate}
\item $\Delta A_n=\mathbb{E}[\Delta S_n|\mathcal{F}_{n-1}]=(m(1,n)-1)S_{n-1}$.
\item $Var[\Delta S_n|\mathcal{F}_{n-1}]=(m(2,n)-m(1,n)^2)S_{n-1}^2$.
\item Condition (ND) is always satisfied.
\item $$\lambda_n=\frac{1}{S_{n-1}}\frac{m(1,n)-1}{m(2,n)-2m(1,n)+1}.$$
\item The mean-variance tradeoff process $K^d$ is deterministic.
\end{enumerate}
\end{propo}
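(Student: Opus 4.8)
The plan is to reduce all five statements to the multiplicative structure of $S$ combined with Lemma~\ref{lemme521}. The starting observation is that $\Delta S_n = S_n - S_{n-1} = S_{n-1}(e^{\Delta X_n}-1)$, where $S_{n-1}=s_0 e^{X_{n-1}}$ is $\mathcal{F}_{n-1}$-measurable and strictly positive, while $\Delta X_n$ is independent of $\mathcal{F}_{n-1}$ (the reference filtration being the natural filtration of $X$, equivalently of $S$). Hence conditioning on $\mathcal{F}_{n-1}$ pulls out $S_{n-1}$ and integrates the remaining factor against the law of $\Delta X_n$, giving
$$
\mathbb{E}[\Delta S_n|\mathcal{F}_{n-1}] = S_{n-1}\,\mathbb{E}[e^{\Delta X_n}-1], \qquad \mathbb{E}[(\Delta S_n)^2|\mathcal{F}_{n-1}] = S_{n-1}^2\,\mathbb{E}[(e^{\Delta X_n}-1)^2],
$$
and likewise $Var[\Delta S_n|\mathcal{F}_{n-1}] = S_{n-1}^2\,Var[e^{\Delta X_n}-1]$. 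Statements 1.\ and 2.\ then follow at once by substituting items 3.\ and 2.\ of Lemma~\ref{lemme521}, and statement 4.\ follows from the definition (\ref{lambdak}) of $\lambda_n$ by dividing the first identity above by the second and using item 1.\ of Lemma~\ref{lemme521}; here one checks that the denominator $m(2,n)-2m(1,n)+1 = \mathbb{E}[(e^{\Delta X_n}-1)^2]$ is strictly positive, because $\Delta X_n$ is never deterministic (Assumption~\ref{HypD1} 1.), hence not a.s.\ zero, and $S_{n-1}>0$ a.s.

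For statement 3.\ I would invoke Proposition~\ref{S16}, according to which (ND) is equivalent to the essential boundedness, uniformly in $\omega$ and $k$, of $\mathbb{E}[\Delta S_k|\mathcal{F}_{k-1}]^2 / Var[\Delta S_k|\mathcal{F}_{k-1}]$. By statements 1.\ and 2.\ the $S_{k-1}^2$ factors cancel and this ratio equals the deterministic number $(m(1,k)-1)^2/(m(2,k)-m(1,k)^2)$, which is finite for each $k$: the denominator is strictly positive by Remark~\ref{remarque522}, and $m(1,k), m(2,k)$ are finite since $2\in D$ (Assumption~\ref{HypD1} 2., Remark~\ref{remarque57}, Lemma~\ref{mcont}). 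As there are only finitely many indices $k=1,\dots,N$, the supremum over $k$ is finite, so (ND) holds. The same cancellation gives statement 5.: $K^d_j = \sum_{l=1}^j (m(1,l)-1)^2/(m(2,l)-m(1,l)^2)$, which is deterministic.

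I do not expect a serious obstacle. The only points needing a modicum of care are the justification that the conditional expectations factor through $S_{n-1}$ (independence of $\Delta X_n$ from $\mathcal{F}_{n-1}$, plus measurability and a.s.\ positivity of $S_{n-1}$), and the verification that every quantity in play is finite and that no denominator vanishes --- which is precisely the role of Assumption~\ref{HypD1}: part 2.\ ($2\in D$) ensures square-integrability and finiteness of $m(1,\cdot),m(2,\cdot)$, while part 1.\ (non-degeneracy of $\Delta X_n$) guarantees $m(2,n)-m(1,n)^2>0$ and $m(2,n)-2m(1,n)+1>0$.
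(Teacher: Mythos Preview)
Your proof is correct and follows essentially the same route as the paper's: factor $\Delta S_n = S_{n-1}(e^{\Delta X_n}-1)$, use independence of $\Delta X_n$ from $\mathcal{F}_{n-1}$ to reduce the conditional moments to Lemma~\ref{lemme521}, then invoke Proposition~\ref{S16} and Remark~\ref{remarque522} for item~3., the definition~(\ref{lambdak}) for item~4., and Definition~\ref{MVTDis} for item~5. Your write-up is slightly more explicit (e.g.\ spelling out why the maximum over finitely many $k$ is finite and why $m(2,n)-2m(1,n)+1>0$), but there is no substantive difference.
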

\begin{proof}
\begin{enumerate}
\item  follows from $\mathbb{E}[\Delta S_n|\mathcal{F}_{n-1}]=
S_{n-1}\mathbb{E}[e^{\Delta X_n}-1]$ and Lemma \ref{lemme521} 3.
\item Since 
\begin{equation} \label{Esq}
\mathbb{E}[(\Delta S_n)^2|\mathcal{F}_{n-1}] = 
 S_{n-1}^2\mathbb{E}[(e^{\Delta X_n}-1)^2], 
\end{equation}
we can write 
\begin{eqnarray*}
Var[\Delta S_n|\mathcal{F}_{n-1}]&:=&\mathbb{E}[(\Delta S_n)^2|\mathcal{F}_{n-1}]-\mathbb{E}[\Delta S_n|\mathcal{F}_{n-1}]^2\ ,\\
&=& S_{n-1}^2\mathbb{E}[(e^{\Delta X_n}-1)^2]-S_{n-1}^2\mathbb{E}[e^{\Delta X_n}-1]^2\,\\
&=& S_{n-1}^2Var[e^{\Delta X_n}-1].
\end{eqnarray*}
The conclusion follows from Lemma \ref{lemme521} 2.
\item We make use of Proposition \ref{S16}. In our context we have
\begin{eqnarray}\frac{\mathbb{E}[\Delta S_n|\mathcal{F}_{n-1}]^2}{Var[\Delta S_n|\mathcal{F}_{n-1}]}=
\frac{(m(1,n)-1)^2}{m(2,n)-m(1,n)^2}.\label{E51}\end{eqnarray}
The denominator of the right-hand side never vanishes because of Remark \ref{remarque522}.
\item It follows from \eqref{lambdak}, \eqref{Esq}, 
Lemma \ref{lemme521} 1. and point 1. of this Proposition.
\item It is a consequence of point 3. and Definition \ref{MVTDis}.
\end{enumerate}
\end{proof}

\subsection{Discrete F\"ollmer-Schweizer decomposition}\label{SectionFSDecom}

Similarly to \cite{Ka06} and \cite{GOR}, we would like to obtain the discrete 
F\"ollmer-Schweizer decomposition of a random variable of the type $H=S_N^z$, for some suitable $z \in \mathbb{C}$. The proposition below generalizes
 Lemma 2.4 of \cite{Ka06}.
\begin{propo}\label{lemme24} Under Assumption \ref{HypD1},
let  $z \in D$ fixed, such that 
$2Re(z)\in D$. Then $H(z)=S^z_N$ admits
a discrete  F\"ollmer-Schweizer decomposition 
\begin{equation*}
\left \{
\begin{array}{ccl}
H(z)_n &=& H(z)_0+\sum_{k=1}^{n}\xi(z)_k \Delta S_k +L(z)_n \\
H(z)_N &=& H(z) = S_N^z
\end{array}
\right.
\end{equation*}
 where 
\begin{eqnarray}\label{Ldiscret}
H(z)_n&=&h(z,n)S_n^z\ , \quad\textrm{for all}\quad n \in \{0, \cdots N \}\nonumber\\
\xi(z)_n&=&g(z,n)h(z,n)S_{n-1}^{z-1}\ ,\quad\textrm{for all}\quad n \in \{1, \cdots N \}\\
L(z)_n&=&H(z)_n-H(z)_0-\sum_{k=1}^{n}\xi(z)_k\Delta S_k \ ,\quad\textrm{for all}\quad n \in \{0, \cdots N \} \nonumber
\end{eqnarray}
and $g(z,n)$, $h(z,n)$ are defined by
\begin{eqnarray}
h(z,n)&:=&\prod_{i=n+1}^{N}\left(m(z,i)-g(z,i)[m(1,i)-1]\right)\label{G4}\\
g(z,n)&:=&\frac{m(z+1,n)-m(1,n)m(z,n)}{m(2,n)-m(1,n)^2}.\label{G5}
\end{eqnarray}
\end{propo}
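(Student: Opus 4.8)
The strategy is to verify directly that the triple $(H(z)_0, \xi(z), L(z))$ given by the stated formulas satisfies the three defining properties of a discrete FS-decomposition (Definition \ref{def28}), since existence and uniqueness are already guaranteed by Remark \ref{R29} once we know $S$ satisfies (ND), which is Proposition \ref{propo523} 3. Property 3 (the decomposition identity) holds by construction of $L(z)$ in \eqref{Ldiscret}, and Property 2 ($E(L_0^H)=0$) will follow once we check $L(z)_0$ is even identically zero; indeed $L(z)_0 = H(z)_0 - H(z)_0 - 0 = 0$ from \eqref{Ldiscret} with $n=0$. So the real content is: (a) checking that $L(z)$ is a square-integrable martingale, and (b) checking that $L(z)M$ is a martingale, i.e. strong orthogonality of $L(z)$ and $M$.

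First I would record the one-step structure. From \eqref{Ldiscret}, $\Delta L(z)_k = H(z)_k - H(z)_{k-1} - \xi(z)_k \Delta S_k = h(z,k)S_k^z - h(z,k-1)S_{k-1}^z - g(z,k)h(z,k)S_{k-1}^{z-1}\Delta S_k$. Factoring out $h(z,k)S_{k-1}^z$ and writing $S_k^z = S_{k-1}^z e^{z\Delta X_k}$, $\Delta S_k = S_{k-1}(e^{\Delta X_k}-1)$, one gets
\begin{equation*}
\Delta L(z)_k = S_{k-1}^z\Bigl( h(z,k)\bigl[e^{z\Delta X_k} - g(z,k)(e^{\Delta X_k}-1)\bigr] - h(z,k-1)\Bigr),
\end{equation*}
and from \eqref{G4} one has the recursion $h(z,k-1) = h(z,k)\bigl(m(z,k) - g(z,k)[m(1,k)-1]\bigr)$. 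Substituting, $\Delta L(z)_k = h(z,k)S_{k-1}^z\, U_k$ where $U_k := e^{z\Delta X_k} - g(z,k)(e^{\Delta X_k}-1) - m(z,k) + g(z,k)[m(1,k)-1]$, which is a function of the single increment $\Delta X_k$, independent of $\mathcal F_{k-1}$, and has $\mathbb E[U_k] = m(z,k) - g(z,k)(m(1,k)-1) - m(z,k) + g(z,k)(m(1,k)-1) = 0$ using Lemma \ref{lemme521} 3 and Remark \ref{R32}. Since $h(z,k)S_{k-1}^z$ is $\mathcal F_{k-1}$-measurable, this immediately gives $\mathbb E[\Delta L(z)_k \mid \mathcal F_{k-1}] = 0$, the martingale property. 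Square-integrability of each $L(z)_n$ follows because Assumption \ref{HypD1} 2 and the hypothesis $2Re(z)\in D$ put all relevant exponential moments (of $\Delta X_k$ raised to powers with real part up to $2Re(z)$, and mixed with the $e^{\Delta X_k}$ terms, bounded via Cauchy-Schwarz since $2\in D$) in $\shl^1$; here I would invoke Lemma \ref{mcont} / Remark \ref{R32} 2 to control $m$ on the relevant strip, and the fact that $g(z,k)$ is finite by Remark \ref{remarque522}.

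For strong orthogonality, by the same token $\Delta M_k = \Delta S_k - \Delta A_k = S_{k-1}\bigl(e^{\Delta X_k} - m(1,k)\bigr) =: S_{k-1}V_k$ with $\mathbb E[V_k]=0$ and $V_k$ a function of $\Delta X_k$ only. Then $\mathbb E[\Delta L(z)_k \,\Delta M_k \mid \mathcal F_{k-1}] = h(z,k)S_{k-1}^{z+1}\,\mathbb E[U_k V_k]$, and $\mathbb E[U_k V_k]$ is a concrete bilinear expression in $m(z+1,k), m(z,k), m(2,k), m(1,k)$; the definition \eqref{G5} of $g(z,k)$ is exactly what is needed to make $\mathbb E[U_k V_k] = 0$. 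Concretely, $\mathbb E[U_k V_k] = m(z+1,k) - m(1,k)m(z,k) - g(z,k)\bigl(m(2,k) - m(1,k)^2 - m(1,k)(m(1,k)-1) + \dots\bigr)$ — I would expand this carefully and confirm the coefficient of $g(z,k)$ is $m(2,k)-m(1,k)^2$ (using Lemma \ref{lemme521}), so that \eqref{G5} forces the whole thing to vanish. Then, since $L(z)M$ is a discrete process with $\mathbb E[\Delta(L(z)M)_k\mid\mathcal F_{k-1}] = L(z)_{k-1}\mathbb E[\Delta M_k\mid\mathcal F_{k-1}] + M_{k-1}\mathbb E[\Delta L(z)_k\mid\mathcal F_{k-1}] + \mathbb E[\Delta L(z)_k\Delta M_k\mid\mathcal F_{k-1}] = 0$, it is a martingale (integrability of the products again from $2Re(z)\in D$ plus Cauchy-Schwarz).

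The main obstacle is not conceptual but computational: the algebraic verification that the specific rational function $g(z,k)$ in \eqref{G5} is precisely the one killing $\mathbb E[U_k V_k]$, and the companion check that the product $h(z,k)$-recursion \eqref{G4} is what makes $L(z)$ (as opposed to some drifting process) a genuine martingale. One should also be slightly careful that the denominator $m(2,k)-m(1,k)^2$ never vanishes — this is Remark \ref{remarque522} — and that all the conditional expectations are legitimate, i.e. the integrability bookkeeping under "$z\in D$ and $2Re(z)\in D$"; the cleanest way is to bound everything by $|S_{k-1}|^{2Re(z)}$ times a finite product of $m(\cdot,k)$ values, finite by Remark \ref{SNZ} and Assumption \ref{HypD1} 2. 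Once these are in place the three properties of Definition \ref{def28} are all verified and the proposition follows; uniqueness (already known) then tells us this is *the* decomposition.
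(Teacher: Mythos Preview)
Your proposal is correct and follows essentially the same route as the paper: compute $\Delta L(z)_k$ explicitly, use the product formula for $h$ to get the martingale property, and use the definition of $g$ to obtain $\mathbb E[\Delta L(z)_k\,\Delta M_k\mid\mathcal F_{k-1}]=0$. The only cosmetic difference is that you package the increment-dependent parts into auxiliary variables $U_k,V_k$ and present the argument as a verification, whereas the paper writes out the expansions directly and phrases it as a derivation of the forms \eqref{G4}--\eqref{G5}; the algebra is identical.
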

\begin{remarque} \label{remarque311bis}
\begin{enumerate}
\item $z + 1 \in D$ because $D$ is convex, taking into account 
Assumption \ref{HypD1} 2.
\item If $2 Re (z) $ does not belong to $D$, 
for simplicity, we will set 
$$ g(z,n) \equiv h(z,n) \equiv 
 H(z)_n \equiv \xi(z)_n  \equiv L(z)_n \equiv 0.$$
\item If $K$ is a compact real interval, for any $n \in \{0, \cdots N \}$
 we have
 $\sup_{z \in K + i \R} (\vert g(z,n) \vert + \vert h(z,n)\vert) <
 \infty $.  
\end{enumerate}
\end{remarque}
\begin{remarque}\label{remarque513}
Suppose that  $(X_n)_{n=0,\cdots ,N}$ is a {\bf process with stationary
  increments} i.e. such that   $X_1-X_0,\cdots ,X_N-X_{N-1}$ are
 identically distributed  random variables. \\
 According to Remark \ref{remarque56}, we have 
\begin{eqnarray}g(z,n)=\frac{m(z+1)-m(1)m(z)}{m(2)-m(1)^2}.\label{G5L}
\end{eqnarray}
We will denote in this case $g(z)$ the right-hand side of (\ref{G5L}).
Moreover $h(z,n)=h(z)^{N-n}$ where
\begin{eqnarray}h(z)=m(z)-g(z)[m(1)-1].\label{G4L}\end{eqnarray}
\end{remarque}
\begin{proof}[\textbf{Proof of Proposition \ref{lemme24}}]
Since $z+1\in D$ all the involved expressions are-well defined. Since
$L(z)_0 = 0$, we need to  prove the following.
\begin{enumerate}
\item $L(z)$ is a
square integrable martingale.
\item $L(z) M$ is a
martingale.
\end{enumerate}
From (\ref{Ldiscret}), it follows that
\begin{eqnarray*}\Delta L(z)_n=L(z)_n-L(z)_{n-1}=h(z,n)S_n^z-h(z,n-1)S_{n-1}^z-g(z,n)h(z,n)S_{n-1}^z(e^{\Delta X_n}-1);\end{eqnarray*}
$L(z)_n$ is square integrable for any $n \in \{0,\cdots ,N\}$ since $2z\in
D$ and $(X_n)$ has independent increments. \\ 
Since $S_{n}^z=S_{n-1}^ze^{z\Delta X_n}$, we have
\begin{eqnarray}
\Delta L(z)_n=S_{n-1}^z\left[h(z,n)e^{z\Delta X_n}-h(z,n-1)-g(z,n)h(z,n)(e^{\Delta X_n}-1)\right]\label{DeltaLz}\ ,
\end{eqnarray}
therefore
$\mathbb{E}[\Delta L(z)_n|\mathcal{F}_{n-1}]=S_{n-1}^z\mathbb{E}\left[h(z,n)e^{z\Delta X_n}-h(z,n-1)-g(z,n)h(z,n)(e^{\Delta X_n}-1)\right]$.
\begin{enumerate}
\item To show that $L(z)$ is a martingale it is enough to show that
\begin{eqnarray*}\mathbb{E}\left[h(z,n)e^{z\Delta X_n}-h(z,n-1)-g(z,n)
h(z,n)(e^{\Delta X_n}-1)\right]=0.\end{eqnarray*}
Previous expression is equivalent to   the relation
 $h(z,n)m(z,n)-h(z,n-1)-g(z,n)h(z,n)(m(1,n)-1)=0$ for any $0\leq n\leq N$ which is equivalent to $h(z,n-1)=h(z,n)\left(m(z,n)-g(z,n)(m(1,n)-1)\right)$ for any $0\leq n\leq N$. Previous backward relation with $h(z,N)=1$ leads to $(\ref{G4})$.
\item It remains to prove that $(L(z)_nM_n)$ is a martingale.
Since $L(z)_n$ and $M_n$ are square integrable for any $n $
then $L(z)_n M_n \in \shl^1$. 
  We prove now 
 that $\mathbb{E}[\Delta L(z)_n\Delta M_n|\mathcal{F}_{n-1}]=0$.
Proposition \ref{propo523} 1. implies that the Doob decomposition
 $S=M+A$ of $S$ satisfies $\Delta A_n =(m(1,n)-1)S_{n-1}\ .$
Moreover
\begin{eqnarray*}\Delta M_n=\Delta S_n-\Delta A_n=S_{n-1}(e^{\Delta X_n}-1)-S_{n-1}(m(1,n)-1)=S_{n-1}(e^{\Delta X_n}-m(1,n)).\end{eqnarray*}
Coming back to (\ref{DeltaLz})
\begin{equation*}
\Delta L(z)_n\Delta M_n
= S_{n-1}^{z+1}(e^{\Delta X_n}-m(1,n))\left[h(z,n)
e^{z\Delta X_n}-h(z,n-1)-g(z,n)h(z,n)(e^{\Delta X_n}-1)\right].\end{equation*}
Taking the conditional
 expectation with respect to $\mathcal{F}_{n-1}$, we obtain
\begin{eqnarray*}\mathbb{E}[\Delta L(z)_n\Delta M_n|\mathcal{F}_{n-1}]&=&
\mathbb{E}[S_{n-1}^{z+1}(e^{\Delta X_n}-m(1,n))\\
&&\left[h(z,n)e^{z\Delta X_n}-h(z,n-1)-g(z,n)h(z,n)(e^{\Delta X_n}-1)\right]|\mathcal{F}_{n-1}]\\
&=&S_{n-1}^{z+1}\mathbb{E}[(e^{\Delta X_n}-m(1,n))\\
&&\left[h(z,n)e^{z\Delta X_n}-h(z,n-1)-g(z,n)h(z,n)(e^{\Delta X_n}-1)\right]]\\
&=&S_{n-1}^{z+1}\mathbb{E}[e^{(z+1)\Delta X_n}h(z,n)\\
&-&e^{\Delta X_n}h(z,n-1)-e^{\Delta X_n}g(z,n)h(z,n)(e^{\Delta X_n}-1)\\
&-&m(1,n)h(z,n)e^{z\Delta X_n}+m(1,n)h(z,n-1)\\
&+&m(1,n)g(z,n)h(z,n)(e^{\Delta X_n}-1)].\end{eqnarray*}
Again by Lemma \ref{lemme521}, previous quantity equals zero if and only if 
\begin{eqnarray*}
h(z,n)m(z+1,n)-g(z,n)h(z,n)m(2,n)-m(1,n)h(z,n)m(z,n)+m(1,n)^2g(z,n)h(z,n)=0\ ,
\end{eqnarray*}
or equivalently $m(z+1,n)-g(z,n)m(2,n)-m(1,n)m(z,n)+m(1,n)^2g(z,n)=0$.
Remark \ref{remarque522} finally
 shows that $g(z,n)$ must have the form (\ref{G5}). This concludes
 the proof of Proposition \ref{lemme24}.
\end{enumerate}
\end{proof}

\subsection{Discrete F\"ollmer-Schweizer decomposition of special contingent claims}\label{SectionDFS}

We consider now options $f:\mathbb{C} \rightarrow \mathbb{R}$ as in \cite{GOR} of the type 
\begin{eqnarray}
\label{FORM}H=f(S_N)\ , \quad\textrm{with}\quad f(s)=\int_{\mathbb{C}} s^z\Pi(dz)\ ,
\end{eqnarray}
where $\Pi$ is a (finite) complex measure in the sense of Rudin~\cite{RU87},
  Section~6.1. 
An integral representation of some basic European calls can be found 
in \cite{Ka06} or \cite{GOR}.  \\
The  European Call option  $H=(S_T-K)_+$ and  Put
 option  $H=(K-S_T)_+$ 
  have  a  representation of the form~(\ref{FORM})
provided by the lemma below.
\begin{lemme}\label{Call} Let  $K>0$.
\begin{enumerate}
\item For arbitrary $0<R<1$, $s>0$, we have
\begin{eqnarray}(s-K)_+-s=\frac{1}{2\pi i}
\int_{R-i\infty}^{R+i\infty}s^z\frac{K^{1-z}}{z(z-1)}dz\ .\label{Call2}\end{eqnarray}
\item For  arbitrary $R<0$, $s>0$
\begin{eqnarray}(K-s)_+=\frac{1}{2\pi i}\int_{R-i\infty}^{R+i\infty}s^z\frac{K^{1-z}}{z(z-1)}dz\ .\label{Put1}\end{eqnarray}
\end{enumerate}
\end{lemme}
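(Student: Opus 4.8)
The plan is to prove part (1) by contour integration / residue-type reasoning applied to the explicit meromorphic integrand, and then derive part (2) as an easy variant. Fix $s>0$, $K>0$. For part (1), set $F(z) = s^z K^{1-z}/(z(z-1))$, which is meromorphic on $\C$ with simple poles only at $z=0$ and $z=1$. Write $s^z K^{1-z} = K (s/K)^z$, so that on the vertical line $\mathrm{Re}(z)=R$ we have $|s^z K^{1-z}| = K (s/K)^R$, a constant; the decay that makes the integral converge and lets us move contours comes entirely from the $1/(z(z-1))$ factor, which is $O(|z|^{-2})$ as $|\mathrm{Im}(z)|\to\infty$ on any fixed vertical line. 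This absolute convergence is the first thing I would record.

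The core computation is to evaluate $I(R) := \frac{1}{2\pi i}\int_{R-i\infty}^{R+i\infty} F(z)\,dz$ by shifting the line of integration and collecting residues. For $0<R<1$: push $R \to +\infty$ when $s<K$ and $R\to -\infty$ when $s>K$, using the $O(|z|^{-2})$ bound on horizontal segments together with the behaviour of $(s/K)^R$ to kill the contributions at infinity — when $s<K$, $(s/K)^R\to 0$ as $R\to+\infty$; when $s>K$, $(s/K)^R\to 0$ as $R\to -\infty$. In each regime exactly one of the two poles is crossed. A direct residue computation gives $\mathrm{Res}_{z=1} F = s$ and $\mathrm{Res}_{z=0} F = -K$. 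Moving $R$ to the right (for $s<K$) picks up no poles, so $I(R)=0 = (s-K)_+ - s + (K - 0)$? — I would rather organize it as: moving $R$ from $(0,1)$ leftward past $z=0$ changes $I$ by $+\mathrm{Res}_{z=0}F = -K$, and the resulting integral on $\mathrm{Re}(z)=R'<0$ tends to $0$ as $R'\to-\infty$ when $s<K$; hence $I(R) = K$ for $s<K$. Symmetrically, moving $R$ rightward past $z=1$ changes $I$ by $-\mathrm{Res}_{z=1}F = -s$, and the integral on $\mathrm{Re}(z)=R''>1$ tends to $0$ as $R''\to+\infty$ when $s>K$; hence $I(R) = s - K$ for $s>K$. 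Combining, $I(R) = K\wedge s$ — wait, that needs the bookkeeping to land on $(s-K)_+ - s = -(K\wedge s)$; I would double-check the signs so that indeed $I(R) = (s-K)_+ - s$ in both regimes, with the boundary case $s=K$ handled by continuity.

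For part (2), with $R<0$: here $z=0$ and $z=1$ are both to the right of the contour. When $s<K$, shift $R\to+\infty$ (legitimate since $(s/K)^R\to 0$), crossing both poles, so $I(R) = \mathrm{Res}_{z=0}F + \mathrm{Res}_{z=1}F = -K + s$? — again I would fix signs so this reads $K - s$. When $s>K$, shift $R\to -\infty$, crossing no poles, and $(s/K)^R\to 0$ gives $I(R)=0$. Together this is $(K-s)_+$, as claimed. The main obstacle is purely bookkeeping: getting the orientation of the contour shifts and the sign of each residue exactly right, and rigorously justifying the vanishing of the connecting horizontal segments (a standard estimate: on the segment from $R_1+iT$ to $R_2+iT$ the integrand is bounded by $\big(\max((s/K)^{R_1},(s/K)^{R_2})\big)\cdot C/T^2$, whose integral over the segment is $O(1/T^2)\to 0$). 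No deep idea is needed beyond the residue theorem; I would present the $s<K$ and $s>K$ cases separately and invoke continuity in $s$ for equality at $s=K$.
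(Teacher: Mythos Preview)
The paper does not actually prove this lemma: it is stated without proof, with a pointer just before it to \cite{Ka06} and \cite{GOR} for such integral representations. Your contour-shifting / residue argument is exactly the standard route used in that literature, and the framework you describe --- absolute convergence from the $1/(z(z-1))$ decay, residues $\mathrm{Res}_{z=1}F=s$ and $\mathrm{Res}_{z=0}F=-K$, and vanishing of horizontal segments via the $O(T^{-2})$ estimate --- is correct and sufficient.

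There is, however, a genuine bookkeeping slip in your ``reorganized'' version that you should fix before writing it up. You first (correctly) say: for $0<R<1$, push $R\to+\infty$ when $s<K$ and $R\to-\infty$ when $s>K$, since $(s/K)^R\to 0$ in those respective directions. But in the next paragraph you swap the cases: you move \emph{leftward} and claim the integral at $R'<0$ tends to $0$ ``when $s<K$'', and move \emph{rightward} claiming vanishing at $R''>1$ ``when $s>K$''. Both are false --- for $s<K$ one has $(s/K)^{R'}\to\infty$ as $R'\to-\infty$, and symmetrically for $s>K$. With the directions kept as in your first sentence, and using $I(R_2)-I(R_1)=\sum_{R_1<\mathrm{Re}(\text{pole})<R_2}\mathrm{Res}$, one gets cleanly
\[
I(R)=\begin{cases}-s & (s<K)\\ -K & (s>K)\end{cases}\;=\;-(s\wedge K)\;=\;(s-K)_+-s,
\]
and the analogous computation for $R<0$ gives $(K-s)_+$. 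The same swap infects your Part~(2) sketch (you write ``$I(R)=-K+s$'' where it should be $K-s$, as you yourself suspect). Once the directions are straightened out, nothing else is missing.
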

We need at this point an assumption which depends on the support  of
$\Pi$. We set $I_0 : = {\rm supp \Pi} \cap \R$.
\begin{Hyp}\label{HypD2}
\begin{enumerate}
\item $I_0$ is compact.
\item $ 2 I_0 \subset D$.
\end{enumerate}
\end{Hyp}
\begin{remarque}\label{RVerAss}
\begin{enumerate}
\item Assumption \ref{HypD2} is always verified (for any $ 0<R<1$) 
for the Call since $I_0 = \{R, 1 \} $ is  always included in $[0,1]$
which is a subset of $\frac{D}{2}$ by Assumption \ref{HypD1} 2.
\item  Assumption \ref{HypD2} is  also verified for the Put, 
choosing suitable  $R$
 provided that $D$ contains some negative values.
\end{enumerate}
\end{remarque}
\begin{remarque}\label{RD2}
\begin{enumerate}
\item Since $D$ is convex, Assumption \ref{HypD2} 2. and the fact that
$2 \in D$ imply that $I_0 +1 \subset D$.
\item Since $I_0$ is compact, taking $\Pi = \delta_{z}$ 
for some $z \in \C$, Assumption \ref{HypD2} is equivalent
to the assumptions of Proposition \ref{lemme24}.
\item
Since $I_0$  is compact, Assumption \ref{HypD1} point 1. 
and Lemma \ref{mcont} imply that
$\sup_{z \in 2 I_0 + i \R}|m(z,n)|<\infty$, for every $n=1,\cdots ,N$.
\item Taking into account Remark \ref{remarque311bis} and points 2. and 3.
 we also get
$\sup_{z \in \C}(|g(z,n)| + |h(z,n)|) <\infty$, for every $n=1,\cdots ,N$.
\end{enumerate}
\end{remarque}
\begin{remarque}\label{RD3}
Notice that  Assumption~\ref{HypD2} is relatively weak and verified for a large class of models, whereas Assumption~8 required in~\cite{GOR} to derive similar results, in the continuous time setting, noticeably restricts the set of underlying dynamics. 
\end{remarque}
\begin{lemme}\label{remarque514}
For any $n\in\{0,\cdots , N\}$, according to the notations of Proposition
\ref{lemme24} we have
\begin{enumerate}
\item   $\sup_{z \in \C} \mathbb{E}[|H(z)_n|^2] < \infty$; 
\item $\sup_{z \in \C}\mathbb{E}[|\xi(z)_n|^2 (\Delta S_n)^2] < \infty$, for $n\geq 1$;
\item $\sup_{z \in \C}\mathbb{E}[(\Delta L(z)_n)^2] < \infty$.
\end{enumerate}
\end{lemme}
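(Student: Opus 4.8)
The plan is to prove all three bounds together by tracking the explicit formulas in Proposition \ref{lemme24} and invoking the uniform boundedness of $g$ and $h$ already recorded in Remark \ref{RD2} point 4 (equivalently Remark \ref{remarque311bis} point 3), namely $C:=\sup_{z\in\C}(|g(z,n)|+|h(z,n)|)<\infty$ for each $n$ (and, since there are finitely many $n$, uniformly in $n$ as well). The only other analytic input needed is that the moments $\mathbb{E}[|S_n^z|^2]=\mathbb{E}[S_n^{2\mathrm{Re}(z)}]$ and the increment moments $\mathbb{E}[|e^{z\Delta X_n}|^2]=\mathbb{E}[e^{2\mathrm{Re}(z)\Delta X_n}]=m(2\mathrm{Re}(z),n)$ stay bounded when $\mathrm{Re}(z)$ ranges over a compact set; this is exactly the content of Lemma \ref{mcont} combined with Remark \ref{SNZ}, once one observes that the relevant $z$ only contribute through $I_0$, hence through $2I_0+i\R$, which by Assumption \ref{HypD2} sits inside a compact real strip of $D$. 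So set $c_n:=\sup_{z\in\C}\mathbb{E}[|S_n^z|^2]<\infty$, which is finite by these remarks (for $z$ with $2\mathrm{Re}(z)\notin D$ the quantities are $0$ by convention, so the sup is really over a compact set of real parts).

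For point 1, write $\mathbb{E}[|H(z)_n|^2]=|h(z,n)|^2\,\mathbb{E}[|S_n^z|^2]\le C^2 c_n$, which is the desired uniform bound. For point 2, use $\xi(z)_n=g(z,n)h(z,n)S_{n-1}^{z-1}$ and $\Delta S_n=S_{n-1}(e^{\Delta X_n}-1)$, so that $\xi(z)_n\Delta S_n=g(z,n)h(z,n)S_{n-1}^{z}(e^{\Delta X_n}-1)$; then
\begin{eqnarray*}
\mathbb{E}[|\xi(z)_n|^2(\Delta S_n)^2]
&=&|g(z,n)|^2|h(z,n)|^2\,\mathbb{E}\big[|S_{n-1}^{z}|^2(e^{\Delta X_n}-1)^2\big]\\
&=&|g(z,n)|^2|h(z,n)|^2\,\mathbb{E}[|S_{n-1}^{z}|^2]\,\mathbb{E}[(e^{\Delta X_n}-1)^2],
\end{eqnarray*}
where the factorization uses the independence of $S_{n-1}$ and $\Delta X_n$. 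The last expectation equals $m(2,n)-2m(1,n)+1$ by Lemma \ref{lemme521} 1, a finite constant (recall $2\in D$), and $|S_{n-1}^z|^2$ is bounded uniformly by $c_{n-1}$, so the whole expression is $\le C^4 c_{n-1}(m(2,n)-2m(1,n)+1)$, uniform in $z$.

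For point 3, recall $\Delta L(z)_n = S_{n-1}^z\big[h(z,n)e^{z\Delta X_n}-h(z,n-1)-g(z,n)h(z,n)(e^{\Delta X_n}-1)\big]$ from \eqref{DeltaLz}, and bound $(\Delta L(z)_n)^2$ by a constant times $|S_{n-1}^z|^2$ times $\big(|h(z,n)|^2|e^{z\Delta X_n}|^2 + |h(z,n-1)|^2 + |g(z,n)|^2|h(z,n)|^2|e^{\Delta X_n}-1|^2\big)$ using $(a+b+c)^2\le 3(a^2+b^2+c^2)$. Taking expectations, using independence of $S_{n-1}$ and $\Delta X_n$ to factor, and bounding each factor ($\mathbb{E}[|S_{n-1}^z|^2]\le c_{n-1}$, $|g|,|h|\le C$, $\mathbb{E}[|e^{z\Delta X_n}|^2]=m(2\mathrm{Re}(z),n)$ uniformly bounded over the relevant compact strip by Lemma \ref{mcont}, $\mathbb{E}[|e^{\Delta X_n}-1|^2]$ finite by Lemma \ref{lemme521} 1) yields a finite bound independent of $z$. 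Alternatively one can simply note $\Delta L(z)_n = \Delta H(z)_n - \xi(z)_n\Delta S_n$ with $\Delta H(z)_n = H(z)_n - H(z)_{n-1}$, so point 3 follows from points 1 and 2 and the inequality $(a-b)^2\le 2a^2+2b^2$. The one point that deserves care — the ``main obstacle,'' though it is more a bookkeeping than a conceptual issue — is making explicit that although $z$ ranges over all of $\C$ in the statement, the quantities vanish identically unless $2\mathrm{Re}(z)\in D$ (Remark \ref{remarque311bis} point 2), and that the $z$ actually arising from a payoff of the form \eqref{FORM} have real parts confined to the compact set $I_0$ with $2I_0\subset D$ (Assumption \ref{HypD2}); this is precisely what makes every ``$\sup_{z\in\C}$'' a supremum over a compact set of real parts and hence finite, via Lemma \ref{mcont} and Remark \ref{SNZ}.
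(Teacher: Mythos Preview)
Your proof is correct and follows essentially the same route as the paper: bound $|h|,|g|$ uniformly via Remark \ref{RD2} point 4, control $\mathbb{E}[|S_{n-1}^z|^2]$ using Lemma \ref{mcont} and Remark \ref{SNZ} over the compact strip $2I_0+i\R$, compute point 2 explicitly through Lemma \ref{lemme521} 1., and deduce point 3 from points 1 and 2. Your write-up is in fact more careful than the paper's own proof, which omits the moduli and the explanation of why the supremum over $\C$ reduces to one over a compact set of real parts (your ``main obstacle'' paragraph).
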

\begin{proof} 
Remark \ref{remarque57}, together with point 4. of Remark \ref{RD2}
show the validity of point 1. Point 3. is a consequence of points 1 and 2.
Concerning this last point, let $n \in \{1,\cdots, N\}$. By Lemma
\ref{lemme521} 1.
\begin{eqnarray*}
 \mathbb{E}[|\xi^H(z)_n|^2 (\Delta S_n)^2] &=&
g(z,n)^2 h(z,n)^2 \E(S^{2z}_{n-1}) (m(2,n)-2m(1,n) +1) \\
&=& g(z,n)^2 h(z,n)^2 m(2z,n-1) (m(2,n)-2m(1,n) +1)
\end{eqnarray*}
The conclusion follows by Remark \ref{RD2}.
\end{proof}
Proposition below extends Proposition 2.5 of \cite{Ka06}.
\begin{propo}\label{propoHdiscret}
We suppose the validity of Assumptions \ref{HypD1} and  \ref{HypD2}.
Any contingent claim $H=f(S_N)$ admits the real discrete
FS decomposition $H$ given by
\begin{equation*} 
\left \{ 
\begin{array}{ccl}
H_n &=&  H_0+\sum_{k=1}^{n}\xi_k^H\Delta S_k + L^H_n \\
H_N &=& H 
\end{array}
\right.
\end{equation*}
 where
\begin{eqnarray}H_n&=&\int_\mathbb{C} H(z)_n \Pi(dz)\\
\xi^H_n  &=&\int_\mathbb{C} \xi(z)_n \Pi(dz)\\
L^H_n&=&\int_\mathbb{C} L (z)_n\Pi(dz)=H_n-H_0-\sum_{k=1}^{n}
\xi_k^H \Delta S_k, 
\end{eqnarray}
according to the same notations as in Proposition \ref{lemme24}
and Remark \ref{remarque311bis}. Moreover the processes 
$(H_n)$,$(\xi^H_n)$ and $(L^H_n)$ are real-valued.
\end{propo}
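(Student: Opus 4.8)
The strategy is to lift the decomposition obtained for the pure exponential payoffs $H(z) = S_N^z$ in Proposition \ref{lemme24} to the general payoff $H = f(S_N) = \int_{\C} S_N^z \, \Pi(dz)$ by integrating against the finite complex measure $\Pi$, and then to verify that each of the three defining properties of a discrete FS-decomposition (Definition \ref{def28}) passes through the integral. First I would fix $n$ and justify that the processes
$$
H_n := \int_{\C} H(z)_n \, \Pi(dz), \quad \xi^H_n := \int_{\C} \xi(z)_n \, \Pi(dz), \quad L^H_n := \int_{\C} L(z)_n \, \Pi(dz)
$$
are well-defined elements of $\shl^2(\Omega)$. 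This is exactly where Lemma \ref{remarque514} enters: the uniform-in-$z$ bounds on $\E[|H(z)_n|^2]$, $\E[|\xi(z)_n|^2(\Delta S_n)^2]$ and $\E[(\Delta L(z)_n)^2]$, combined with the finiteness of $|\Pi|$ and Minkowski's integral inequality, give that the integrals converge in $\shl^2$ and that the resulting random variables are square integrable. Note that the support hypothesis in Assumption \ref{HypD2} guarantees that $\Pi$ is supported where Proposition \ref{lemme24} applies (i.e. $2Re(z) \in D$), using Remark \ref{RD2} point 2, so that $H(z)_n$, $\xi(z)_n$, $L(z)_n$ are the genuine nonzero objects there.

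Next I would check the three properties. For the martingale property of $L^H$: since $L^H_n - L^H_{n-1} = \int_{\C} \Delta L(z)_n \, \Pi(dz)$ and each $\Delta L(z)_n$ satisfies $\E[\Delta L(z)_n \mid \mathcal{F}_{n-1}] = 0$ by Proposition \ref{lemme24}, a conditional Fubini argument (justified by the $\shl^2$-bounds of Lemma \ref{remarque514} and dominated convergence for conditional expectations) gives $\E[\Delta L^H_n \mid \mathcal{F}_{n-1}] = \int_{\C} \E[\Delta L(z)_n \mid \mathcal{F}_{n-1}] \, \Pi(dz) = 0$. The same device applied to $\E[\Delta L(z)_n \Delta M_n \mid \mathcal{F}_{n-1}] = 0$ yields strong orthogonality $\E[\Delta L^H_n \Delta M_n \mid \mathcal{F}_{n-1}] = 0$, i.e. $L^H M$ is a martingale; here one uses Cauchy--Schwarz together with $\E[(\Delta M_n)^2 \mid \mathcal{F}_{n-1}] < \infty$ to dominate the integrand. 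The condition $\E[L^H_0] = 0$ follows since $L(z)_0 = 0$ identically, hence $L^H_0 = 0$. Finally, integrating the identity $H(z)_n = H(z)_0 + \sum_{k=1}^n \xi(z)_k \Delta S_k + L(z)_n$ against $\Pi$ and commuting the finite sum with the integral gives $H_n = H_0 + \sum_{k=1}^n \xi^H_k \Delta S_k + L^H_n$; taking $n = N$ and using $H(z)_N = S_N^z$ together with \eqref{FORM} gives $H_N = \int_{\C} S_N^z \, \Pi(dz) = f(S_N) = H$.

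It remains to argue that the decomposition is \emph{real}. Since $H = f(S_N)$ is real-valued and square integrable, Remark \ref{R29} points 1 and 4 guarantee that $H$ admits a real discrete FS-decomposition which, by uniqueness (Remark \ref{R29} point 2), must coincide with the complex one constructed above; hence $H_0$, $\xi^H$, $L^H$ are necessarily real-valued. Alternatively one can observe directly that $\Pi$ must be invariant under $z \mapsto \bar z$ (up to complex conjugation) for $f$ to be real, but invoking uniqueness is cleaner. The main obstacle is the first step: carefully setting up the interchange of the complex integral $\int_{\C} (\cdot) \, \Pi(dz)$ with expectations and conditional expectations, which requires the uniform $\shl^2$-estimates of Lemma \ref{remarque514} (themselves resting on Remark \ref{RD2} point 4, i.e. $\sup_{z \in \C}(|g(z,n)| + |h(z,n)|) < \infty$) and a version of Fubini's theorem for Bochner-type integrals valued in $\shl^2(\Omega)$; once these integrability matters are settled, the verification of the three FS-properties is a routine linearity argument.
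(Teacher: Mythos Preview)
Your proposal is correct and follows essentially the same approach as the paper's own proof: integrate the single-exponential FS-decomposition of Proposition~\ref{lemme24} against $\Pi$, justify the interchange of integral and (conditional) expectation via the uniform $\shl^2$-bounds of Lemma~\ref{remarque514}, and conclude reality from the uniqueness statement in Remark~\ref{R29}. The paper's proof is considerably terser (it simply cites Fubini, Lemma~\ref{remarque514}, and Remark~\ref{R29}~4), but the underlying argument is the one you spelled out.
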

\begin{proof}
We proceed similarly to \cite{Ka06}, Proposition 2.1.
We need to prove that $L^H$ (resp. $L^HM$) is a square integrable 
(resp. integrable)  martingale.
This will follow from Proposition \ref{lemme24} and Fubini's theorem.
The use of Fubini's is justified by Lemma 
\ref{remarque514}. 
The fact that $ H, \xi^H$ and $L^H$ are real processes follows from 
Remark \ref{R29} 4. 
\end{proof}

\section{The solution of the minimization problem}\label{SectionSolution}

\subsection{Mean-Variance Hedging}\label{subsectionMVH}
We can now summarize the solution to the optimization problem.
\begin{thm}\label{ThmSolutionDicret}
We suppose the validity of Assumptions \ref{HypD1} and  \ref{HypD2}.
Let $H=f(S_N)$ with discrete real FS-decomposition
\begin{equation*} 
\left \{ 
\begin{array}{ccl}
H_n &=&  H_0+\sum_{k=1}^{n}\xi_k^H\Delta S_k + L^H_n \\
H_N &=& H. 
\end{array}
\right.
\end{equation*}
A solution to the optimal problem $(\ref{OptPrbDis})$ is given by
 $(V_0^*,\varphi^*)$ with $V_0^*=H_0$ and $\varphi^*$ is determined by
\begin{eqnarray}\varphi^*_n=\xi^H_n+\lambda_n  \left(H_{n-1}
-H_0-\sum_{i=1}^{n-1}\varphi_i^*\Delta S_i\right)\end{eqnarray}
where $\lambda_n$ is defined for all $n\in\{1,\cdots , N\}$, by
\begin{eqnarray}
\lambda_n=\frac{1}{S_{n-1}}\frac{m(1,n)-1}
{m(2,n)-2m(1,n)+1}\ . 
\end{eqnarray}
Moreover the solution  is unique (up to a null set).
\end{thm}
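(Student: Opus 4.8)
The strategy is to verify that the contingent claim $H=f(S_N)$ satisfies all the hypotheses of Theorem~\ref{T1}, after which the two assertions (the explicit form of $(V_0^*,\varphi^*)$ and uniqueness) follow immediately by specialization. First I would observe that Assumption~\ref{HypD1} is in force, so Remark~\ref{remarque57} gives $S_n\in\shl^2(\Omega)$ for every $n$, and Proposition~\ref{propo523} provides three facts we need at once: that $S$ satisfies the non-degeneracy condition (ND) (point~3), that the mean-variance tradeoff process $K^d$ is deterministic (point~5), and the explicit form of $\lambda_n$ (point~4), which is exactly the $\lambda_n$ appearing in the statement. In particular Assumption~\ref{HypND} holds, so the abstract machinery of Subsection~\ref{sectionExictence} applies.

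\textbf{Existence and shape of the optimizer.} Next I would invoke Proposition~\ref{propoHdiscret}: under Assumptions~\ref{HypD1} and~\ref{HypD2}, $H=f(S_N)$ admits a \emph{real} discrete F\"ollmer-Schweizer decomposition $H=H_0+\sum_{k=1}^N\xi_k^H\Delta S_k+L_N^H$ with $H_0$ real-valued and $\xi^H$ a real-valued process (hence $\xi^H\in\Theta$ by the last sentence of Definition~\ref{def28}). Since $S=M+A$ has deterministic mean-variance tradeoff process and $H$ has a real discrete FS-decomposition, Theorem~\ref{T1}(1) applies verbatim: the discrete time optimization problem~(\ref{OptPrbDis}) is solved by $V_0^*=H_0$ and by $\varphi^*$ defined recursively through $\varphi_k^*=\xi_k^H+\lambda_k(H_{k-1}-H_0-G_{k-1}(\varphi^*))$, which is precisely the recursion in the statement once one writes $G_{k-1}(\varphi^*)=\sum_{i=1}^{k-1}\varphi_i^*\Delta S_i$. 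One small point worth spelling out is that the recursion is well posed: $\varphi_k^*$ is defined in terms of $\varphi_1^*,\dots,\varphi_{k-1}^*$ and $\mathcal{F}_{k-1}$-measurable quantities ($\xi_k^H$, $\lambda_k$, $H_{k-1}$), so $\varphi^*$ is predictable, and square-integrability $\varphi_k^*\Delta S_k\in\shl^2$ follows from $\xi^H\in\Theta$ together with the (ND)-based $\shl^2$-bounds on $\lambda_k\Delta S_k$ and on the $H_{k-1}$ (using Lemma~\ref{remarque514}).

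\textbf{Uniqueness.} For the final claim, uniqueness up to a null set, I would argue as follows. By Proposition~\ref{propo210}, (ND) implies $G_N(\Theta)$ is closed in $\shl^2(P)$; hence the affine subspace $\{V_0+G_N(\varphi):V_0\in\R,\ \varphi\in\Theta\}$ is closed and convex in $\shl^2(P)$, so the projection of $H$ onto it — i.e. the minimizer of~(\ref{OptPrbDis}) — is unique as an element $V_0^*+G_N(\varphi^*)$ of $\shl^2(P)$. To upgrade this to uniqueness of the pair $(V_0^*,\varphi^*)$ one uses that $\mathcal{F}_0$ carries no information relevant here (or, in the general case, that $V_0^*=H_0$ is forced), and that the map $\varphi\mapsto G_N(\varphi)$ is injective on $\Theta$ modulo null sets — this is where the non-degeneracy genuinely bites, since $\mathrm{Var}[\Delta S_k|\mathcal F_{k-1}]>0$ a.s.\ for each $k$ (Proposition~\ref{propo523}(2) and Remark~\ref{remarque522}) forces $\Delta G_k(\varphi)=\varphi_k\Delta S_k$ to determine $\varphi_k$ a.s.\ by a backward induction on $k$. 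The main obstacle, if any, is precisely this last bookkeeping step: translating ``$V_0^*+G_N(\varphi^*)$ is unique in $\shl^2$'' into ``$(V_0^*,\varphi^*)$ is unique as a pair'', which is routine but deserves the backward-induction argument above; everything else is a direct citation of Theorem~\ref{T1}, Proposition~\ref{propo210}, Proposition~\ref{propo523} and Proposition~\ref{propoHdiscret}.
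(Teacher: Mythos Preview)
Your proof is correct and follows essentially the same route as the paper: existence is obtained by combining Theorem~\ref{T1}(1) with Proposition~\ref{propoHdiscret} and Proposition~\ref{propo523} (points 3, 4, 5), exactly as the authors do. For uniqueness the paper simply defers to the argument in Proposition~2.5 of \cite{Ka06}, invoking $\mathrm{Var}[e^{\Delta X_n}-1]=m(2,n)-m(1,n)^2>0$; your explicit version---closedness of $G_N(\Theta)$ via Proposition~\ref{propo210}, uniqueness of the $\shl^2$-projection, then backward induction using $\mathrm{Var}[\Delta S_k\mid\mathcal F_{k-1}]>0$---is precisely the content of that cited argument, just spelled out in full.
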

\begin{remarque}\label{rem42}
In the case that X has stationary increments, we obtain
$$
\lambda_n=\frac{1}{S_{n-1}}\frac{m(1)-1}{m(2)-
2m(1)+1}\ ,
$$
where $m(n) = E(\exp(n X_1))$.
This confirms the results of Section 2. in \cite{Ka06}.
\end{remarque}
\begin{proof}[\textbf{Proof of theorem \ref{ThmSolutionDicret}}]
The existence follows from Theorem \ref{T1}, Proposition
\ref{propoHdiscret} and Proposition \ref{propo523} points 3., 4. and 5.\\
Uniqueness follows exactly as in the proof of Proposition 2.5 of
 \cite{Ka06}: in our case Lemma \ref{lemme521}
 gives $$Var[e^{\Delta X_n}-1]=m(2,n)-m(1,n)^2.$$

\end{proof}

\subsection{The Hedging Error}\label{subsecHerr}

The hedging error is given by
 Theorem \ref{T1} since the mean-tradeoff process is deterministic.
\begin{thm}\label{thmErrorHedgingDiscret}
We suppose the validity of Assumptions \ref{HypD1} and  \ref{HypD2}.
The variance of the hedging error in Theorem \ref{ThmSolutionDicret}
equals
\begin{eqnarray}
\label{eq:J0}J_0= 
\int_\C\int_\C J_0(y,z)\Pi(dy)\Pi(dz)\ ,
\end{eqnarray}
with
\begin{equation}\label{eq:J02}J_0(y,z)= 
\left \{ 
\begin{array}{ccc}
s_0^{y+z} \sum_{k=1}^N  b(y,z;k)h(z,k)h(y,k)
\prod_{\ell =2}^k m(y+z, \ell -1)
\prod_{j=k+1}^Na(j) &:& y, z \in {\rm supp \pi} \\
0 &:& {\rm otherwise}
\end{array}
\right.
\end{equation}
where
\begin{eqnarray*}a(j)=\frac{m(2,j)-m(1,j)^2}{m(2,j)-2m(1,j)+1}\end{eqnarray*}
and 
\begin{equation} \label{FORMb}
b(y,z;k) = \frac{\rho(y,z;k) \rho(1,1;k) - \rho(y,1;k) \rho(z,1;k)}{\rho(1,1;k)}\ ,
 \end{equation}
where $ \rho(y,z;k) = m(y+z,k) - m(y,k)m(z,k)$,
$y, z \in {\rm supp \Pi}$.
\end{thm}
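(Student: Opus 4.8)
The plan is to combine Theorem~\ref{T1}~2. with the integral representation of the FS-decomposition obtained in Proposition~\ref{propoHdiscret}, exactly as Kallsen--Vierthauer do in the stationary case. By Theorem~\ref{T1}~2., since $\mathcal F_0$ is trivial and the mean-variance tradeoff is deterministic (Proposition~\ref{propo523}~5.), the hedging error is
\begin{eqnarray*}
J_0 = \sum_{k=1}^N \mathbb E[(\Delta L^H_k)^2]\prod_{j=k+1}^N(1-\lambda_j\Delta A_j).
\end{eqnarray*}
First I would compute the product term: by Proposition~\ref{propo523}~1. and~4., $1-\lambda_j\Delta A_j = 1 - \frac{(m(1,j)-1)^2}{m(2,j)-2m(1,j)+1} = \frac{m(2,j)-m(1,j)^2}{m(2,j)-2m(1,j)+1} = a(j)$, which identifies the factor $\prod_{j=k+1}^N a(j)$ appearing in~\eqref{eq:J02}.

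Next I would expand $\mathbb E[(\Delta L^H_k)^2]$. Using $L^H_k = \int_\C L(z)_k\,\Pi(dz)$ from Proposition~\ref{propoHdiscret} and Fubini (justified by Lemma~\ref{remarque514}~3. together with the finiteness of $\Pi$), we get $\mathbb E[(\Delta L^H_k)^2] = \int_\C\int_\C \mathbb E[\Delta L(y)_k\,\Delta L(z)_k]\,\Pi(dy)\Pi(dz)$. So the core computation is the \emph{cross term} $\mathbb E[\Delta L(y)_k\Delta L(z)_k]$ for $y,z\in D$ with $y+z\in D$ (guaranteed on $\mathrm{supp}\,\Pi$ by Assumption~\ref{HypD2} and Remark~\ref{RD2}~1.). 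Starting from the explicit expression~\eqref{DeltaLz} for $\Delta L(z)_k$, I would write $\Delta L(y)_k\Delta L(z)_k = S_{k-1}^{y+z}\Phi_y(\Delta X_k)\Phi_z(\Delta X_k)$ where $\Phi_z(x)=h(z,k)e^{zx}-h(z,k-1)-g(z,k)h(z,k)(e^x-1)$, then take expectations: $S_{k-1}^{y+z}$ is $\mathcal F_{k-1}$-measurable with $\mathbb E[S_{k-1}^{y+z}] = s_0^{y+z}\prod_{\ell=1}^{k-1}m(y+z,\ell) = s_0^{y+z}\prod_{\ell=2}^k m(y+z,\ell-1)$ by Remark~\ref{SNZ}, and $\Delta X_k$ is independent of $\mathcal F_{k-1}$. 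Expanding $\mathbb E[\Phi_y(\Delta X_k)\Phi_z(\Delta X_k)]$ in terms of $m(\cdot,k)$, then substituting the backward relation~\eqref{G4} for $h(\cdot,k-1)$ and the formula~\eqref{G5} for $g(\cdot,k)$ to eliminate $h(y,k-1),h(z,k-1)$, should collapse the expression to $h(y,k)h(z,k)\,b(y,z;k)$ with $b$ as in~\eqref{FORMb}; the symmetric bilinear structure $\rho(y,z;k)=m(y+z,k)-m(y,k)m(z,k)$ will emerge naturally since each $\Phi$ is, after using the martingale property $\mathbb E[\Phi_z(\Delta X_k)]=0$, a "centered" object. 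Assembling the three pieces gives precisely~\eqref{eq:J02}, and the outer double integral gives~\eqref{eq:J0}. Finally I would note that when $y$ or $z$ lies outside $\mathrm{supp}\,\Pi$ the contribution is irrelevant (and consistently, by Remark~\ref{remarque311bis}~2., the relevant quantities vanish outside $D$), matching the second line of~\eqref{eq:J02}.

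The main obstacle is the bookkeeping in the cross-term computation: one must carefully expand a product of two four-term expressions, take the conditional expectation turning exponentials $e^{(\cdot)\Delta X_k}$ into the appropriate $m(\cdot,k)$, and then verify algebraically that after inserting~\eqref{G5} for both $g(y,k)$ and $g(z,k)$ the whole thing telescopes into the compact form $h(y,k)h(z,k)b(y,z;k)$ with the stated $\rho$-expression. This is routine but error-prone; the key structural simplification to exploit is that $\mathbb E[\Phi_z(\Delta X_k)]=0$ for each $z$ (which is exactly the martingale property proved in Proposition~\ref{lemme24}), so only the "variance/covariance" part survives, which is what produces the determinant-like numerator $\rho(y,z;k)\rho(1,1;k)-\rho(y,1;k)\rho(z,1;k)$ in~\eqref{FORMb}. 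A minor additional point is to double-check the index shift $\prod_{\ell=2}^k m(y+z,\ell-1) = \prod_{\ell=1}^{k-1}m(y+z,\ell)$ and that the $k=1$ term is handled by the empty-product convention, consistent with $H(z)_0$ being $\mathcal F_0$-measurable.
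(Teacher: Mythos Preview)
Your proposal is correct and follows essentially the same route as the paper: start from Theorem~\ref{T1}~2., identify $1-\lambda_j\Delta A_j=a(j)$ via Proposition~\ref{propo523}, push the double $\Pi$-integral through $\mathbb E[(\Delta L^H_k)^2]$ by Fubini, and compute the cross term $\mathbb E[\Delta L(y)_k\Delta L(z)_k]$ from~\eqref{DeltaLz} by expanding, substituting the backward relation~\eqref{G4} for $h(\cdot,k-1)$, and rewriting $g(\cdot,k)=\rho(\cdot,1;k)/\rho(1,1;k)$ to reach $h(y,k)h(z,k)b(y,z;k)$. The only cosmetic slip is that each $\Phi_z$ has three terms, not four; otherwise your outline, including the index shift for $\mathbb E[S_{k-1}^{y+z}]$ and the empty-product convention at $k=1$, matches the paper exactly.
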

\begin{remarque} \label{RRho}
The function $\rho$ above plays an analogous role to the
complex valued function with the same name introduced in \cite{GOR}
at Definition 4.3 in the continuous time framework.
\end{remarque}
\begin{proof}
We proceed again similarly to the proof of theorem 2.1 of \cite{Ka06}. 
Theorem \ref{T1} gives that the hedging error is given by
\begin{eqnarray} \label{ExprJ0}
J_0 = \sum_{k=1}^N\mathbb{E}[(\Delta L_k^H)^2]
\prod_{j=k+1}^N(1-\lambda_j\Delta A_j)\ .\end{eqnarray} 
Proposition \ref{propo523} gives
\begin{eqnarray} \label{EDelta9}
\Delta A_j &=& \mathbb{E}[\Delta S_j|\mathcal{F}_{j-1}]=(m(1,j)-1)S_{j-1} 
\nonumber\\
&& \\
\lambda_j &=& \frac{1}{S_{j-1}}\frac{m(1,j)-1}{m(2,j)-2m(1,j)+1}, \nonumber
\end{eqnarray}
so 
\begin{equation} \label{EDelta7}
 1 - \lambda_j \Delta A_j = a(j),
\end{equation} 
 and it remains to calculate
$\mathbb{E}[(\Delta L_k^H)^2]$.
Since
\begin{eqnarray*}\Delta L_k^H  =  \int_\mathbb{C} 
\Delta L(z)_k\Pi(dz)\end{eqnarray*}
we have
\begin{equation} \label{EDelta8}
(\Delta L_k^H)^2  =  \int_\mathbb{C}\int_\mathbb{C}
 \Delta L(y)_k \Delta L(z)_k\Pi(dy)\Pi(dz)\end{equation}
and hence by Fubini's Theorem
\begin{eqnarray*}\mathbb{E}[(\Delta L_k^H)^2]  = 
 \int_\mathbb{C}\int_\mathbb{C} \mathbb{E}[\Delta L(y)_k \Delta L(z)_k]
\Pi(dy)\Pi(dz).\end{eqnarray*}
Relation (\ref{DeltaLz}) says that
\begin{eqnarray*}\Delta L(z)_k &=& S_{k-1}^{y+z}
\left[h(y,k)e^{y\Delta X_k}-h(y,k-1)-g(y,k)h(y,k)
(e^{\Delta X_k}-1)\right] \\
&& \left[h(z,k)e^{z\Delta X_k}-h(z,k-1)-g(z,k)h(z,k)
(e^{\Delta X_k}-1)\right].
\end{eqnarray*}
Taking the expectation we obtain
\begin{eqnarray*}
\mathbb{E}[\Delta L(y)_k\Delta L(z)_k] &=& \mathbb{E}[S_{k-1}^{y+z}]
\{(h(z,k)h(y,k)m(y+z,k)-h(z,k)h(y,k-1)m(z,k)\\
&-&h(z,k)h(y,k)g(y,k)\mathbb{E}[e^{z\Delta X_k}(e^{\Delta X_k}-1)]-h(z,k-1)h(y,k)m(y,k)\\
&+&h(z,k-1)h(y,k-1)+h(z,k-1)h(y,k)g(y,k)\mathbb{E}[e^{\Delta X_k}-1]\\
&-&h(z,k)h(y,k)g(z,k)\mathbb{E}[e^{y\Delta X_k}(e^{\Delta X_k}-1)]+h(z,k)h(y,k-1)g(z,k)\mathbb{E}[e^{\Delta X_k}-1]\\
&+&h(z,k)h(y,k)g(z,k)g(y,k)\mathbb{E}[(e^{\Delta X_k}-1)^2]\}.
\end{eqnarray*}
Recalling that $\mathbb{E}[(e^{\Delta X_k}-1)^2]=m(2,k)-2m(1,k)+1$ and $\mathbb{E}[e^{\Delta X_k}-1]=m(1,k)-1$, we obtain
\begin{eqnarray} \label{ExprDelta}
\mathbb{E}[\Delta L(y)_k\Delta L(z)_k] &=&
\mathbb{E}[S_{k-1}^{y+z}]\{(h(z,k)h(y,k)m(y+z,k)-h(z,k)h(y,k-1)m(z,k) \nonumber\\
&-&h(z,k)h(y,k)g(y,k)(m(z+1,k)-m(z,k))-h(z,k-1)h(y,k)m(y,k) \nonumber\\
&+&h(z,k-1)h(y,k-1)+h(z,k-1)h(y,k)g(y,k)(m(1,k)-1) \nonumber\\
&& \\
&-&h(z,k)h(y,k)g(z,n)(m(y+1,k)-m(y,k)) \nonumber\\
&+&h(z,k)h(y,k-1)g(z,k)(m(1,n)-1) \nonumber \\
&+&h(z,k)h(y,k)g(z,k)g(y,k)(m(2,k)-2m(1,k)+1)\}. \nonumber
\end{eqnarray}
By Proposition \ref{lemme24} we have for $x=y$ or $z$ that
\begin{eqnarray} \label{Exprk}
h(x,k-1)&=&h(x,k)[m(x,k)-g(x,k)(m(1,k)-1)].
\end{eqnarray}
We replace the right-hand sides of \eqref{Exprk} in \eqref{ExprDelta}
 and we factorize  by $h(z,k)h(y,k)$. Finally, after simplification
 we obtain
\begin{eqnarray*}
\mathbb{E}[\Delta L(y)_k\Delta L(z)_k] &=& 
\mathbb{E}[S_{k-1}^{y+z}]h(z,k)h(y,k)\{m(y+z,k)\\
&-&m(z,k)m(y,k)+m(z,k)g(y,k)m(1,k)+m(y,k)g(z,k)m(1,k)\\
&-&g(y,k)m(z+1,k)-g(z,k)m(y+1,k)\\
&-&g(z,k)g(y,k)[m(1,k)-1]^2\\
&+&g(z,k)g(y,k)[m(2,k)-2m(1,k)+1]\}.
\end{eqnarray*}
Hence,
\begin{equation} \label{EDelta10}
\mathbb{E}[\Delta L(y)_k\Delta L(z)_k]
 = \mathbb{E}[S_{k-1}^{y+z}]h(z,k)h(y,k) \tilde b(y,z;k),
\end{equation}
where
 \begin{equation} \label{EDelta11}
\mathbb{E}[S_{k-1}^{y+z}]=
s_0^{y+z}\mathbb{E}[e^{(y+z)\Delta X_{k-1}}]=s_0^{y+z}
\prod_{\ell =2}^k m(y+z, \ell -1)
\end{equation}
and
 \begin{eqnarray*} \tilde b(y,z;k)&=&\{m(y+z,k)-m(z,k)m(y,k)-
g(y,k)m(z+1,k)-g(z,k)m(y+1,k)\\
&+&m(z,k)g(y,k)m(1,k)+m(y,k)g(z,k)m(1,k)\\
&-&g(z,k)g(y,k)m(1,k)^2+g(z,k)g(y,k)m(2,k)\}.
\end{eqnarray*}
We observe that
 \begin{equation} \label{EDelta12}
\tilde b(y,z;k)=\rho(y,z;k)-g(y,k)\rho(z,1;k)
 -g(z,k)\rho(y,1;k)+g(y,k)g(z,k)\rho(1,1;k).\end{equation}
Since, for $x=y$ or $z$ 
 \begin{eqnarray*}g(x,k)=\frac{\rho(x,1;k)}{\rho(1,1;k)} \end{eqnarray*}
it follows that $\tilde b(y,z;k)= b(y,z;k)$.
Finally, \eqref{EDelta9}, \eqref{EDelta7},
\eqref{EDelta8}, \eqref{EDelta10},
 \eqref{EDelta11} and \eqref{EDelta12}
give
\begin{eqnarray*}J_0(y,z) &=& s_0^{y+z} 
\sum_{k=1}^N   b(y,z;k) h(z,k)h(y,k)
\prod_{\ell =2}^k m(y+z, \ell -1) 
\prod_{j=k+1}^N\left(1-\lambda_j\Delta A_j\right)\\
&=&s_0^{y+z}  \sum_{k=1}^N   b(y,z;k)h(z,k)h(y,k)
\prod_{\ell =2}^k m(y+z, \ell -1)
\prod_{j=k+1}^N a(j).
\end{eqnarray*}
\end{proof}
From the expression of the variance of the
 hedging error  ~\eqref{eq:J02}, we can derive a sort of criterion for completeness for market asset pricing 
models. More precisely, the condition
\begin{equation}
\label{eq:complete}
b(y,z;k)=0\ ,\quad\textrm{for all}\ y,z\in D\ \textrm{and}\ k\in \{1,\cdots N\}
\end{equation}
characterizes the prices models that are exponential of PII  for which 
every payoff (that can be written as an inverse Laplace transform) can be hedged. 
In the specific case of a Binomial (even inhomogeneous) model, we retrieve the fact that $J_0(y,z)\equiv 0$ and so $J_0=0$.
In fact, that model is complete.
\begin{propo}\label{PBinomial}
Let $a,b \in \R$, $X_k=a$ with probability $p_k$ and $X_k=b$ with probability $(1-p_k)$. Then $J_0(y,z)\equiv 0$ for every $y,z \in \frac{D}{2}$.
\end{propo}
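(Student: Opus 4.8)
The plan is to prove that the inner kernel $b(y,z;k)$ appearing in \eqref{FORMb} vanishes identically for every $k\in\{1,\dots,N\}$ and all $y,z$; once this is established, formula \eqref{eq:J02} gives $J_0(y,z)\equiv 0$, and then \eqref{eq:J0} yields $J_0=0$. Observe first that for a two‑point law $e^{z\Delta X_k}$ is integrable for every $z\in\C$, so here $D=\C$ (hence $\frac{D}{2}=\C$) and there is nothing to verify about the domain.

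First I would compute the discrete cumulant generating function explicitly: since $\Delta X_k$ equals $a$ with probability $p_k$ and $b$ with probability $1-p_k$,
$$ m(z,k)=\mathbb{E}[e^{z\Delta X_k}]=p_k e^{za}+(1-p_k)e^{zb},\qquad z\in\C . $$
Then I would substitute this into $\rho(y,z;k)=m(y+z,k)-m(y,k)m(z,k)$ and carry out the short algebraic simplification. The crucial observation is that the cross terms collapse and $\rho$ factorizes as a rank‑one expression:
$$ \rho(y,z;k)=p_k(1-p_k)\,\bigl(e^{ya}-e^{yb}\bigr)\bigl(e^{za}-e^{zb}\bigr). $$
In particular $\rho(1,1;k)=p_k(1-p_k)(e^{a}-e^{b})^2$, which is strictly positive under Assumption \ref{HypD1} (then $a\neq b$ and $p_k\in\;]0,1[$), so $b(y,z;k)$ is well defined.

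Finally, plugging the factorized form into the numerator of \eqref{FORMb} I would note
$$ \rho(y,z;k)\,\rho(1,1;k)=p_k^2(1-p_k)^2(e^{ya}-e^{yb})(e^{za}-e^{zb})(e^{a}-e^{b})^2=\rho(y,1;k)\,\rho(z,1;k), $$
so the numerator of $b(y,z;k)$ is zero, i.e. $b(y,z;k)=0$ for every $k$. By \eqref{eq:J02} this forces $J_0(y,z)=0$ for all $y,z\in\frac{D}{2}$, which is the assertion (and integrating against $\Pi$ recovers $J_0=0$, consistent with the completeness of the binomial model). There is essentially no serious obstacle: the only point needing a little care is the rank‑one factorization $\rho(y,z;k)=c_k\,u(y)u(z)$ with $c_k=p_k(1-p_k)$ and $u(x)=e^{xa}-e^{xb}$, after which the vanishing of $b$ is the trivial identity $c_k u(y)u(z)\cdot c_k u(1)^2=c_k u(y)u(1)\cdot c_k u(z)u(1)$. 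Alternatively one could bypass $b$ and argue directly from the expression for $\tilde b$ in \eqref{EDelta12} together with $g(x,k)=\rho(x,1;k)/\rho(1,1;k)$, which makes the cancellation even more transparent.
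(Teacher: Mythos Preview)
Your proof is correct and follows essentially the same route as the paper: compute $m(z,k)$ for the two-point law, show that $\rho(y,z;k)$ factorizes as $p_k(1-p_k)\,u(y)u(z)$, and conclude that the numerator $\rho(y,z;k)\rho(1,1;k)-\rho(y,1;k)\rho(z,1;k)$ vanishes, hence $b(y,z;k)=0$ and $J_0(y,z)\equiv 0$. Your factorization $\rho(y,z;k)=p_k(1-p_k)(e^{ya}-e^{yb})(e^{za}-e^{zb})$ is in fact the correct one; the paper's displayed computation contains a sign slip (it writes $+$ where there should be $-$), but the argument and conclusion are identical.
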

\begin{proof}
Writing $p=p_k$, $k \in \{0,1,\cdots,N\}$, we have
\begin{eqnarray*}\rho(y,z;k)&=&pe^{a(y+z)}+(1-p)e^{b(y+z)}-(pe^{ay}+(1-p)e^{by})(pe^{az}+(1-p)e^{bz})\\
&=&p(1-p)\left(e^{a(y+z)}+e^{b(y+z)}+e^{by+az}+e^{bz+ay}\right)\\
&=&p(1-p)\left(e^{ay}+e^{by}\right)\left(e^{az}+e^{bz}\right).\end{eqnarray*}
So $\rho(y,z;k)\rho(1,1;k)=p^2(1-p)^2\left(e^{ay}+e^{by}\right)\left(e^{az}+e^{bz}\right)\left(e^{a}+e^{b}\right)^2$.
On the other hand, this obviously equals $\rho(y,1;k)\rho(z,1;k)$.
\end{proof}
If $X$ is a  process with stationary  and independent increments
we reobtain the result of \cite{Ka06}].
\begin{propo}\label{rem44}
Let$(X_k)$ be a process with stationary increments.
We denote 
\begin{eqnarray*}
m(y) &:=& \E(\exp(y X_1))   \\
\shh(y) &:=& m(y) - \frac{m(1)-1}{m(2)-m(1)^2} (m(y+1) - m(1) m(y))\\
 a&:=& \frac{m(2)-m(1)^2}{m(2) - 2m(1) + 1}.
\end{eqnarray*}
Then 
\begin{eqnarray*}J_0=\int_\C\int_\C J_0(y,z)\Pi(dy)\Pi(dz)\end{eqnarray*}
with
\begin{equation} J_0(y,z)=
\left \{ 
\begin{array}{ccc}
 s_0^{y+z}  \beta(y,z)\frac{a(y,z)^N-m(y+z)^N}{a(y,z)-m(y+z)}, & if & a(y,z) 
\neq m(y+z)\\
s_0^{y+z} \beta(y,z)Nm(y+z)^{N-1} &if & {\rm a(y,z) = m(y+z)}
\end{array},
\right.
\end{equation}
where
\begin{eqnarray*}
a(y,z)&=& a \shh(y) \shh(z), \\
 \beta(y,z)&=& m(y+z)-\frac{m(2)m(y)m(z)-m(1)m(y+1)m(z)-
m(1)m(y)m(z+1)+m(y+1)m(z+1)}{m(2)-m(1)^2}.\end{eqnarray*}
 \end{propo}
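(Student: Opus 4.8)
The plan is to specialize the general formula for $J_0(y,z)$ established in Theorem~\ref{thmErrorHedgingDiscret} to the stationary case and then evaluate the resulting geometric-type sum explicitly. First I would invoke Remark~\ref{remarque513}: since $(X_k)$ has stationary increments, $m(z,k)=m(z)$ for all $k$, so $g(z,k)=g(z)$ and $h(z,k)=h(z)^{N-k}$ where $h(z)=\shh(z)$ in the notation of the statement. Likewise $\rho(y,z;k)=\rho(y,z)$ is independent of $k$, hence $b(y,z;k)=b(y,z)=:\beta(y,z)$ is independent of $k$; one checks directly that this $b(y,z)$ coincides with the $\beta(y,z)$ given in the statement (expand $\rho(y,z)=m(y+z)-m(y)m(z)$, $\rho(y,1)=m(y+1)-m(1)m(y)$ etc. in \eqref{FORMb} and collect terms over the common denominator $m(2)-m(1)^2$). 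Similarly $a(j)=a$ for all $j$, with $a$ as defined in the statement.

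Next I would substitute these simplifications into \eqref{eq:J02}. The product $\prod_{\ell=2}^k m(y+z,\ell-1)$ becomes $m(y+z)^{k-1}$, the product $\prod_{j=k+1}^N a(j)$ becomes $a^{N-k}$, and $h(z,k)h(y,k)=h(z)^{N-k}h(y)^{N-k}=(\shh(y)\shh(z))^{N-k}$. Therefore
\begin{eqnarray*}
J_0(y,z)&=&s_0^{y+z}\,\beta(y,z)\sum_{k=1}^N m(y+z)^{k-1}\big(a\,\shh(y)\shh(z)\big)^{N-k}\\
&=&s_0^{y+z}\,\beta(y,z)\sum_{k=1}^N m(y+z)^{k-1}\,a(y,z)^{N-k},
\end{eqnarray*}
with $a(y,z)=a\,\shh(y)\shh(z)$. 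The remaining sum $\sum_{k=1}^N m(y+z)^{k-1}a(y,z)^{N-k}$ is a standard finite geometric-type sum: writing it as $\sum_{j=0}^{N-1} m(y+z)^j a(y,z)^{N-1-j}$, it equals $\frac{a(y,z)^N-m(y+z)^N}{a(y,z)-m(y+z)}$ when $a(y,z)\neq m(y+z)$, and $N\,m(y+z)^{N-1}$ when $a(y,z)=m(y+z)$. This yields exactly the two-case expression in the statement.

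Finally, the claimed formula $J_0=\int_\C\int_\C J_0(y,z)\Pi(dy)\Pi(dz)$ is immediate from \eqref{eq:J0} of Theorem~\ref{thmErrorHedgingDiscret}, which holds under Assumptions~\ref{HypD1} and~\ref{HypD2}; note that stationarity is harmless here and the integrability guaranteeing Fubini is already covered by Lemma~\ref{remarque514}. I do not expect any genuine obstacle: the only mildly delicate point is the bookkeeping identity $b(y,z)=\beta(y,z)$, i.e.\ checking that the general factor \eqref{FORMb} reduces to the $\beta$ written in the proposition, and keeping the index shift in the geometric sum straight so that the exponents $k-1$ and $N-k$ are matched correctly. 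Both are routine algebraic verifications rather than conceptual steps, so the proof is essentially a specialization of Theorem~\ref{thmErrorHedgingDiscret}.
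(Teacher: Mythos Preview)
Your proposal is correct and follows essentially the same route as the paper's own proof: specialize the general formula \eqref{eq:J02} of Theorem~\ref{thmErrorHedgingDiscret} via Remark~\ref{remarque513}, identify $b(y,z;k)$ with the stated $\beta(y,z)$ and $a(j)$ with $a$, and then sum the resulting geometric series. Your write-up is in fact slightly more careful than the paper's on the two bookkeeping points you flag (the algebraic check $b(y,z)=\beta(y,z)$ and the index shift in the geometric sum).
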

\begin{proof} \
We observe that for $k \in \{0,\cdots,N\}$, we have
\begin{equation*}m(y+z,k)=m(y+z) , \quad h(y,k)=\shh(y)^{N-k}  \quad \textrm{and} \quad h(z,k)=\shh(z)^{N-k}
\end{equation*}
So
\begin{eqnarray*}\prod_{j=k+1}^Na(j)=\left(\frac{m(2)-m(1)^2}{m(2)-2m(1)+1}
\right)^{N-k}=a^{N-k}. \end{eqnarray*}
Consequently, expression \eqref{eq:J02} for $y,z\in supp(\Pi)$,
\begin{eqnarray*}J_0(y,z)=s_0^{y+z}\beta(y,z)\sum_{k=1}^{N}m(y+z)^{k-1}\left(\shh(y)\shh(z)a\right)^{N-k}\end{eqnarray*}
\begin{equation}J_0(y,z)= 
\left \{ 
\begin{array}{ccc}
  s_0^{y+z}\beta(y,z)\frac{(m(y+z)-\shh(y)\shh(z)a)^N}{m(y+z)-a
\shh(y)\shh(z)} & {\rm if}& m(y+z)\neq a\shh(y)\shh(z) \\
s_0^{y+z}\beta(y,z)Nm(y+z)^{N-1} & {\rm if}& m(y+z)=a\shh(y)\shh(z)
\end{array}.
\right.
\end{equation}
This concludes the proof of the proposition.
\end{proof}

\section{Numerical results}\label{secSimu}
As announced in the introduction, we will now apply the quasi-explicit
 formulae derived in previous sections to measure the impact of the choice
 of the  rebalancing dates on the hedging error. 
We will consider two cases that motivated the present work: 
\begin{enumerate}
	\item the underlying continuous time log-price model has stationary increments but the payoff to hedge
 is irregular, such as  a \textbf{Digital call},  so that, as shown in~\cite{Gei02,GM09}, 
 hedging near the maturity can improve the hedge; 
	\item the payoff is regular (e.g. classical call) but
 the underlying continuous time model shows a volatility term structure
  which is exponentially increasing near the maturity, such as
 \textbf{electricity forward prices}.
For this reason it seems  again judicious to 
 hedge more frequently near the maturity, where the volatility accelerates. 
\end{enumerate}

\subsection{The case of a Digital option}

We consider the problem of hedging and pricing a Digital call, with payoff $f(s)=\mathbf{1}_{[K,\infty)}(s)$
of maturity $T > 0$. 
From~(35) in~\cite{Ka06}, the payoff of this option can be expressed as
\begin{eqnarray}
f(s)=\lim_{c\rightarrow\infty}\frac{1}{2\pi i}\int_{R-ic}^{R+ic}s^z\frac{K^{-z}}{z}dz\label{Digital1}\ ,
\end{eqnarray}
for an arbitrary $R>0$.
 This implies that the complex measure $\Pi$ is formally given by
\begin{eqnarray}
\Pi(dz)=\frac{1}{2\pi i}\frac{K^{-z}}{z} \delta_R(dRe (z)) d(i (Im (z))).
\end{eqnarray}
However, such measure is only $\sigma$-finite so that application 
of Theorem~\ref{ThmSolutionDicret} is not rigorously valid. Nevertheless, 
using improper integrals one is able to recover an exploitable
form for applications.

\begin{propo}\label{PDigital}
Let $f(s)=1_{[K,\infty[}(s)$. We suppose again the validity of Assumption \ref{HypD1} and \ref{HypD2}. Let $R>0$.
\begin{enumerate}
\item The FS-decomposition of the contingent claim $H=f(S_N)$ is given by 
\begin{equation*} 
\left \{ 
\begin{array}{ccl}
H_n &=&  H_0+\sum_{k=1}^{n}\xi_k^H\Delta S_k + L^H_n \\
H_N &=& H 
\end{array}
\right.
\end{equation*}
 where
\begin{eqnarray}H_n&=&\lim_{\ell\rightarrow \infty}\int_{R+i[-\ell,\ell]} H(z)_n \Pi(dz)\\
\xi^H_n  &=&\lim_{\ell \rightarrow \infty}\int_{R+i[-\ell,\ell]}  \xi(z)_n \Pi(dz)\\
L^H_n&=&\int_\mathbb{C} L (z)_n\Pi(dz)=H_n-H_0-\sum_{k=1}^{n}
\xi_k^H \Delta S_k, 
\end{eqnarray}
according to the same notations as in Proposition \ref{lemme24}
and Remark \ref{remarque311bis}.
\item The solution to the minimization problem is still given by Theorem \ref{ThmSolutionDicret}.
\item The variance of the hedging error is given by
$$
\lim_{\ell\rightarrow \infty}\int_\C\int_\C J_0(y,z)\Pi_\ell(dy)\Pi_\ell(dz),
$$
where for each $\ell > 0$, $ \Pi_\ell$ is the finite
complex measure defined by 
$\Pi_\ell(B)=\Pi(B\bigcap\left(R+i[-\ell,\ell]\right))$
for a Borel set $B \subset \C$. 
\end{enumerate}
\end{propo}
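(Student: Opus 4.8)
The plan is to reduce everything to the already-established results for finite complex measures (Propositions \ref{propoHdiscret} and \ref{ThmSolutionDicret} and Theorem \ref{thmErrorHedgingDiscret}) by a truncation argument, using the uniform bounds collected in Lemma \ref{remarque514} and Remark \ref{RD2} to pass to the limit $\ell \to \infty$. First I would fix $R>0$ and introduce, for each $\ell>0$, the finite complex measure $\Pi_\ell$ obtained by restricting $\Pi$ to the horizontal segment $R+i[-\ell,\ell]$; note that here $I_0 = \mathrm{supp}\,\Pi \cap \R = \{R\}$, so Assumption \ref{HypD2} holds as soon as $2R \in D$, and $\Pi_\ell$ has finite total variation because $|K^{-z}/z| = K^{-R}/|z|$ is integrable in $\mathrm{Im}(z)$ over a bounded segment away from $z=0$. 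For each $\ell$, Proposition \ref{propoHdiscret} applies to the payoff $f_\ell(S_N) := \int_\C S_N^z\,\Pi_\ell(dz)$, giving a genuine FS-decomposition with processes $H_n^{(\ell)} = \int H(z)_n\Pi_\ell(dz)$, $\xi_n^{H,(\ell)} = \int \xi(z)_n\Pi_\ell(dz)$, $L_n^{H,(\ell)} = \int L(z)_n\Pi_\ell(dz)$.

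Next I would establish the $\shl^2(\Omega)$-convergence $H_n^{(\ell)} \to H_n$, $\xi_n^{H,(\ell)}\Delta S_n \to \xi_n^H\Delta S_n$ and $\Delta L_n^{H,(\ell)} \to \Delta L_n^H$ as $\ell \to \infty$. This is where the main work lies. The pointwise convergence of the improper integrals $\int_{R+i[-\ell,\ell]} s^z \frac{K^{-z}}{z}dz$ to $2\pi i f(s)$ comes from \eqref{Digital1} (equation (35) of \cite{Ka06}); the issue is to upgrade this to $\shl^2$-convergence of the associated processes. For the $H_n$ part, since $|H(z)_n| = |h(z,n)|\,|S_n^z| = |h(z,n)| S_n^R$ with $\sup_{z\in\C}|h(z,n)|<\infty$ by Remark \ref{RD2} 4., and since $S_n^R\in\shl^2$, the integrands $H(z)_n\frac{K^{-z}}{z}$ are dominated in $\shl^2(\Omega)$-norm by $C|z|^{-1}S_n^R$, which is integrable over $R+i\R$; an $\shl^2$-valued dominated convergence argument then gives the claim. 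The same scheme works for $\xi_n^H$ (bounding $|\xi(z)_n|^2(\Delta S_n)^2$ via Lemma \ref{remarque514} 2., whose bound is uniform in $z\in\C$ and hence gives an $\shl^2(\Omega)$-norm domination by $C|z|^{-1}$) and for $\Delta L(z)_n$ (Lemma \ref{remarque514} 3.). Passing to the limit in the three martingale identities — $L^{H,(\ell)}$ a martingale, $L^{H,(\ell)}M$ a martingale, and the decomposition identity itself — then yields that $(H_0,\xi^H,L^H)$ is a bona fide discrete FS-decomposition of $H=f(S_N)$, which is Point 1. The expressions with $\lim_{\ell\to\infty}\int_{R+i[-\ell,\ell]}$ in the statement are exactly the $\shl^2$-limits just obtained.

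Point 2 is then immediate: once we know $H$ admits the discrete real FS-decomposition from Point 1 and that Proposition \ref{propo523} (points 3, 4, 5) applies (condition (ND), the formula for $\lambda_n$, and determinism of the mean-variance tradeoff all hold under Assumptions \ref{HypD1}, being properties of the exponential-PII structure $S_n = s_0 e^{X_n}$ alone, not of $\Pi$), Theorem \ref{T1} gives the optimal $(V_0^*,\varphi^*)$ in the stated form, which is precisely the content of Theorem \ref{ThmSolutionDicret}. For Point 3, Theorem \ref{T1} Point 2 gives the hedging error $J_0 = \sum_{k=1}^N \E[(\Delta L_k^H)^2]\prod_{j=k+1}^N(1-\lambda_j\Delta A_j)$; since $\Delta L_k^{H,(\ell)} \to \Delta L_k^H$ in $\shl^2(\Omega)$, we have $\E[(\Delta L_k^{H,(\ell)})^2] \to \E[(\Delta L_k^H)^2]$, and by Theorem \ref{thmErrorHedgingDiscret} applied to the finite measure $\Pi_\ell$ we have $\E[(\Delta L_k^{H,(\ell)})^2]\prod_{j>k}(1-\lambda_j\Delta A_j)$ expressed as $\int_\C\int_\C(\text{$k$-th summand of }J_0(y,z))\,\Pi_\ell(dy)\Pi_\ell(dz)$; summing over $k$ and letting $\ell\to\infty$ gives $J_0 = \lim_{\ell\to\infty}\int_\C\int_\C J_0(y,z)\,\Pi_\ell(dy)\Pi_\ell(dz)$. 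The main obstacle throughout is the careful justification of the $\shl^2$-valued dominated convergence — in particular checking that the dominating functions $C|z|^{-1}(\text{something in }\shl^2(\Omega))$ are genuinely integrable along the vertical line $R+i\R$ near and far from the truncation, which is why the argument is phrased in terms of improper rather than Lebesgue integrals.
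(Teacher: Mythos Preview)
Your truncation-and-limit scheme is the right overall architecture and matches the paper's, but there is a genuine gap in the step where you justify $\shl^2$-convergence of the individual components $H_n^{(\ell)}$, $\xi_n^{H,(\ell)}\Delta S_n$, $\Delta L_n^{H,(\ell)}$. You write that the integrands are dominated in $\shl^2(\Omega)$-norm by $C|z|^{-1}S_n^R$ ``which is integrable over $R+i\R$''; but $\int_{-\infty}^{\infty}\frac{dt}{\sqrt{R^2+t^2}}=\infty$, so this domination is useless for a Bochner-type dominated convergence theorem. You notice the difficulty in your last paragraph, yet the suggested escape (``this is why the argument is phrased in terms of improper rather than Lebesgue integrals'') does not repair it: the improper integral \eqref{Digital1} converges only conditionally, so pointwise existence of the limit says nothing about $\shl^2$-convergence of the vector-valued integrals $\int_{R+i[-\ell,\ell]} H(z)_n\,\Pi(dz)$ without further work. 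Since $h(z,n)$ genuinely depends on $z$, you also cannot factor it out and reduce to the scalar improper integral for $f$.

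The paper avoids this problem entirely by reversing the order of the argument. Instead of proving convergence of each FS-component directly, it first proves only the single scalar statement $H^{\ell}=f_\ell(S_N)\to H=f(S_N)$ in $\shl^2(\Omega)$: following Lemma~4.2 of \cite{Ka06} one has a uniform pointwise bound $|f(s)-f_\ell(s)|\le u\,s^R$, and then ordinary (real-valued) dominated convergence with dominator $u^2 S_N^{2R}\in \shl^1$ gives $\E[(H^\ell-H)^2]\to 0$. The convergence of \emph{all} FS-components then follows at once from the abstract stability result Proposition~\ref{AP2} (Appendix~A), which states that $\shl^2$-convergence of the terminal claims forces convergence of $H_0^\ell$, $\xi^\ell_n$ (in probability, indeed $\E[(\xi_n^\ell-\xi_n^\infty)^2(\Delta S_n)^2]\to 0$), and $L_N^\ell$. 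This is the missing idea in your plan: rather than dominating each $z$-integral separately, one uses the structural rigidity of the FS-decomposition under $\shl^2$-limits. Your Points~2 and~3 are then handled as you describe.
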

\begin{proof}
We proceed similarly as in Lemma 4.2 of \cite{Ka06}. 
For $\ell >0$, we denote $f_\ell:\C\rightarrow \C$ defined by
 $f_\ell(s) := \int_\C s^z\Pi_\ell(dz)$. According to the proof of Lemma 4.2 of \cite{Ka06}, there is $u\in \R$ such that 
$$
\vert f(s)-f_\ell(s)\vert \leq us^R, \quad \forall s \in \R
$$
The Lebesgue's dominated convergence theorem implies that $\lim_{\ell\rightarrow \infty}\E\left[(f_\ell(S_N)-f(S_N))^2\right]=0$. Setting $H^\ell=f_\ell(S_N),H=f(S_N)$ we get $\lim_{\ell\rightarrow \infty}\E\left[(H-H^\ell)^2\right]=0$. Item 1. follows by Proposition \ref{AP2}. \\
Item 2. follows by the same arguments as the proof of Theorem \ref{ThmSolutionDicret}.\\
Item 3. follows exactly as in step 3. of the proof of Lemma 4.2 of \cite{Ka06}.
\end{proof}

 In this section, this will be assumed so that 
formula~\eqref{eq:J0} will be used in the case of a Digital option. \\
The underlying process $S^c$ is given as the exponential of a Normal 
Inverse Gaussian L\'evy process (see Appendix~\ref{Appendix:NIG}. B. i.e.
 for all $t\in [0,T]$, 
$$
S_t^c=e^{X_t^c}\ ,\quad \textrm{where $X^c$ is a L\'evy process with } \ 
 X^c_1\,\sim\, NIG(\alpha,\beta,\delta,\mu)\ .
$$
Given $N+1$ discrete dates $0=t_0<t_1<\cdots <t_N=T$,
we associate the discrete model pricing $X = X^N$
where $X_k = X^c_{t_k}, k \in \{ 0, \ldots, N\} $.
$X$ is a  discrete time process with independent increments.
The related cumulant generating 
function $z\mapsto m(z,k)$ 
associated to the increment 
$\Delta X_k=  X_{k}-X_{k-1} =   X^c_{t_k}-X^c_{t_{k-1}}$ for $k\in\{1,\cdots N\}$ 
is defined on  $ D =  [-\alpha-\beta;\alpha-\beta] $.
 We refer for this  to
\cite{GOR} Remark 3.21 2., since $ X^c$ is a NIG process.
By additivity we can show that
\begin{equation} \label{q:mNIG}
 m(z,k)=\mathbb{E} [\exp(z\Delta X_k)]=
\exp{\left (\Delta t_k \big [\mu z +\delta 
\big (\sqrt{\alpha^2-\beta^2}-\sqrt{\alpha^2-
(\beta+z)^2}\big )\big ]\right )}\ , \quad \textrm{for} \ z \in D, k \in \{0, \ldots, N\}.
\end{equation}
For other informations on the NIG law, the reader can refer to 
 Appendix \ref{Appendix:NIG} B. \\
Assumption~\ref{HypD1} 1. is trivially verified, 
  Assumption~\ref{HypD1} 2. is verified as soon as $2 \le \alpha-\beta$. 
 Thanks to Remark~\ref{RVerAss} Assumption~\ref{HypD2} is automatically
 verified for the Call and Put representations 
 given by Lemma~\ref{Call}, and, by similar arguments,
 even for the digital option.  \\
The time unit is the year and the interest rate is zero in all our tests.
 The initial value of the underlying is $s_0=100$ Euros. 
 The maturity of the option is $T=0.25$ i.e. three months from now. 
Four different sets of parameters for the NIG distribution have been
 considered, going from the case of \textit{almost Gaussian}
 returns corresponding to standard equities, 
 to the case of \textit {highly non Gaussian} returns. 
The standard set of parameters 
is  estimated on the \textit{Month-ahead} \textit{base} forward
 prices of the French Power market in 2007:  
\begin{equation}
\label{eq:para;levy}
\alpha=38.46 \,,\  \beta= -3.85\,,\ \delta = 6.40\,,\  \mu= 0.64 \ .
\end{equation}
Those parameters imply a zero mean, a standard deviation of $41\%$, a
skewness (measuring the asymmetry) of $-0.02$ and an excess kurtosis
(measuring the \textit{fatness} of the tails) of $0.01$. The other sets
of parameters are obtained by multiplying parameter  $\alpha$ by a
coefficient $C$, ($\beta,\delta,\mu$) being such that the first three
moments are unchanged. Note that when $C$ grows to infinity the tails of
the NIG distribution get closer to the tails of the Gaussian
distribution. For instance, Table~\ref{tab:kurtosis} shows how the
excess kurtosis (which is zero for a Gaussian distribution) is modified
with the four values of $C$ chosen in our tests. 
\begin{table}[htbp]
\begin{center}
\begin{tabular}{|c||c|c|c|c|c|}
\hline
 \textrm{Coefficient} &  $C=0.14$ & $C=0.2$ & $C=1$ &$C=2$ \\
\hline
\hline
 $\alpha$&    $5.38$&  $7.69$&  $38.46$&  $76.92$ \\
\hline
 Excess kurtosis&   $0.61$&  $0.30$&  $0.01$&  $4. \,10^{-3}$ \\
\hline
\end{tabular}
\end{center}
\caption{{\small Excess kurtosis of $X_1$ for different values of $\alpha$,  $(\beta, \delta, \mu)$ insuring the same three first moments.}}
\label{tab:kurtosis}
\end{table}
We compute the Variance Optimal (VO) hedging error given by~\eqref{eq:J0},  for different grids of rebalancing dates.
The corresponding initial capital $V_0$ denoted by $V_0^\ast = H_0$
in Theorem \ref{ThmSolutionDicret} is computed using Proposition
\ref{propoHdiscret}. 

 In particular, we consider   
the parametric grid introduced in~\cite{Gei02} and~\cite{GM09}
 $\pi^{b,N}:=\{0=t_0^{b,N},t_1^{b,N},\cdots ,t_N^{b,N}\}$ defining, for any real $b\in(0,1]$, $N$ rebalancing dates such that 
\begin{equation}
\label{eq:pib}
t_k^{b,N}=T-T(1-\frac{k}{N})^{1/b}\,\quad\textrm{for all}\ k\in \{0,\cdots , N-1\}\ .
\end{equation}
Note that $\pi^{1,N}$ coincides with equidistant rebalancing dates
 whereas when $b$ converges to zero, the rebalancing dates
 concentrate near the maturity. 
To visualize the impact of parameter $b$ on the rebalancing dates
 grid, we have reported on Figure~\ref{fig:b} the sequences of
 rebalancing dates generated by $\pi^b_N$ for different values of $b$. 
\begin{figure}[htbp]
\begin{center}`
   \epsfig{figure=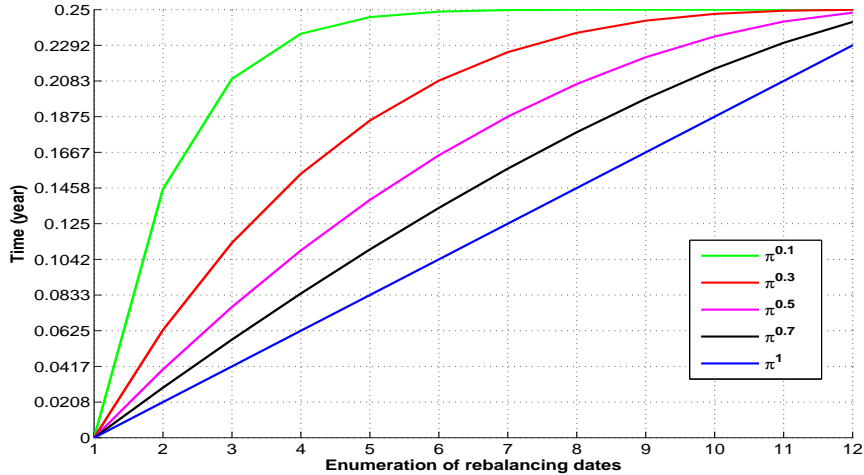,height=7cm, width=13cm}
\end{center}
\caption{{\small Sequences of rebalancing dates for different values of $b$, for $N=12$. }}
\label{fig:b}
\end{figure}

\noindent
We have reported on Table~\ref{tab:GOpt} the standard deviation of 
the Variance Optimal hedging error for different values of coefficient
 $C$ and different choices of rebalancing grids. More precisely, we have considered
 three types of rebalancing grids, for $N=12$ rebalancing dates.
\begin{enumerate}
	\item Equidistant rebalancing dates (corresponding to $\pi^{1,N}$); 
	\item $\pi^{b^*,N}$ where $b^*$ is obtained by minimizing the Variance Optimal hedging error w.r.t. to parameter $b$;
	\item The non parametric optimal grid $\pi^*$ obtained by minimizing the Variance Optimal hedging error w.r.t. the $N$ rebalancing dates.
\end{enumerate}
Notice that in both cases the optimal (parametric and non parametric) grid is estimated by an optimization algorithm based on Newton's method. \\ 
First, one can notice that for any choice of rebalancing grid, the hedging error increases when $C$ decreases. Hence, one can conclude, as expected, that the \textit{degree of incompleteness} increases when the tails of log-returns distribution get heavier.\\
Besides, one can notice that the parametrization~\eqref{eq:pib} of the rebalancing grid seems remarkably relevant since the optimal parametric grid $\pi^{b^\ast}$ achieves similar performances as the optimal non-parametric grid $\pi^{\ast}$. \\
Moreover, we observe that the hedging error can be noticeably reduced by optimizing the rebalancing dates essentially for $C\geq 1$ i.e. around the Gaussian case. In these cases, one can observe on Figure~\ref{fig:Hed:Digital} that the optimal rebalancing grid is noticeably different from the uniform grid since rebalancing dates are much more concentrated near maturity. This confirms the result of~\cite{Gei02} that shows that, in the Gaussian case, taking a non uniform rebalancing grid (corresponding to $b=0.5$) allows to obtain a hedging error with the convergence order for the $L^2$ norm  of $N^{-1/2}$  (up to a log factor) improving the rate $N^{-1/4}$ achieved with a uniform rebalancing grid (i.e. $b=1$), obtained in~\cite{GT01}. 
However, it is interesting to notice that this phenomenon is less pronounced when the tails of the log-returns distribution get heavier. In particular, one can observe on Figure~\ref{fig:Hed:std_b}  that the hedging error gets less sensitive to the rebalancing grid  when $C$ decreases even if the optimal grid seems to get closer to the uniform grid. 

\begin{table}[htbp]
\begin{center}
\begin{tabular}{|c||c|c|c|c|c|}
\hline
 & $C=2$ & $C=1$ & $C=0.2$ & $C=0.14$\\
\hline
\hline
$10\times STD_{VO(\pi^*)}$&  $1.483$ $(30.82)$&  $1.652$ $(34.33)$&  $2.663$ $(54.80)$&  $ 3.017$ $(61.53)$\\
\hline
$10\times STD_{VO(\pi^{b*})}$&  $1.520$ $(31.58)$&  $1.685$ $(35.01)$&  $2.665$ $(54.84)$&  $3.017$ $(61.53)$\\
\hline
$10\times STD_{VO(\pi^{1})}$ &  $1.892$ $(39.32)$&  $1.952$ $(40.56)$&  $2.691$ $(55.38)$&  $3.028$ $(61.76)$\\
\hline
\hline
$V_0(\pi^{1})$&  $0.4903$&  $0.4859$&  $0.4813$&  $0.4812$ \\
\hline
  $V_0(\pi^{*})$&  $0.4903$&  $0.4860$&  $0.4814$&  $0.4813$ \\
\hline
  \hline
$b^*$&  $0.4078$&  $0.4394$&  $0.6106$&  $0.6710$ \\
\hline
\end{tabular}
\end{center}
\caption{{\small Standard deviation of the Variance Optimal hedging
 error  ($\times 10$)  (reported within parenthesis in percent of the
 initial capital  $V_0(\pi^{1})$), initial capitals for $b=1$
and $b = b^*$, optimal grid
 parameters $b^*$, for different choices of parameters $C$  with
  $N=12$ and $K=99$ (Digital option). }}
\label{tab:GOpt}
\end{table}

\begin{figure}[htbp]
\begin{center}
   \epsfig{figure=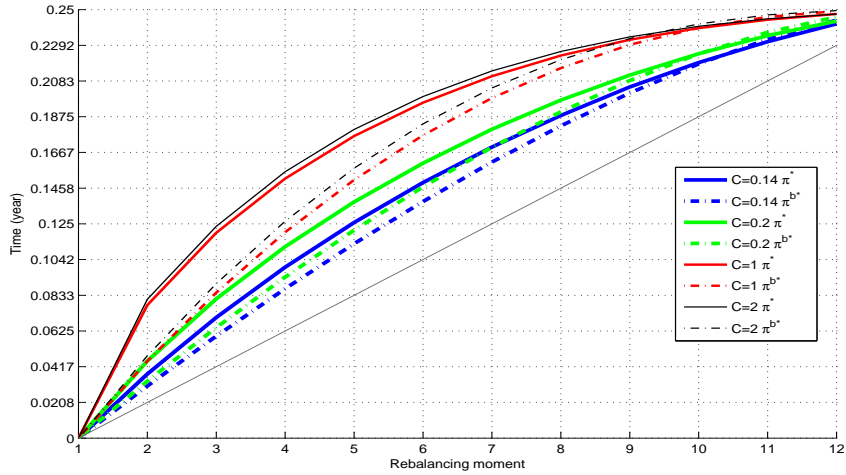,height=7cm, width=13cm}
\end{center}
\caption{{\small Parametric and non parametric optimal rebalancing grids for different choices of parameter $C$ with  $N=12$ and $K=99$ (Digital option). }}
\label{fig:Hed:Digital}
\end{figure}
\begin{figure}[htbp]
\begin{center}
   \epsfig{figure=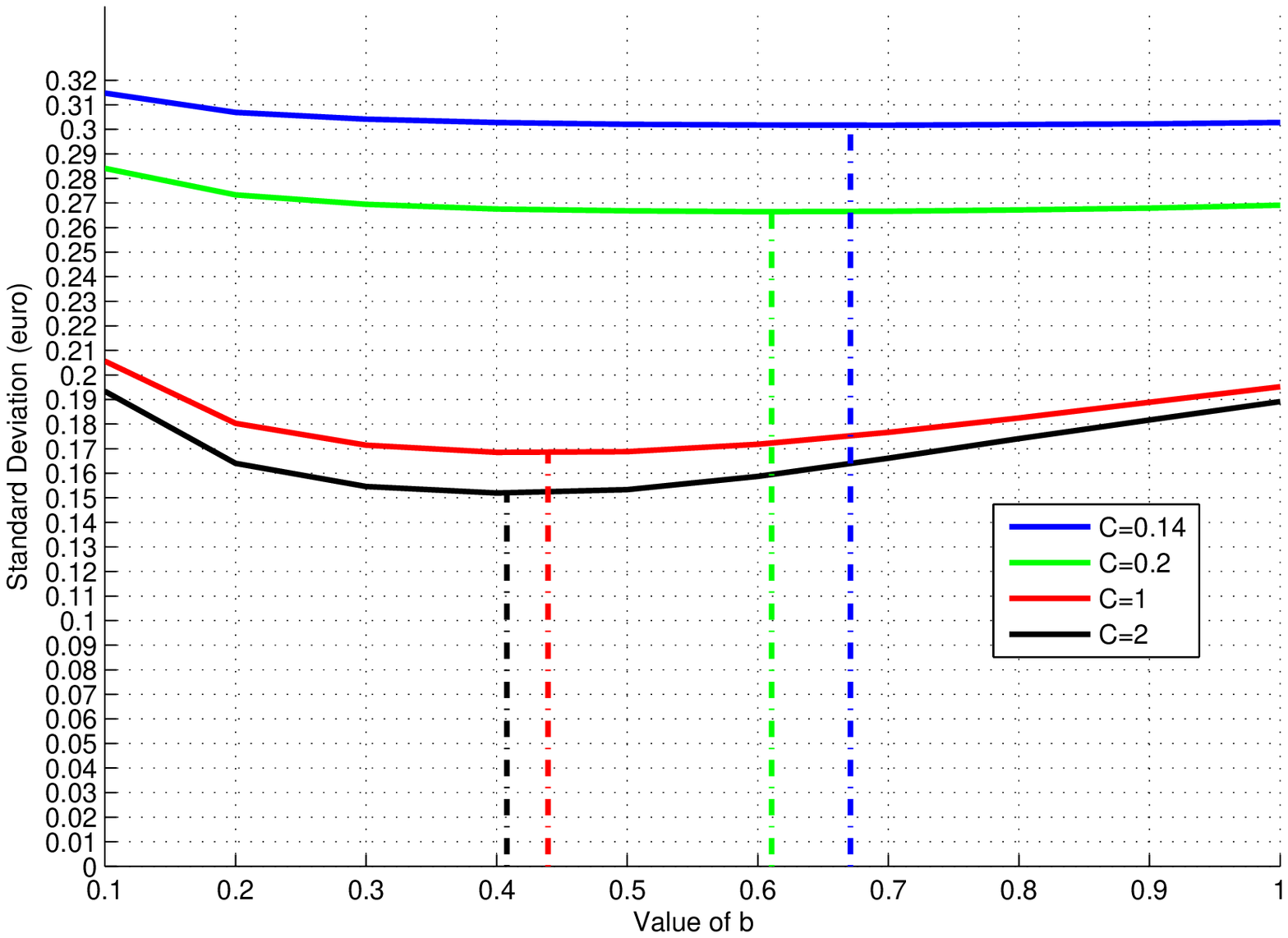,height=7cm, width=13cm}
\end{center}
\caption{{\small Standard deviation of the Variance Optimal hedging error as a function of $b$, for different choices of parameter $C$  ($b^\ast$ being indicated by the dashed line abscissa) with  $N=12$ and $K=99$ (Digital option).}}
\label{fig:Hed:std_b}
\end{figure}


\subsection{The case of electricity forward prices}
\label{electricity}
%
We consider the problem of hedging and pricing a European call,
 with payoff $(F^{T_d}_T-K)_+$, on an electricity forward, with a maturity
 $T=0.25$ of three month. The  maturity $T$ is supposed to be equal to the delivery date of
 the forward contract $T=T_d$. Because of non-storability of electricity, the hedging instrument is the corresponding forward contract.
 Then we set  $S^c_t=F_t^{T}$, where the forward price $F^{T}$ is supposed to follow the  NIG one factor model~\eqref{eq:elecSimple} with $m \equiv 0, \sigma_L = 0$ and  $\sigma_s = \sigma>0$.
This gives
\begin{equation}
\label{eq:model:elec}
S^c_t=e^{X^c_t}\ ,\quad \textrm{where $X^c_t=\int_0^t \sigma 
e^{-\lambda(T-u)}d\Lambda_u$} \quad  \textrm{where $\Lambda$ is a NIG process with 
 }\quad \Lambda_1\,\sim\, {\rm NIG}(\alpha,\beta,\delta,\mu)\ .
\end{equation}
Given $N+1$ discrete dates $0=t_0<t_1<\cdots <t_N=T$,
we consider the discrete process $X = X^N$ where
$X_k = X^c_{t_k}, 0 \le k \le  N$. We denote again by $z\mapsto m(z,k)$
 the cumulant generating function  associated
 with the increment $\Delta X_k=X_k-X_{k-1}$ for 
$k\in\{1,\cdots N\}$. That function and its domain
 can be deduced from
 Lemma~3.24  and Proposition 6.2 in~\cite{GOR}, 
see also \eqref{laplace:nig}. The domain 
$D$ contains 
$\tilde D := [-\frac{\alpha+\beta}{\sigma},
 \frac{\alpha-\beta}{\sigma} ]+ i \R$
 and given for any $z \in \tilde D, k = 0, \ldots, N, $ 
\begin{eqnarray}
\label{q:mWienerNIG}
 m(z,k)
 &=&\mathbb{E} [\exp(z\int_{t_{k-1}}^{t_k} \sigma e^{-\lambda(T-u)}d\Lambda_u)]\nonumber \\ 
 &=&\exp \left (\int_{t_{k-1}}^{t_k} \kappa^\Lambda(z_u
du \right )\,, \quad \textrm{with } z_u=z\sigma  e^{-\lambda(T-u)}\nonumber \\ 
 &=&\exp{\left (\int_{t_{k-1}}^{t_k}  \big [\mu z_u +\delta \big (\sqrt{\alpha^2-\beta^2}-\sqrt{\alpha^2-(\beta+z_u)^2}\big )\big ] du \right )},
\end{eqnarray}%
where $\kappa^\Lambda$ is recalled in formula  \eqref{laplace:nig}.
Hence Assumption~\ref{HypD1} 1. is obviously 
satisfied since $\lambda \neq 0$ and Assumption~\ref{HypD1} 2.
  is verified as soon as $\sigma \le \frac{\alpha-\beta}{2}$; thanks
 to Remark~\ref{RVerAss}, Assumption~\ref{HypD2} is automatically
 verified for the call representation given by Lemma~\ref{Call}. \\
Parameters are  estimated on the same data as in the previous section, with 
\textit{Month-ahead} \textit{base} forward prices of the French Power market in 2007.
  For the distribution of $\Lambda_1$ this yields the following parameters  
$$\alpha=15.81 \,,\  \beta= -1.581\,,\ \delta = 15.57\,,\  \mu= 1.56 \ ,$$
corresponding to  a standard and centered NIG distribution with a skewness of $-0.019$ and excess kurtosis $0.013$.   The estimated annual  short-term volatility and mean-reverting rate are $\sigma=57.47\%$ and $\lambda=3$. 

\noindent
We have reported on Figure~\ref{fig:Hed:PAI}, the standard deviation of the hedging error as a function of the number of rebalancing dates for four types of hedging strategies. 
\begin{itemize}
	\item \textbf{Variance Optimal} strategy (VO) with the \textbf{uniform rebalancing grid (dark line)} and with the \textbf{optimal rebalancing grid $\pi^\ast$ (dark dashed line)}. 
Both variances are computed using formula~\eqref{eq:J0} applied to the process~\eqref{eq:model:elec};
	\item \textbf{Black-Scholes} strategy (BS) implemented at the discrete instants of the \textbf{uniform rebalancing grid (light line)} and of the \textbf{rebalancing grid $\pi^\ast$ (optimal for the Variance Optimal strategy) (light dashed line)}. 
Both variances are computed using Theorem 3.1 of \cite{AH09} extended to
 non-stationary log-returns, to derive a quasi-explicit formula for the variance of the BS hedging error. 
Indeed, in~\cite{AH09}, the authors uses the Laplace transform approach, to derive quasi-explicit formulae for the mean squared hedging error of various discrete time hedging strategies including Black-Scholes delta when applied to L\'evy log-returns models. This extension of this result to the general case when
$X$ is a non-stationary process with independent increments is
given below.

\begin{propo}\label{propoANG}
Let $\upsilon$ be an admissible strategy satisfying
\begin{equation}\label{ConditionANG}
\upsilon_n=\int f^\upsilon(z)_nS^{z-1}_{n-1}\Pi(dz)
\end{equation}
for $n=1,\dots,N$, where $f^\upsilon(z)_n$ is a deterministic function of the complex variable $z$.
Let c be the initial capital; 
the bias and the variance of the hedging error 
 $\epsilon(\upsilon,c) := H - c - \sum_{k=1}^N \upsilon_k
\Delta S_k $ is given by 
\begin{equation}
\label{eq:bias}
\E[\epsilon(\upsilon, c)] =
\int S_0^z \left [
\prod_{k=1}^N m(z,k) -\sum_{k=1}^N f^\upsilon(z)_k (m(1,k)-1) \prod_{l=2}^k m(z,l-1)\right ] \Pi(dz)-c
\end{equation}
\begin{equation}
\label{eq:Error2}
\E[\epsilon(\upsilon, 0)^2] =\int\int 
S_0^{y+z} (v_1(y,z)-v_2(y,z)-v_3(y,z)+v_4(y,z))\Pi(dz)\Pi(dy)\ ,
\end{equation}
where 
\begin{eqnarray*}
v_1(y,z)&=&\prod_{k=1}^N m(y+z,k)\\
v_2(y,z)&=&
\sum_{k=1}^N f^\upsilon(y)_k\, [m(z+1,k)-m(z,k)] \prod_{l=1}^{k-1}m(y+z,l)\prod_{l=k+1}^N m(z,l) \\
v_3(y,z)&=&
\sum_{k=1}^N f^\upsilon(z)_k\, [m(y+1,k)-m(y,k)] \prod_{l=1}^{k-1}m(y+z,l)\prod_{l=k+1}^N m(y,l) \\
v_4(y,z)&=&
\sum_{k=1}^N f^\upsilon (y)_k f^\upsilon(z)_k\, [m(2,k)-2m(1,k)+1] \prod_{l=1}^{k-1} m(y+z,l)  \\
&&+
\sum_{j=2}^N \sum_{k<j} f^\upsilon (z)_k f^\upsilon(y)_j\, [m(y+1,k)-m(y,k)]\prod_{l=1}^{k-1} m(y+z,l)\prod_{l=k+1}^{j-1} m(y,l)(m(1,j)-1)\\
&&+
\sum_{j=2}^N \sum_{k>j} f^\upsilon (z)_k f^\upsilon(y)_j\, [m(z+1,j)-m(z,j)] \prod_{l=1}^{j-1} m(y+z,l)\prod_{l=j+1}^{k-1} m(z,l)(m(1,k)-1)\ .
\end{eqnarray*}

Therefore, the variance of the hedging error is
$$
{\rm Var}(\epsilon(\upsilon ; c)) = {\rm Var}(\epsilon(\upsilon ; 0)) = \E[(\epsilon(\upsilon ; 0)^2] + \E[\epsilon(\upsilon ; 0)]^2\ .
$$
\end{propo}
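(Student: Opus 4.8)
The plan is to expand the hedging error $\epsilon(\upsilon,c) = H - c - \sum_{k=1}^N \upsilon_k \Delta S_k$ by substituting the integral representations $H = \int_{\mathbb{C}} S_N^z \Pi(dz)$ (from \eqref{FORM}) and $\upsilon_n = \int_{\mathbb{C}} f^\upsilon(z)_n S_{n-1}^{z-1} \Pi(dz)$ (from \eqref{ConditionANG}), and then compute the first and second moments using the independence of the increments $\Delta X_k$ together with Remark \ref{SNZ}, which gives $\E(S_n^z) = s_0^z \prod_{k=1}^n m(z,k)$. The use of Fubini's theorem to interchange the (double) integrals against $\Pi$ with the expectation will be justified exactly as in Lemma \ref{remarque514} and the proof of Proposition \ref{propoHdiscret}, using Assumption \ref{HypD2} to bound $\sup_{z \in \C}(|h(z,n)| + \ldots)$; here the analogue is a uniform bound on $|f^\upsilon(z)_n|$ over the (compact real part of the) support of $\Pi$, which I would add as an implicit hypothesis or read off from the admissibility of $\upsilon$.

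For the bias \eqref{eq:bias}, I would write $\E[\epsilon(\upsilon,c)] = \E[H] - c - \sum_{k=1}^N \E[\upsilon_k \Delta S_k]$. The first term is $\int S_0^z \prod_{k=1}^N m(z,k)\,\Pi(dz)$ by Remark \ref{SNZ}. For the generic term, since $\upsilon_k = \int f^\upsilon(z)_k S_{k-1}^{z-1}\Pi(dz)$ is $\mathcal{F}_{k-1}$-measurable and $\Delta S_k = S_{k-1}(e^{\Delta X_k}-1)$, conditioning on $\mathcal{F}_{k-1}$ and using Proposition \ref{propo523} 1. (i.e.\ $\E[\Delta S_k|\mathcal{F}_{k-1}] = (m(1,k)-1)S_{k-1}$) gives $\E[\upsilon_k \Delta S_k] = \int f^\upsilon(z)_k (m(1,k)-1)\E[S_{k-1}^z]\,\Pi(dz)$, and $\E[S_{k-1}^z] = S_0^z \prod_{l=2}^k m(z,l-1)$ again by Remark \ref{SNZ}. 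Collecting the terms under a single integral against $\Pi$ produces \eqref{eq:bias}.

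For the second moment \eqref{eq:Error2}, I would square $\epsilon(\upsilon,0) = \int S_N^z\Pi(dz) - \sum_{k=1}^N \int f^\upsilon(z)_k S_{k-1}^{z-1}\Pi(dz)\,\Delta S_k$, obtaining four groups of terms after expanding $(\text{payoff} - \text{strategy})^2$: the pure payoff term $\E[S_N^{y+z}] = S_0^{y+z}\prod_{k=1}^N m(y+z,k)$ gives $v_1$; the two cross terms payoff$\times$strategy give $v_2$ and $v_3$; and the pure strategy term $\E\big[(\sum_k \upsilon_k \Delta S_k)(\sum_j \upsilon_j \Delta S_j)\big]$ gives $v_4$. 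The cross terms are handled by conditioning on $\mathcal{F}_{k-1}$: for a term $\E[S_N^y \upsilon_k \Delta S_k]$ one first conditions on $\mathcal{F}_k$ to pull out $\E[(S_N/S_k)^y] = \prod_{l=k+1}^N m(y,l)$, then on $\mathcal{F}_{k-1}$ using $\E[S_{k-1}^{y} S_{k-1}^{z-1} e^{y\Delta X_k}(e^{\Delta X_k}-1)] $-type identities and $\E[S_{k-1}^{y+z}] = S_0^{y+z}\prod_{l=1}^{k-1}m(y+z,l)$ — this yields the factor $m(y+1,k)-m(y,k)$ appearing in $v_3$ (symmetrically $m(z+1,k)-m(z,k)$ in $v_2$). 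For $v_4$ one must split the double sum $\sum_{k,j}$ into the diagonal $k=j$ (giving $\E[(e^{\Delta X_k}-1)^2]S_{k-1}^{\cdots}$, hence the factor $m(2,k)-2m(1,k)+1$, with the $S$-prefactor contributing $\prod_{l=1}^{k-1}m(y+z,l)$) and the two off-diagonal pieces $k<j$ and $k>j$; in each off-diagonal piece one conditions successively on the larger index's filtration down to the smaller, producing the telescoping products $\prod_{l=1}^{k-1}m(y+z,l)$, then $\prod_{l=k+1}^{j-1}m(y,l)$ (the marginal generating function for the increments strictly between the two trading times), then the final factor $(m(1,j)-1)$ from the last increment's mean. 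The symmetry of the roles of $y$ and $z$ under the $\Pi\otimes\Pi$ integration is what allows writing $v_4$ in the stated asymmetric-looking but integrated-symmetric form. Finally the variance identity follows trivially since $\epsilon(\upsilon,c) = \epsilon(\upsilon,0) + (\text{const} - c)$ has the same variance regardless of $c$, and $\mathrm{Var} = \E[\epsilon^2] - \E[\epsilon]^2$ (the sign in the statement being a typo; I would write $\mathrm{Var}(\epsilon(\upsilon;0)) = \E[\epsilon(\upsilon;0)^2] - \E[\epsilon(\upsilon;0)]^2$).

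The main obstacle is purely bookkeeping: correctly tracking, in the pure-strategy term $v_4$, which telescoping products of $m(\cdot,l)$ arise from the successive conditionings, and in particular getting the index ranges of the three products right in the $k<j$ and $k>j$ cases. There is no conceptual difficulty — every expectation reduces, via independence of increments, to a product of the functions $m(\cdot,l)$ — but the combinatorics of nested conditional expectations over an ordered set of trading dates is where errors would creep in, and where I would be most careful to verify against the known stationary-increment formulas of \cite{AH09} as a sanity check.
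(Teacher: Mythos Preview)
Your proposal is correct and follows exactly the approach the paper intends: the paper's own proof consists of the single sentence ``The proof is similar to the one of Theorem~3.1 of \cite{AH09}'', and what you have outlined is precisely the extension of that computation (expand via the integral representations, apply Fubini, and reduce every expectation to products of $m(\cdot,l)$ by successive conditioning on $\mathcal{F}_{k-1}$) to the non-stationary case. Your observation about the sign in the variance identity is also correct; the stated $+$ should be $-$.
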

\begin{proof}
The proof is similar to the one of Theorem 3.1  of \cite{AH09}.
\end{proof}
\begin{remarque}
In the case of Black-Scholes delta hedging strategy 
$$
f^\upsilon(z)_n=z\prod_{k=n}^Nm^{bs}(z,k) \ , \quad \textrm{where} \  m^{bs}(z,k)=\exp\left(-\frac{Var[\Delta X_k]}{2}z+\frac{Var[\Delta X_k]}{2}z^2\right).
$$
\end{remarque}
\end{itemize}

Observing Figure~\ref{fig:Hed:PAI},  one can notice that, as expected, in all cases,  the hedging error decreases when the number of trading dates increases. 
Observing the continuous lines, corresponding to  a uniform rebalancing grid, one can notice the remarkable robustness of the Black-Scholes  strategy. Indeed, in spite of the non Gaussianity of log-returns and the discreteness of the rebalancing grid, the Black-Scholes strategy is still quasi optimal in terms of variance. \\
Besides, in this case, the impact of the choice of the rebalancing grid seems to be more important than the choice of log-returns distribution (Gaussian or Normal Inverse Gaussian).  For instance, using the VO strategy with the optimal rebalancing grid $\pi^\ast$ instead of $\pi^1$ allows to reduce $9\%$  (for $N=10$) of the hedging error standard deviation. 
The BS strategy shows  similar performances  to the VO case,
  when implemented at the rebalancing times $\pi^\ast$. Indeed
\textit{BS optimal rebalancing grid} (in terms of variance)
 appears to be  close to $\pi^\ast$ 
(up to $10^{-4}$).
Moreover, one can observe on Table~\ref{tab:GOptPAI} that here again, the
 parametrization~\eqref{eq:pib} of the rebalancing grid seems to be 
particularly well suited since it achieves minimal hedging errors comparable
 to the one achieved with the nonparametric optimal grid $\pi^\ast$.\\

 Notice that our analysis only considers 
the variance of the hedging error. To obtain the mean square error, one 
should add the bias contribution which is of course zero
 for the variance optimal
 strategy but it is  in general non negligible for the Black-Scholes
 strategy. In particular, we can observe that this bias term varies 
strongly with the parameters of the NIG distribution.

For instance, for $N=2$ uniform rebalancing dates, replacing parameter $\beta$ 
by $-\beta$ increases the bias (defined as~\eqref{eq:bias}, with
initial capital $c=V_0^{\rm BS})$ 
from -0.04 to 4.45. Moreover, one should also observe that the drift
 and the skewness of log-returns 
also impact the standard deviation of the BS hedging error. 
Changing again $\beta$ by $-\beta$ implies an increase of the
log-returns
 expectation 
(resp. skewness) from 0 to 3.12 (resp.  from -0.02 to 0.02) which induces an 
increase of the standard deviation of the BS hedging error 
 from $4.91$ to $5.92$, whereas the standard deviation 
of the VO hedging error decreases from 4.83 to 2.10.

\begin{figure}[htbp]
\begin{center}
   \epsfig{figure=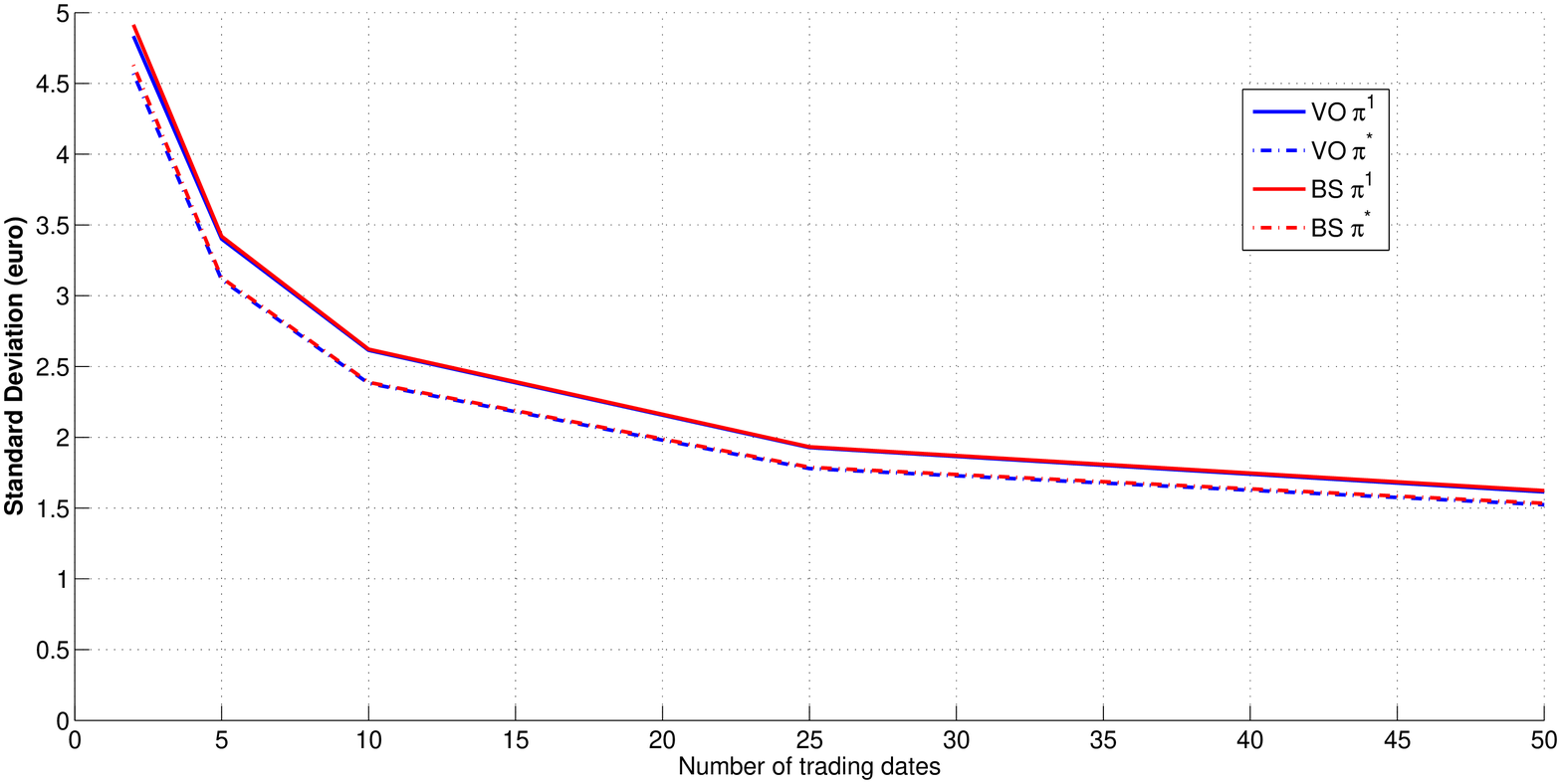,height=7cm, width=13cm}
\end{center}
\caption{{\small Standard deviation of the hedging errors as a function of the number of rebalancing dates $N$, for $K=99$ (Call option). }}
\label{fig:Hed:PAI}
\end{figure}
\begin{table}[htbp]
\begin{center}
\begin{tabular}{|c||c|c|c|c|c|}
\hline
  & $N=2$ & $N=5$ & $N=10$ & $N=25$ & $N=50$\\
\hline
\hline
 $STD_{VO(\pi^*)}$&  $4.5683 $ $(53.23)$&  $3.1129$ $(36.10)$&  $2.3807$ $(27.56)$&  $ 1.7790$  $(20.57 )$&  $ 1.5233 $ $(17.61)$\\
\hline
 $STD_{VO(\pi^{b*})}$ &  $4.57167$ $(53.27)$&  $3.1550$ $(36.59)$&  $2.4186$ $(28.00)$&  $1.8023  $ $(20.84)$ &  $ 1.5354 $ $(17.75)$\\
 \hline
 $STD_{VO(\pi^{1})}$ &  $4.8331 $ $(56.32)$&  $3.4012$ $(39.44)$&  $ 2.6154$ $(30.28)$&  $ 1.9275 $ $(22.29)$ &  $ 1.6145$ $(18.66)$\\
  \hline
    \hline
 $STD_{BS(\pi^{1})}$ &  $4.9137 $ $(57.26)$&  $3.4196$ $(39.66)$&  $ 2.6217$ $(30.35)$&  $ 1.9329 $  $(22.35)$&  $ 1.6231$ $(18.76)$\\
  \hline                 
 $STD_{BS(\pi^{*})}$ &  $4.6291$ $(53.94)$&  $3.1273$ $(36.27)$&  $ 2.3884$ $(27.65)$&  $ 1.7886 $ $(20.68)$ &  $ 1.5344$ $(17.74)$\\
 \hline
  \hline
$V_0(\pi^{1})$&  $8.5818$&  $8.6232$&  $8.6380$&  $8.6469$ &  $8.6499$\\
 \hline
 $V_0(\pi^{*})$&  $8.5895$&  $8.6275$&  $8.6406$&  $8.6493$ &  $8.6531$\\
 \hline
  \hline
  \hline   
$b^*$&  $0.5917$&  $0.6298$&  $0.6284$&  $0.6203$ &  $0.6172$\\
 \hline
\end{tabular}
\end{center}
\caption{{\small Standard deviation of the Variance Optimal hedging 
error (reported within parenthesis in percent of the initial 
capital $V_0(\pi^{1})$), initial capitals, optimal grid parameters
$b^*$, for different choices of rebalancing dates $N$ (Call option). }}
\label{tab:GOptPAI}
\end{table}
%



To analyze the impact of the rate of volatility increase on the
 optimal rebalancing grid, we have computed the hedging error standard 
deviation for several values of parameter $\lambda$ choosing the
 corresponding  volatility parameter $\sigma$ such that
 $Var(X_T)=\frac{\sigma^2}{2\lambda} (1-e^{-2\lambda T})$ is fixed.
  The resulting pairs $(\lambda,\sigma)$ are reported on
 Table~\ref{tab:lambdasigma}. Coupling those parameters allows us
 to obtain comparable options for different parameters $\lambda$; at
 least this ensures a fixed initial capital in the BS framework
 (with $V_0^{BS}=8.7037$). 
On Figure~\ref{fig:STD:lamabda1}, we have reported the optimal grid parameter $b^\ast$ minimizing the standard deviation of the VO hedging error for different values of $\lambda$. 
As expected, when $\lambda$ increases, i.e. when the volatility increases more rapidly near the maturity, then $b^\ast$ decreases indicating that the optimal rebalancing dates concentrate near the maturity. 	On Figure~\ref{fig:STD:lambda2}, one can observe that the hedging error increases with $\lambda$ even when the rebalancing dates are optimized.  However, optimizing the rebalancing dates allows to reduce noticeably the hedging error, specifically for high values of $\lambda$. For instance, it  allows to reduce $7.5\%$ of the error standard deviation when $\lambda=3$ and  $17.9\%$ when $\lambda=9$. 

\begin{table}[htbp]
\begin{center}
\begin{tabular}{|c||c|c|c|c|c|c|}
\hline
 $\lambda$&    $1$&  $2$&  $3$&  $6$ &  $9$\\
\hline              
 $\sigma$&   $0.4662$&  $0.5202$&  $0.5747$&  $0.7349$&  $0.8823$ \\
\hline
\end{tabular}
\end{center}
\caption{{\small Short term volatility $\sigma$  (s.t. $Var(X_T)=\frac{\sigma^2}{2\lambda} (1-e^{-2\lambda T})$ is fixed) 
 for different values of parameter $\lambda$ with $N=10$ and $K=99$ (Call option).}}
\label{tab:lambdasigma}
\end{table}
\begin{figure}[htbp]
\begin{center}
   \epsfig{figure=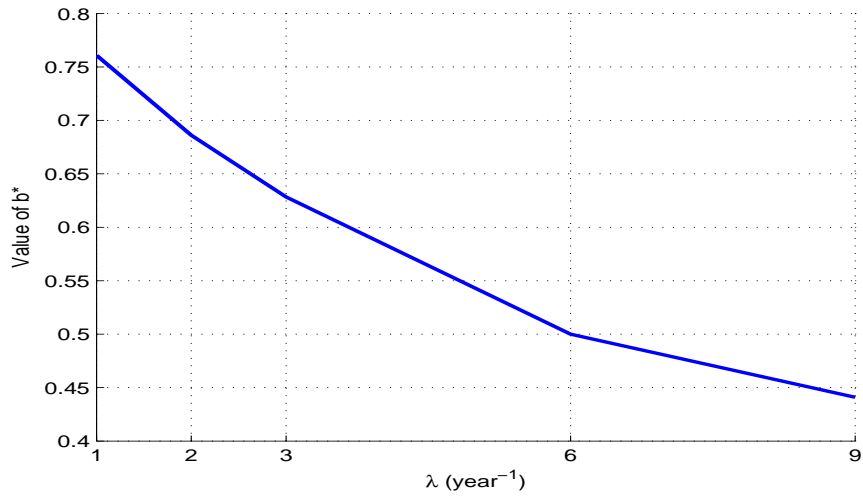,height=7cm, width=13cm}
\end{center}
\caption{{\small Optimal rebalancing grid parameter $b^*$ as a function of $\lambda$ for $K=99$ and $N=10$ (Call option).}}
\label{fig:STD:lamabda1}
\end{figure}
\begin{figure}[htbp]
\begin{center}
   \epsfig{figure=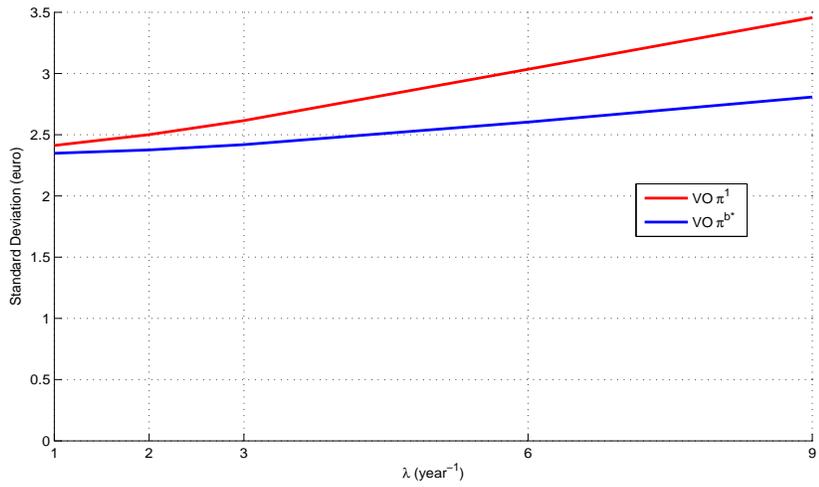,height=7cm, width=13cm}
\end{center} 
\caption{{\small Standard deviation of the hedging error as a function of  $\lambda$ for $K=99$ and $N=10$  (Call option). }}
\label{fig:STD:lambda2}
\end{figure}

\newpage

\section{Appendix}

\subsection*{A: A general convergence theorem 
for FS decompositions}

\begin{propo}\label{AP2}
Let $(H^\ell)_{\ell\in \N \cup \{+\infty\}}$ be a sequence of r.v. in $\shl^2(\Omega,\shf_N)$. Let 
\begin{equation}\label{EP2}
\left \{ 
\begin{array}{ccc}
H_n^\ell&=&H_0^\ell+\sum_{i=1}^n \xi^\ell_i\Delta S_i+L_n^\ell\\
H_N^\ell &=& H^\ell
\end{array}
\right.
\end{equation}
be the FS-decomposition of $H^\ell$. Suppose that $H^\ell \rightarrow H^\infty$ in $\shl^2(\Omega)$. Then, for $\ell \rightarrow  +\infty$,
\begin{enumerate}
\item $H_0^\ell\rightarrow H_0^\infty$ in $\shl^2(\shf_0)$;
\item $\xi_n^\ell\rightarrow \xi_n^\infty$ in probability for any
 $n\in \{1,\dots,N\}$;
\item $L_N^\ell\rightarrow L_N^\infty$ in $\shl^2(\Omega)$.
\end{enumerate}
\end{propo}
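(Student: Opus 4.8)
The plan is to exploit the linearity of the FS-decomposition to reduce to $H^\infty=0$, and then to transport $\shl^2$-smallness through the classical backward recursion that produces the decomposition. Throughout I assume the non-degeneracy condition (ND) -- it is what guarantees that the decompositions in \eqref{EP2} exist and are unique (Remark \ref{R29}). Recall that the decomposition of $H\in\shl^2(\shf_N)$ is obtained by setting $V_N:=H$ and, for $k=N,\dots,1$,
$$\xi_k:=\frac{\mathbb{E}[V_k\,\Delta M_k\mid\shf_{k-1}]}{\mathbb{E}[(\Delta M_k)^2\mid\shf_{k-1}]}\ \text{ on }\ \{\mathbb{E}[(\Delta M_k)^2\mid\shf_{k-1}]>0\},\qquad\xi_k:=0\ \text{ elsewhere},$$
$$V_{k-1}:=\mathbb{E}[V_k\mid\shf_{k-1}]-\xi_k\Delta A_k,\qquad\Delta L_k:=V_k-V_{k-1}-\xi_k\Delta S_k,$$
and $H_0:=V_0$, $L_n:=\sum_{k=1}^n\Delta L_k$. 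Using $\mathbb{E}[\Delta M_k\mid\shf_{k-1}]=0$ and $\mathbb{E}[\Delta S_k\Delta M_k\mid\shf_{k-1}]=\mathbb{E}[(\Delta M_k)^2\mid\shf_{k-1}]$ one checks inductively (with the same estimates as below) that $V_k\in\shl^2$ and $\xi_k\in\Theta$, that $L$ and $LM$ are martingales, and that $H=H_0+G_N(\xi)+L_N$; by uniqueness this is the decomposition \eqref{EP2}. Since every quantity above is linear in $H$, the decomposition of $H^\ell-H^\infty$ is the difference of the decompositions of $H^\ell$ and of $H^\infty$, so I may replace $H^\ell$ by $H^\ell-H^\infty$ and assume $H^\ell\to0$ in $\shl^2(\Omega)$; then I must show $H_0^\ell\to0$, $L_N^\ell\to0$ in $\shl^2$, and $\xi_n^\ell\to0$ in probability.

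First I would propagate $\shl^2$-smallness backward in $k$. Assume $V_k^\ell\to0$ in $\shl^2$ (true at $k=N$). By the tower property and conditional Cauchy--Schwarz,
$$\mathbb{E}\big[(\xi_k^\ell\Delta M_k)^2\big]=\mathbb{E}\big[(\xi_k^\ell)^2\,\mathbb{E}[(\Delta M_k)^2\mid\shf_{k-1}]\big]\le\mathbb{E}\big[\mathbb{E}[(V_k^\ell)^2\mid\shf_{k-1}]\big]=\mathbb{E}[(V_k^\ell)^2]\longrightarrow0.$$
From $\mathbb{E}[(\Delta S_k)^2\mid\shf_{k-1}]=\mathbb{E}[(\Delta M_k)^2\mid\shf_{k-1}]+(\Delta A_k)^2$, condition (ND) gives $(\Delta A_k)^2\le\frac{\delta}{1-\delta}\mathbb{E}[(\Delta M_k)^2\mid\shf_{k-1}]$, hence $\mathbb{E}[(\xi_k^\ell\Delta A_k)^2]\le\frac{\delta}{1-\delta}\mathbb{E}[(\xi_k^\ell\Delta M_k)^2]\to0$, so $\xi_k^\ell\Delta S_k=\xi_k^\ell\Delta M_k+\xi_k^\ell\Delta A_k\to0$ in $\shl^2$. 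Since conditional expectation is an $\shl^2$-contraction, $V_{k-1}^\ell=\mathbb{E}[V_k^\ell\mid\shf_{k-1}]-\xi_k^\ell\Delta A_k\to0$ in $\shl^2$, which closes the downward induction. In particular $H_0^\ell=V_0^\ell\to0$, proving point 1; and $\Delta L_k^\ell=V_k^\ell-V_{k-1}^\ell-\xi_k^\ell\Delta S_k\to0$ in $\shl^2$ for every $k$, so $L_N^\ell=\sum_{k=1}^N\Delta L_k^\ell\to0$ in $\shl^2$, proving point 3.

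For point 2 I do not expect $\shl^2$-convergence of the $\xi_n^\ell$ themselves, only convergence in probability. Writing $\sigma_k^2:=\mathbb{E}[(\Delta M_k)^2\mid\shf_{k-1}]=Var[\Delta S_k\mid\shf_{k-1}]$, the estimates above already give $\mathbb{E}[(\xi_k^\ell)^2\sigma_k^2]\to0$. For each integer $j\ge1$ we get $\mathbb{E}[(\xi_k^\ell)^2\mathbf{1}_{\{\sigma_k^2>1/j\}}]\le j\,\mathbb{E}[(\xi_k^\ell)^2\sigma_k^2]\to0$, so $\xi_k^\ell\mathbf{1}_{\{\sigma_k^2>1/j\}}\to0$ in $\shl^2$, hence in probability. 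On the other hand, on the $\shf_{k-1}$-set $\{\sigma_k^2=0\}$ we have $\xi_k^\ell=0$ by the convention built into the decomposition (and $\Delta S_k=0$ a.s.\ there, arguing as in Remark \ref{remarque53bis} together with (ND)), while $\mathbb{P}(0<\sigma_k^2\le1/j)\downarrow0$ as $j\to\infty$. Given $\varepsilon,\eta>0$, I would split $\{|\xi_k^\ell|>\varepsilon\}$ over $\{\sigma_k^2>1/j\}$, $\{0<\sigma_k^2\le1/j\}$ and $\{\sigma_k^2=0\}$, choose $j$ so that $\mathbb{P}(0<\sigma_k^2\le1/j)<\eta/2$, and then take $\ell$ large so that $\mathbb{P}(|\xi_k^\ell\mathbf{1}_{\{\sigma_k^2>1/j\}}|>\varepsilon)<\eta/2$, to obtain $\mathbb{P}(|\xi_k^\ell|>\varepsilon)<\eta$; as $\varepsilon,\eta$ are arbitrary, this is point 2.

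The main obstacle is precisely this last point: the denominator $\sigma_k^2$ is only a.s.\ strictly positive and need not be bounded away from $0$, so one genuinely loses $\shl^2$-control on the $\xi_k^\ell$ and must descend to convergence in probability via the stratification over $\{\sigma_k^2>1/j\}$. Everything else is a routine backward transport of $\shl^2$-bounds, the one structural ingredient being that (ND) keeps $\Delta A_k$ conditionally dominated by $\Delta M_k$, so that smallness of $\xi_k^\ell\Delta M_k$ forces smallness of $\xi_k^\ell\Delta S_k$.
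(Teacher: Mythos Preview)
Your proof is correct and follows the same overall architecture as the paper's: a backward induction transporting $\shl^2$-smallness from $H_N$ down to $H_0$, with the (ND) condition used to bound $(\Delta A_k)^2$ by $\frac{\delta}{1-\delta}\,\mathbb{E}[(\Delta M_k)^2\mid\shf_{k-1}]$ so that $\xi_k^\ell\Delta A_k$ is controlled by $\xi_k^\ell\Delta M_k$. The two arguments differ in how they obtain $\mathbb{E}[(\xi_k^\ell\Delta M_k)^2]\to 0$. You invoke the explicit backward-regression formula $\xi_k=\mathbb{E}[V_k\Delta M_k\mid\shf_{k-1}]/\mathbb{E}[(\Delta M_k)^2\mid\shf_{k-1}]$ and apply conditional Cauchy--Schwarz; the paper stays at the level of the abstract FS-decomposition and uses instead the strong orthogonality of $L$ and $M$ to obtain the Pythagorean identity
\[
\mathbb{E}\Big[\big(H_n^\ell-H_n^\infty-\mathbb{E}[H_n^\ell-H_n^\infty\mid\shf_{n-1}]\big)^2\Big]
=\mathbb{E}\big[(\xi_n^\ell-\xi_n^\infty)^2(\Delta M_n)^2\big]+\mathbb{E}\big[(\Delta(L_n^\ell-L_n^\infty))^2\big],
\]
which yields both terms simultaneously. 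Your approach trades this structural shortcut for an explicit construction, which is slightly longer but self-contained. On point~2 you are actually more careful than the paper: the paper records only $\mathbb{E}[(\Delta S_n)^2(\xi_n^\ell-\xi_n^\infty)^2]\to0$ and asserts that the statement follows, whereas you supply the stratification over $\{\sigma_k^2>1/j\}$ needed to pass from this weighted $\shl^2$-convergence to convergence in probability of $\xi_k^\ell$.
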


\begin{proof}
For $n\in \{1,\dots,N\}$, $ \ell \in \N \cup \{+\infty\}$
 we have
\begin{equation}\label{EP3}
H_n^\ell=H^\ell_{n-1}+\xi^\ell_n\Delta S_n+\Delta L^\ell_n.
\end{equation}
For technical reasons we set $\xi^\ell_{N+1}:=0$ and
$L^\ell_{N+1}:=L^\ell_{N}$. The result will follow if for every $n\in
\{0,\dots,N\}$, 
 for $\ell \rightarrow + \infty$
we have
\begin{enumerate}
\item $H_n^\ell\rightarrow H_n^\infty$ in $\shl^2$,
\item $\E\left[\left(\Delta S_{n+1}\right)^2\left(\xi^\ell_{n+1}-
\xi^\infty_{n+1}\right)^2\right] \rightarrow 0$,
\item $L^\ell_{n+1}\rightarrow L^\infty_{n+1}$ in $\shl^2$.
\end{enumerate}

We will prove 1.,2. and 3. by backward induction on $n\in \{0,\dots,N\}$ starting from $n=N$. The step $N$ of the induction is constituted by the assumption, in particular 1. and 3. are  verified by assumption 
and 2. is trivially verified. \\
Suppose that 1.,2. and 3. hold for some $n\in \{1,\dots,N\}$, we will prove
  their validity for the integer $n-1$. First,
1. implies  that $\E\left[H_n^\ell\vert \shf_{n-1}\right]\rightarrow
_{\ell  \rightarrow +\infty}
 \E\left[H_n^\infty\vert \shf_{n-1}\right]$ in $\shl^2(\Omega)$.
 We continue taking the conditional expectation with respect to $\shf_{n-1}$ in \eqref{EP3}. This gives
\begin{equation}\label{EP4}
\E\left[H_n^\ell\vert \shf_{n-1}\right]=H^\ell_{n-1}+\xi^\ell_n\Delta A_n.
\end{equation}
The difference between \eqref{EP3} and \eqref{EP4} gives
$$
H_n^\ell-\E\left[H_n^\ell\vert \shf_{n-1}\right]=\xi^\ell_n\Delta M_n +
 \Delta L_n^\ell, \ \ell\in \N \cup \{+\infty\}.
$$
Consequently
$$ H_n^\ell-H_n^\infty=\E\left[H_n^\ell-H_n^\infty\vert \shf_{n-1}\right] 
= \left(\xi^\ell_n-\xi^\infty_n\right)\Delta M_n +\Delta(L_n^\ell-L_n^\infty).
$$  
So
\begin{equation}\label{PM6}
\E\left[\left(H_n^\ell-H_n^\infty\right)-\E\left[H_n^\ell-H_n^\infty\vert \shf_{n-1}\right]^2\right]=\E\left[\left(\xi^\ell_n-\xi^\infty_n\right)^2\left(\Delta M_n\right)^2\right]+\E\left[\Delta \left(L_n^\ell-L_n^\infty\right)\right]^2;
\end{equation}
in fact
\begin{equation*}
\E\left(\left(\xi^\ell_n-\xi^\infty_n\right)\Delta M_n \Delta 
\left(L_n^\ell-L_n^\infty\right)\right)=
\E\left((\xi^\ell_n-\xi^\infty_n) 
\E\left((\Delta M_n)\Delta (L_n^\ell-L_n^\infty)
\vert \shf_{n-1} \right)  \right)= 0,
\end{equation*}
because $M.\left(L^\ell-L^\infty\right)$ is a martingale. Since the
 left-hand side of \eqref{PM6} converges to zero when $\ell 
\rightarrow \infty$, it follows that
\begin{eqnarray}\label{PM7}
\E\left[\left(\xi^\ell_n-\xi^\infty_n\right)^2\left(\Delta M_n\right)
^2\right]&\rightarrow_{\ell\rightarrow \infty}&0\\
\E\left[\Delta \left(L_n^\ell-L_n^\infty\right)\right]^2&
\rightarrow_{\ell\rightarrow \infty}&0.
\nonumber
\end{eqnarray}
This shows 2. and 3. of the $(n-1)$-step of the backward induction. It remains to show item 1. By \eqref{EP3}, we have
\begin{equation*}H_{n-1}^\ell-H_{n-1}^\infty = H_{n}^\ell-H_{n}^\infty
  - \Delta S_n \left(\xi^\ell_n-\xi^\infty_n\right)-
\Delta \left(L_n^\ell-L_n^\infty\right).
\end{equation*}
Since $H_{n}^\ell-H_{n}^\infty$ and $\Delta
\left(L_n^\ell-L_n^\infty\right)$ 
converge to zero in $\shl^2$, it remains to show that 
$ \Delta S_n \left(\xi^\ell_n-\xi^\infty_n\right)
\rightarrow_{\ell\rightarrow \infty}0$ in $\shl^2(\Omega)$ when $\ell\rightarrow \infty$. Now $\Delta M_n \left(\xi^\ell_n-\xi^\infty_n\right)
\rightarrow_{\ell\rightarrow \infty}0$ in $\shl^2(\Omega)$ and so by
 \eqref{PM7}
 we only have to prove that
\begin{equation}\label{PM8}
\E\left[\left(\xi^\ell_n-\xi^\infty_n\right)^2\left(\Delta A_n\right)^2\right]\rightarrow_{\ell\rightarrow \infty}0.
\end{equation}
By the (ND) condition and item 1. of Remark \ref{remarqueRD}, we have
\begin{eqnarray*}
\left(\Delta A_n\right)^2&=&  
\left (\E\left(\Delta S_n \vert \shf_{n-1}\right) \right)^2 
\le \delta \E\left(\left(\Delta S_n\right)^2\vert \shf_{n-1}\right)
\\
&=&\delta\left((\Delta A_n)^2+\E\left[\left(\Delta M_n\right)^2\vert \shf_{n-1}\right]\right).
\end{eqnarray*}
Consequently
\begin{eqnarray*}
\left(\Delta A_n\right)^2&\leq&\frac{\delta}{1-\delta}\E\left[\left(\Delta M_n\right)^2\vert \shf_{n-1}\right].
\end{eqnarray*}
So the left-hand side of \eqref{PM8} is bounded by 
$$
\frac{\delta}{1-\delta}\E\left[\left(\xi^\ell_n-\xi_n^\infty\right)^2\left(\Delta M_n\right)^2\right]\rightarrow_{\ell\rightarrow \infty} 0.
$$
The result is finally established.
\end{proof}

\subsection*{B: The Normal Inverse Gaussian distribution}
\label{Appendix:NIG}
The Normal Inverse Gaussian (NIG) distribution is a specific subclass of the Generalized Hyperbolic family introduced by Barndorff--Nielsen in 1977, see
 for instance~\cite{bnh}.  The density of a Normal Inverse Gaussian distribution of parameters $(\alpha,\beta,\delta,\mu)$ is given by  
\begin{equation}
\label{dens_nig}
f_{NIG}(x;\alpha, \beta, \delta, \mu)= \frac{\alpha}{\pi} \exp \big ( \delta \sqrt{\alpha^2-\beta^2}+\beta (x-\mu)\big ) \frac{K_1\big ( \alpha \delta \sqrt {1+(x-\mu)^2/\delta^2}\big )}{\sqrt{1+(x-\mu)^2/\delta^2}}\ ,\quad \textrm{for any}\ x\in\mathbb{R}\ ,
\end{equation}
where $K_1$ denotes the Bessel function of the third type with index 
 $1$ and where the  parameters  are such that  $\delta> 0 $, $\alpha >0$ 
 and $\alpha > \vert \beta\vert$. 
Afterwards,  NIG$(\alpha, \beta, \delta,\mu)$ will denote the Normal Inverse Gaussian  distribution of parameters $(\alpha, \beta, \delta,\mu)$. 

A useful property of the NIG distribution is its stability  under convolution i.e. 
$$
NIG(\alpha, \beta, \delta_1,\mu_1)\ast NIG(\alpha, \beta, \delta_2,\mu_2)=NIG(\alpha, \beta, \delta_1+\delta_2,\mu_1+\mu_2)\ .
$$
This property shared with the Gaussian distribution allows to simplifies many computations. 

If  $X$ is a NIG($\alpha,\beta,\delta,\mu$) random variable then  for any $a\in \mathbb{R}^+$ and $b\in \mathbb{R}\,,$ $Y=aX+b$ is also a NIG random variable with parameters  ($\alpha /a,\beta /a,a\delta,a\mu +b$). 

The mean and the variance associated to a NIG$(\alpha, \beta, \delta,\mu)$ random variable $X$ are given by, 
\begin{equation}
\label{eq:mstd_nig}
\mathbb{E} X =\mu +\frac{\delta \beta}{\gamma}\ ,\quad \textrm{Var} X=\frac{\delta \alpha^2}{\gamma^3}\ ,\quad\textrm{with}\quad \gamma=\sqrt{\alpha^2-\beta^2}\ . 
\end{equation}

The characteristic function of the NIG distribution is given by  $\exp(\Psi_{NIG})$ where $\Psi_{NIG}$ verifies 
\begin{equation}
\label{eq:psi:nigb}
\Psi_{NIG}(u)=\log \mathbb{E}\big [ \exp(iuX)\big ]=i\mu u +\delta (\sqrt{\alpha^2-\beta^2}-\sqrt{\alpha^2-(\beta+iu)^2})\quad \textrm{for any}\quad u\in \mathbb{R} \ .
\end{equation}

The moment generating function of the NIG distribution is particularly simple,  
\begin{equation}
\label{laplace:nig}
\kappa^\Lambda(z) = \kappa^\Lambda_{NIG}(z)=\log  \mathbb{E} [\exp(zX)]=\mu z +\delta \big (\sqrt{\alpha^2-\beta^2}-\sqrt{\alpha^2-(\beta+z)^2}\big )\ , \quad \textrm{for} \ Re(z)\in[-(\alpha+\beta);\alpha-\beta]\ .
\end{equation}

The L\'evy measure of the NIG distribution is given by  
\begin{equation}
\label{eq:levy:nig}
F_{NIG}(dx)=e^{\beta x}\frac{\delta \alpha}{\pi \vert x\vert }K_1(\alpha \vert x\vert)\,dx\quad \textrm{for any}\ x\in\mathbb{R}\ .
\end{equation}
Notice that the L\'evy measure does not depend on parameter $\mu$. 
%
%


\bigskip
{\bf ACKNOWLEDGEMENTS:} The authors are  grateful to the anonymous Referee 
for her$\backslash$his  stimulating remarks and comments which allowed
them to considerably improve the first version of the paper.

The first named author was partially founded
by Banca Intesa San Paolo.
The research of the third named author was partially 
supported by the ANR Project MASTERIE 2010 BLAN-0121-01.

\end{document}